\newtheorem{theorem}{Theorem}[section]
\newtheorem{remark}{Remark}
\newtheorem{proposition}{Proposition}
\newtheorem{corollary}{Corollary}[section]
\newtheorem{lemma}{Lemma}[section]
\newtheorem{claim}{Claim}[section]
\newtheorem{definition}{Definition}[section]
\newenvironment{proofof}[1]{\smallskip
\noindent {\bf Proof of #1.  }}{\hfill$\Box$
\smallskip}
\newenvironment{reminder}[1]{\smallskip
\noindent {\bf Reminder of #1  }\em}{
\smallskip}
\newenvironment{proof}{\noindent {\bf Proof.  }}{\hfill$\Box$
\smallskip}
\def \iouseful {\text{useful}}
\def \coRP {{\sf coRP}}
\def \SUBEXP {{\sf SUBEXP}}
\def \ZPSUBEXP {{\sf ZPSUBEXP}}
\def \RP {{\sf RP}}
\def \REXP {{\sf REXP}}
\def \coNP {{\sf coNP}}
\def \BPP {{\sf BPP}}
\def \NC {{\sf NC}}
\def \ZPE {{\sf ZPE}}
\def \NE {{\sf NE}}
\def \E {{\sf E}}
\def \poly { \text{\rm poly} }
\def \TC {{\sf TC}}
\def \Z {{\mathbb Z}}
\def \P {{\sf P}}
\def \MA {{\sf MA}}
\def \AM {{\sf AM}}
\def \MATIME {{\sf MATIME}}
\def \NP {{\sf NP}}
\def \EXP {{\sf EXP}}
\def \T {{\sf TIME}}
\def \TIME {\T}
\def \Sig[#1] {{\sf \Sigma}_{#1} }
\def \Pie[#1] {{\sf \Pi}_{#1} }
\def \ACC {{\sf ACC}^0}
\def \io {\textrm{\it io-}}
\def \SIZE {{\sf SIZE}}
\def \AC {{\sf AC}}
\def \BPTIME {{\sf BPTIME}}
\def \RE {{\sf RE}}
\def \ZPSUBEXP {{\sf ZPSUBEXP}}
\def \RTIME {{\sf RTIME}}
\def \NSUBEXP {{\sf NSUBEXP}}
\def \NT {{\sf NTIME}}
\def \NTIME {\NT}
\def \coNT {{\sf coNTIME}}
\def \coNTIME {\coNT}
\def \ZPTIME {{\sf ZPTIME}}
\def \ZPP  {{\sf ZPP}}
\def \coNE {{\sf coNE}}
\def \N {{\mathbb N}}
\def \NEXP {{\sf NEXP}}
\def \SIZE {{\sf SIZE}}
\def \eps {\varepsilon}
\title{Natural Proofs Versus Derandomization\footnote{A preliminary version of this paper appeared in the ACM Symposium on Theory of Computing in 2013.}}
\author{
Ryan Williams\thanks{Computer Science Department, Stanford University, {\tt rrw@cs.stanford.edu}. Supported in part by a David Morgenthaler II Faculty Fellowship, a Sloan Fellowship, NSF DMS-1049268 (US Junior Oberwolfach Fellow), and NSF CCF-1212372. Any opinions, findings and conclusions or recommendations expressed in this material are those of the authors and do not necessarily reflect the views of the National Science Foundation.}
}
\date{}
\begin{document}

\maketitle

\begin{abstract} We study connections between Natural Proofs, derandomization, and the problem of proving ``weak'' circuit lower bounds such as $\NEXP \not\subset \TC^0$, which are still wide open.

Natural Proofs have three properties: they are \emph{constructive} (an efficient algorithm $A$ is embedded in them), have \emph{largeness} ($A$ accepts a large fraction of strings), and are \emph{useful} ($A$ rejects all strings which are truth tables of small circuits). Strong circuit lower bounds that are ``naturalizing'' would contradict present cryptographic understanding, yet the vast majority of known circuit lower bound proofs are naturalizing. So it is imperative to understand how to pursue un-Natural Proofs. Some heuristic arguments say  constructivity should be circumventable: largeness is inherent in many proof techniques, and it is probably our presently weak techniques that yield constructivity. We prove:

\smallskip

\noindent $\bullet$ \emph{Constructivity is unavoidable}, even for $\NEXP$ lower bounds. Informally, we prove for all ``typical'' non-uniform circuit classes ${\cal C}$, $\NEXP \not\subset {\cal C}$ if and only if there is a polynomial-time algorithm distinguishing \emph{some} function from all functions computable by ${\cal C}$-circuits. Hence $\NEXP \not\subset {\cal C}$ is equivalent to exhibiting a constructive property useful against ${\cal C}$.
\smallskip

\noindent $\bullet$ There are no $\P$-natural properties useful against ${\cal C}$ if and only if randomized exponential time can be ``derandomized'' using truth tables of circuits from ${\cal C}$ as random seeds. Therefore the task of proving there are no $\P$-natural properties is inherently a \emph{derandomization} problem, weaker than but implied by the existence of strong pseudorandom functions.

These characterizations are applied to yield several new results, including improved $\ACC$ lower bounds and new unconditional derandomizations. In general, we develop and apply several new connections between the existence of certain algorithms for analyzing truth tables, and the \emph{non-existence} of small circuits for problems in large classes such as $\NEXP$.
\end{abstract}

\section{Introduction}

The Natural Proofs barrier of Razborov and Rudich~\cite{RazborovRudich97} argues that\begin{itemize}
\item[(a)] almost all known proofs of non-uniform circuit lower bounds entail efficient algorithms that can distinguish many ``hard'' functions from all ``easy'' functions (those computable with small circuits), and
\item[(b)] any efficient algorithm of this kind would break cryptographic primitives implemented with small circuits (which are believed to exist).
\end{itemize}
(A formal definition is in Section~\ref{prelims}.) Natural Proofs are self-defeating: in the course of proving a weak lower bound, they provide efficient algorithms that refute stronger lower bounds that we believe to also hold. The moral is that, in order to prove stronger circuit lower bounds, one must avoid the techniques used in proofs that entail such efficient algorithms. The argument applies even to low-level complexity classes such as $\TC^0$~\cite{Naor-Reingold04,Krause-Lucks01,Miles-Viola}, so any major progress in the future depends on proving un-Natural lower bounds.

How should we proceed? Should we look for proofs yielding only \emph{inefficient} algorithms, avoiding ``constructivity''? Or should we look for algorithms which cannot distinguish \emph{many} hard functions from all easy ones, avoiding ``largeness''?\footnote{See the webpage~\cite{Scottblog} for a discussion with many views on these questions.} (Note there is a third criterion, ``usefulness'', requiring that the proof distinguishes a target function $f$ from the circuit class ${\cal C}$ we are proving lower bounds against. This criterion is necessary: $f \notin {\cal C}$ if and only if there is a trivial property, true of only $f$, distinguishing $f$ from all functions computable in ${\cal C}$.) 
In this paper, 
we study alternative ways to characterize Natural Proofs and their relatives as particular circuit lower bound problems, and give several  applications. There are multiple competing intuitions about the meaning of Natural Proofs. We wish to rigorously understand the extent to which the Razborov-Rudich framework relates to our ability to prove lower bounds in general.

\paragraph{NEXP lower bounds are constructive and useful} Some relationships can be easily seen. Recall $\EXP$ and $\NEXP$ are the exponential-time versions of $\P$ and $\NP$. If $\EXP \not\subset {\cal C}$, one can obtain a polynomial-time (non-large) property useful against ${\cal C}$.\footnote{Define $A(T)$ to accept its $2^n$-bit input $T$ if and only if $T$ is the truth table of a function that is complete for $\E = \TIME[2^{O(n)}]$. $A$ can be implemented to run in $\poly(2^n)$ time and rejects all $T$ with ${\cal C}$ circuits, assuming $\EXP \not\subset {\cal C}$.} So, strong enough lower bounds entail constructive useful properties. However, a separation like $\EXP \not\subset {\cal C}$ is stronger than currently known, for all classes ${\cal C}$ containing $\ACC$. Could lower bounds be proved for larger classes like $\NEXP$, without entering constructive/useful territory? In the other direction, could one exhibit a constructive (non-large) property against a small circuit class like $\TC^0$, without proving a new lower bound against that class?

The answer to both questions is \emph{no}. Call a (non-uniform) circuit class ${\cal C}$ \emph{typical} if ${\cal C} \in \{\AC^0$, $\ACC$, $\TC^0$, $\NC^1$, $\NC$, $\P/\poly\}$.\footnote{For simplicity, in this paper we mostly restrict ourselves to typical classes; however it will be clear from the proofs that we only rely on a few properties of these classes, and more general statements can be made.} For any typical ${\cal C}$, a property of Boolean functions ${\cal P}$ is said to be $\emph{\iouseful}$ against ${\cal C}$ if, for all $k$, there are infinitely many $n$ such that

\smallskip

\begin{compactitem}
\item ${\cal P}(f)$ is true of at least one $f : \{0,1\}^n\rightarrow \{0,1\}$, and
\item  ${\cal P}(g)$ is false for all $g:\{0,1\}^n \rightarrow \{0,1\}$ having $n^k$ size ${\cal C}$-circuits.
\end{compactitem}

\smallskip

In other words, on infinitely many input lengths $n$, ${\cal P}$ distinguishes some function from all easy functions. We prove: 

\begin{theorem}
\label{equiv} For all typical ${\cal C}$, $\NEXP \not\subset {\cal C}$ if and only if there is a polynomial-time computable property of Boolean functions that is $\iouseful$ against ${\cal C}$ with $O(\log n)$ bits of advice.
\end{theorem}

That is, $\NEXP \not\subset {\cal C}$ if and only if there is a language in $\P/O(\log n)$ defining a property of Boolean functions useful against ${\cal C}$. 

We can remove the $O(\log n)$ bits of advice of Theorem~\ref{equiv} by relaxing the notion of a ``property'' of Boolean functions to hold over all strings. Boolean function properties are only defined on $2^n$-length binary strings; however, \emph{every} binary string $x$ can be viewed as the truth table of a unique Boolean function, by simply appending zeroes to the end of $x$ until its length is a power of $2$. For brevity we shall call this longer string $f_x$, which is a function from $\{0,1\}^{\ell}$ to $\{0,1\}$ where $\ell$ is the smallest integer satisfying $2^{\ell} \geq |x|$. Informally, we define an \emph{algorithm} $A$ to be $\emph{\iouseful}$ against ${\cal C}$ if, for all $k$, there are infinitely many input lengths $N$ such that

\smallskip

\begin{compactitem}
\item for at least one $x \in \{0,1\}^N$, $A(x) = 1$, and 
\item for all $x' \in \{0,1\}^{N}$ such that $f_{x'}:\{0,1\}^{\ell} \rightarrow \{0,1\}$ has $n^k$ size ${\cal C}$-circuits, $A(x') = 0$.
\end{compactitem}

\smallskip

\begin{theorem}
\label{equiv2} For all typical ${\cal C}$, $\NEXP \not\subset {\cal C}$ if and only if there is a polynomial-time \emph{algorithm} that is $\iouseful$ against ${\cal C}$.
\end{theorem}

Theorems~\ref{equiv} and \ref{equiv2} help explain \emph{why} it is difficult to prove even $\NEXP$ circuit lower bounds: \emph{any} $\NEXP$ lower bound must meet precisely two of the three conditions of Natural Proofs (constructivity and usefulness).\footnote{One may also wonder if \emph{non-constructive large properties} imply any new circuit lower bounds. This question does not seem to be as interesting. For one, there are already $\coNP$-natural properties useful against $\P/\poly$ (simply try all possible small circuits in parallel), and the consequences of such properties are well-known. So anything $\coNP$-constructive or worse is basically uninformative (without further information on the property). Furthermore, slightly more constructive properties, such as $\NP$-natural ones, seem unlikely~\cite{Rudich97}.} The above two theorems say that \emph{every} $\NEXP$ circuit lower bound must exhibit some constructive property useful against those circuits. Polynomial-time algorithms distinguishing ``some'' functions from ``all'' easy functions look difficult to construct, even infinitely often; if one adds in largeness too, these algorithms are likely \emph{impossible} to construct.

One can make a heuristic argument that the recent proof of $\NEXP \not\subset \ACC$ (\cite{Williams11}) evades Natural Proofs by being non-constructive. Intuitively, the proof uses an $\ACC$ Circuit SAT algorithm that only mildly improves over brute force, so it runs too slowly to obtain a polytime property useful against $\ACC$. Theorem~\ref{equiv} shows that, in fact, constructivity is necessary. 
Moreover, the proofs of Theorems~\ref{equiv} and \ref{equiv2} yields an explicit property $\iouseful$ against $\ACC$. 

The techniques used in these theorems can be applied, along with several other ideas, to prove new super-polynomial lower bounds against $\ACC$. First, we prove exponential-size lower bounds on the $\ACC$ circuit complexity of encoding \emph{witnesses} for $\NEXP$ languages. 

\begin{theorem}\label{expwitnessACC} For all $d$, $m$ there is an $\eps>0$ such that $\NTIME[2^{O(n)}]$ does not have $2^{n^{\eps}}$-size $d$-depth $\AC^0[m]$ witnesses.
\end{theorem}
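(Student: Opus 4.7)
The plan is to argue by contradiction, scaling up Williams' proof of $\NEXP \not\subset \ACC$ with a careful choice of parameters. Fix $d$ and $m$, and suppose toward contradiction that for every $\eps > 0$, every language in $\NTIME[2^{O(n)}]$ has $2^{n^\eps}$-size depth-$d$ $\AC^0[m]$ witnesses. The goal is a nondeterministic algorithm for an $\NEXP$-complete problem running in time $2^{n - n^{\Omega(1)}}$, contradicting the nondeterministic time hierarchy.

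First I would take $L$ to be $\NTIME[2^n]$-complete via a local reduction to SUCCINCT-3-SAT. On input $x \in \{0,1\}^n$, this produces an $\AC^0$ descriptor circuit $D_x$ of constant depth and $\poly(n)$ size on $O(n)$ input bits, encoding a $3$-CNF instance on $N = 2^{O(n)}$ variables; the locality of the Cook--Levin reduction ensures $D_x$ is small-depth. Under the hypothesis, a satisfying assignment to this instance can be written as the truth table of some depth-$d$ $\AC^0[m]$ circuit $C$ of size $s := 2^{n^\eps}$ on $O(n)$ input bits.

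Deciding $x \in L$ then becomes: does there exist such a $C$ whose assignment satisfies every clause indicated by $D_x$? The negated condition is a $\Pi_1$ statement whose body is a depth-$(d + O(1))$ $\AC^0[m]$ circuit $E$ on $O(n)$ inputs of size $\poly(n) + O(s)$; crucially, the small depth of $D_x$ keeps the composed circuit inside $\AC^0[m]$ with only a constant depth increase. The nondeterministic algorithm guesses the $O(s)$ bits of $C$, then uses the Williams/Beigel--Tarui $\AC^0[m]$-TAUT algorithm to check $E$. That algorithm solves TAUT on depth-$(d + O(1))$ $\AC^0[m]$ circuits of size $s$ over $N'$ inputs in time roughly $2^{N'}/2^{{N'}^{\gamma}}$, provided $s$ lies below a threshold depending on $\gamma = \gamma(d,m) > 0$. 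Choosing $\eps$ sufficiently small as a function of $d$ and $m$ makes both constraints compatible and yields total running time $2^{n - n^{\Omega(1)}}$, contradicting the nondeterministic time hierarchy.

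The main obstacle is parameter coordination: $\eps$ must be calibrated against both the depth blowup from composing with $D_x$ and the size/depth regime in which the $\AC^0[m]$-TAUT algorithm actually yields $2^{n^{\Omega(1)}}$ savings. A secondary but essential technical point is that the SUCCINCT-3-SAT reduction must produce a descriptor $D_x$ that is itself small-depth $\AC^0$ (not merely $\P/\poly$), since composing $C$ with an arbitrary polynomial-size circuit would destroy the $\AC^0[m]$ structure the TAUT algorithm requires.
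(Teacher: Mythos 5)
Your overall architecture matches the paper's: reduce to \textsc{Succinct 3SAT}, guess a small $\AC^0[m]$ witness circuit, build a single $\AC^0[m]$ circuit whose unsatisfiability certifies the witness, and run the nontrivial $\ACC$ SAT algorithm to beat the nondeterministic time hierarchy. The genuine gap is the step you yourself label ``secondary but essential'': you assert that the \textsc{Succinct 3SAT} descriptor circuit $D_x$ can be taken to be a constant-depth $\AC^0$ circuit because of ``the locality of the Cook--Levin reduction,'' but you give no argument, and this is not a routine fact. For $\NTIME[2^n]$ one cannot use the naive quadratic tableau: the formula must have only $2^n\poly(n)$ clauses so that $D_x$ has $n+O(\log n)$ inputs (your ``$O(n)$ input bits'' is already too lossy --- with $cn$ inputs for $c>1$ the final SAT/TAUT call costs $2^{cn-o(n)} \gg 2^n$ and the hierarchy contradiction evaporates). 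The quasilinear reductions that achieve this go through oblivious Turing machine simulations and sorting networks, and it is far from immediate that the resulting index-to-clause map is constant-depth; making the descriptor genuinely $\AC^0$ (indeed local) is a nontrivial theorem in its own right, established only in later work on local reductions, and at the time of this paper it was precisely the missing piece.

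The paper's proof routes around exactly this obstacle: it first observes that the witness hypothesis itself implies $\TIME[2^{O(n)}]$ has $2^{n^{\eps}}$-size $\ACC$ circuits (apply the hypothesis to an $\NEXP$ predicate whose only accepted witness is the truth table of the exponential-time function in question), and then uses this, via the guess-and-verify lemma from the $\ACC$ lower bound framework, to \emph{nondeterministically} replace the unrestricted polynomial-size descriptor $D$ by an equivalent $2^{n^{\eps}}$-size $\ACC$ circuit $D'$ in time $2^{n-n^{\delta}}$. With $D'$ in hand, composing with the guessed witness circuit stays inside $\ACC$ with constant depth overhead, and the rest proceeds as you describe. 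So to repair your proposal you must either supply the missing $\AC^0$-descriptor construction or, as the paper does, derive subexponential $\ACC$ circuits for $\TIME[2^{O(n)}]$ from the hypothesis and insert the nondeterministic conversion step.
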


Formal definitions can be found in Section~\ref{nexp}; informally, Theorem~\ref{expwitnessACC} says that there are $\NEXP$ languages with verifiers that only accept witness strings of exponentially high $\ACC$ circuit complexity. It is interesting that while we can prove such lower bounds for encoding $\NEXP$ witnesses, we do not yet know how to prove them for $\NEXP$ languages themselves (the best known size lower bound for $\NEXP$ is ``third-exponential''). 

These circuit lower bounds for witnesses can also be translated into new $\ACC$ lower bounds for some complexity classes. Recall that $\NE = \NTIME[2^{O(n)}]$ and $\io\coNE = \io\coNTIME[2^{O(n)}]$, the latter being the class of languages $L$ such that there is an $L' \in \coNTIME[2^{O(n)}]$ where, for infinitely many $n$, $L \cap \{0,1\}^n = L' \cap \{0,1\}^n$. That is, $L$ agrees with a language in $\coNTIME[2^{O(n)}]$ on infinitely many input lengths. The class $\NE/1 \cap \coNE/1$ consists of languages $L \in \NE \cap \coNE$ recognizable with ``one bit of advice.'' That is, there are nondeterministic machines $M$ and $M'$ running in $2^{O(n)}$ time with the property that for all $n$, there are bits $y_n,z_n \in \{0,1\}$ such that for all strings $x$, $x \in L$ if and only if $M(x,y_n)$ accepts on all paths if and only if $M'(x,z_n)$ rejects on all paths. (In fact, in our case we may assume $y_n=z_n$ for all $n$.)

\begin{theorem}\label{NEcapcoNE}
$\NE \cap \io\coNE$ and $\NE/1 \cap \coNE/1$ do not have $\ACC$ circuits of $n^{\log n}$ size.\footnote{This is not the strongest size lower bound that can be proved, but it is among the cleanest. Please note that the conference version of this paper claimed a lower bound for the (hypothetically smaller) class $\NE \cap \coNE$; we are grateful to Russell Impagliazzo and Igor Carboni Oliveira~\cite{Oliveira13} for observing that our argument only proves a lower bound for $\NE \cap \io\coNE$ (and $\NE \cap \coNE$ with one bit of advice, under the appropriate definition).}
\end{theorem}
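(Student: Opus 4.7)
The plan is to prove Theorem~\ref{NEcapcoNE} by contradiction, converting the witness lower bound of Theorem~\ref{expwitnessACC} into a language lower bound. Fix any depth $d$ and modulus $m$, and apply Theorem~\ref{expwitnessACC} to obtain $\eps > 0$ and a language $L_0 \in \NE$ with verifier $V$ (witnesses of length $N = 2^{cn}$) admitting no depth-$d$ $\AC^0[m]$ witnesses of size $2^{n^\eps}$. Since $n^{\log n} \ll 2^{n^\eps}$ for large $n$, it suffices to exhibit a language $L_{\mathsf{bit}} \in \NE \cap \io\coNE$ (resp.~$\NE/1 \cap \coNE/1$) whose $\ACC$ circuits of size $n^{\log n}$ would mechanically produce $\ACC$ circuits of comparable size encoding accepting witnesses for $L_0$, contradicting Theorem~\ref{expwitnessACC}.

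The natural candidate is the bit-of-witness language $L_{\mathsf{bit}} = \{(x, i) : x \in L_0 \text{ and bit } i \text{ of a canonical accepting witness for } V(x,\cdot) \text{ equals } 1\}$. Hardwiring $x$ in any $\ACC$ circuit for $L_{\mathsf{bit}}$ and letting $i$ range over $\{0,1\}^{cn}$ directly produces a circuit computing the truth table of a valid witness for $x$, so an $n^{\log n}$-size $\ACC$ circuit for $L_{\mathsf{bit}}$ yields a corresponding $\ACC$ witness for $L_0$, delivering the desired contradiction at all but finitely many $n$. Membership of $L_{\mathsf{bit}}$ in $\NE$ is immediate: guess the witness, verify, output bit~$i$.

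The delicate point is forcing $L_{\mathsf{bit}}$ into $\io\coNE$ or, with one bit of advice, into $\coNE/1$, because certifying ``bit $i$ of the lex-smallest witness is $0$'' looks naively like a $\Sigma_2^\E$ question. For the $\NE/1 \cap \coNE/1$ version, I would have the one bit of advice at length $n$ commit to a regime under which the canonical witness is forced to be unique---for instance by using a Valiant--Vazirani-style isolation bit, or by selecting an input $x_n \in L_0 \cap \{0,1\}^n$ with a unique accepting witness and restricting $L_{\mathsf{bit}}$ to that slice---so that both the $\NE$ and the $\coNE$ machines can nondeterministically guess the unique witness and verify correctness. For the $\NE \cap \io\coNE$ version, I would argue that such ``unique-witness'' lengths occur infinitely often without advice: otherwise, for cofinitely many $n$ the witness set of $L_0$ would exhibit a structural property (e.g., multiple witnesses everywhere) that itself produces small $\ACC$ witnesses through a separate route, closing the contradiction on those lengths instead.

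The hardest step will be arranging the canonical-witness machinery so that it simultaneously (a) fits $L_{\mathsf{bit}}$ into the appropriate class, (b) preserves enough bit-extraction fidelity that the reconstructed witness is genuinely an accepting one for $V$, and (c) loses only a constant in depth and modulus, and a polynomial factor in size, so that the eventual application of Theorem~\ref{expwitnessACC} still goes through at the corresponding $d$, $m$ (up to a redefinition). I expect this reduces to a clean win-win analysis in which either the structural behavior of $L_0$'s witness set enables $\io\coNE$ (or $\coNE/1$) membership of $L_{\mathsf{bit}}$, or its failure yields small $\ACC$ witnesses directly---with the one-bit-of-advice and $\io$-quantifier relaxations in the statement providing precisely the slack needed to make this dichotomy close cleanly, while the (naively hoped-for) strict $\NE \cap \coNE$ version would require eliminating this slack entirely.
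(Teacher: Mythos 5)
Your reduction has a genuine gap at exactly the point you flag as ``delicate,'' and the gap is not closable by the devices you suggest. Putting the bit-of-witness language $L_{\mathsf{bit}}$ into $\coNE$ (even infinitely often, even with one advice bit) runs into two obstructions. First, the complement must certify ``$x\notin L_0$'' for an arbitrary $\NE$ language $L_0$, which is not known to lie in $\coNE$; one bit of advice cannot select a particular $x_n$ (that costs $n$ bits), so at best you must restrict to a unary/diagonal language, which you do not arrange. Second, and more fundamentally, canonicalizing the witness is a search-to-decision problem for $\NE$ that is open: ``bit $i$ of the lex-least accepting witness is $0$'' is a $\Sigma_2$-type statement, Valiant--Vazirani isolation needs polynomially many random hash bits (far more than one advice bit) and only isolates with some probability, and your fallback dichotomy --- that if witnesses fail to be unique ``everywhere'' then $L_0$ acquires small $\ACC$ witnesses ``through a separate route'' --- has no support: multiplicity of accepting witnesses gives no upper bound whatsoever on the circuit complexity of any of them. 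So the win-win does not close, and without it $L_{\mathsf{bit}}$ is only known to be in $\NE$.

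The paper's proof takes a different route entirely, and the $\io$/advice losses in the theorem statement come from a different source than the one you identify. One assumes $\NE\cap\io\coNE$ (hence $\TIME[2^{O(n)}]$, hence $\P$) has $n^{\log n}$-size $\ACC$ circuits; Theorem~\ref{MVW} then gives $\TIME[2^{n^{2\log n}}]\subseteq\MATIME[n^{O(\log^3 n)}]$. The witness lower bound (via Corollary~\ref{unary} and Theorem~\ref{derandomizeACC1}) is used not as the target of a reduction but to build a nondeterministically verifiable pseudorandom generator that is correct for \emph{infinitely many} circuit sizes; derandomizing the $\MA$ and ${\sf coMA}$ simulations with it places $\TIME[2^{n^{2\log n}}]$ inside $\io(\NTIME\cap\io\coNTIME)[n^{\log^d n}]$ (or inside $\io(\NE/1\cap\coNE/1)$, the advice bit recording whether the PRG works at that length --- this is where the $\io$ and the $/1$ actually come from). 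The circuit assumption then forces $\TIME[2^{n^{2\log n}}]\subset\io\SIZE[n^{\log n}]$, which is refuted by directly diagonalizing against all $n^{\log n}$-size circuits in $2^{n^{2\log n}}$ time. If you want to repair your write-up, you should abandon the bit-extraction language and follow this hardness-to-pseudorandomness pipeline; the ingredients you would need to add are Theorem~\ref{MVW}, Theorem~\ref{derandomizeACC1}, and the final diagonalization.
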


These lower bound are intriguing because they necessarily must be proved differently. The known proof of $\NEXP \not\subset \ACC$ works for the class $\NEXP$ because there is a tight time hierarchy for nondeterminism~\cite{Zak83}. However, the $\NTIME \cap \coNTIME$ classes (and $\NTIME \cap \io\coNTIME$ classes) are not known to have such a hierarchy. (They are among the ``semantic'' classes, which are generally not known to have complete languages or nice time hierarchies.) Interestingly, the proof of Theorem~\ref{NEcapcoNE} crucially uses the previous lower bound framework against $\NEXP$, and builds on it, via Theorem~\ref{equiv} and a modification of the $\NEXP \not\subset \ACC$ lower bound. Indeed, it follows from the arguments here (building on~\cite{Williams10,Williams11}) that the lower bound consequences of non-trivial circuit SAT algorithms can be strengthened, in the following sense:

\begin{theorem}\label{generic}  Let ${\cal C}$ be typical. Suppose the satisfiability problem for $n^{O(\log^c n)}$-size ${\cal C}$ circuits can be solved in $O(2^n/n^{10})$ time, for all constants $c$. Then $\NE \cap \io\coNE$ and $\NE/1 \cap \coNE/1$ do not have $n^{\log n}$-size ${\cal C}$ circuits.
\end{theorem}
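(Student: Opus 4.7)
The plan is to argue by contradiction: assume $\NE \cap \io\coNE$ (respectively, $\NE/1 \cap \coNE/1$) does have $n^{\log n}$-size $\mathcal{C}$ circuits, and derive a contradiction by combining the SAT algorithm hypothesis with Theorem~\ref{equiv} and a quasi-polynomial scaling of the Williams'11 $\NEXP \not\subset \ACC$ framework.

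First, from the SAT algorithm hypothesis alone, I would derive the strong $\NEXP$ lower bound $\NEXP \not\subset \mathcal{C}[n^{O(\log^c n)}]$ for every constant $c$. The key observation is that the standard Williams argument---guess a succinct witness circuit for the accepting assignment of a $\NE$-hard language, then verify its correctness using a single $\mathcal{C}$-SAT call on a polynomial-size-combined circuit---goes through with the witness size raised to $n^{O(\log^c n)}$, since by hypothesis the SAT algorithm handles such circuits in $O(2^n/n^{10})$ time and still beats the nondeterministic time hierarchy. Applying Theorem~\ref{equiv} at these sizes then yields a polynomial-time property $\mathcal{P}$ that is $\iouseful$ against $\mathcal{C}$-circuits of size $m^{\log m}$ (and every quasi-polynomial size).

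Next, I would construct a candidate hard language $L$ from $\mathcal{P}$. On an input of the form $(1^m, i)$ with $i \in \{0,1\}^m$, the $\NE$ algorithm guesses a truth table $T \in \{0,1\}^{2^m}$, verifies $\mathcal{P}(T)$ in time $\poly(2^m)$, and outputs $T_i$; one bit of advice per length (or the io-agreement condition) specifies a canonical choice of $T$ among the satisfying truth tables. The hardness of $L$ against $n^{\log n}$-size $\mathcal{C}$ circuits is immediate from the $\iouseful$ness of $\mathcal{P}$: any such circuit for $L$ at length $n \approx m$ would expose a $T$ with $\mathcal{P}(T) = 1$ having $m^{\log m}$-size $\mathcal{C}$ circuits, at infinitely many $m$, contradicting Theorem~\ref{equiv}.

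The crux of the argument---and the main obstacle---is placing $L$ on the $\coNE/1$ (or $\io\coNE$) side. The naive co-nondeterministic verification ``no satisfying $T' \neq T$ exists with $T'_i \neq T_i$'' is a $\Pi_1$ statement over $2^m$-bit strings and cannot be handled by the SAT algorithm directly, because feeding $\mathcal{P}$ itself to the SAT solver blows up the input size exponentially. The way around this is to close the loop via the assumption for contradiction: the hypothesized $\mathcal{C}[n^{\log n}]$ circuit $D$ for $L$ lets us extract a single canonical $T^D \in \{0,1\}^{2^m}$ by concatenating the outputs $D(1^m, i)$ across all $i$, encoded as a $\mathcal{C}$-circuit of quasi-polynomial size. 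We then apply the SAT algorithm to a quasi-poly-size $\mathcal{C}$-circuit combining $D$ with the polynomial-time predicate $\mathcal{P}$ on $m$-bit indices (not on $2^m$-bit truth tables), to co-nondeterministically certify that $T^D$ is the canonical witness. This closes the loop and yields the contradiction, with the one bit of advice used only to flag the infinitely many ``good'' lengths where $\mathcal{P}$ is satisfiable. The $\NE \cap \io\coNE$ version follows by restricting to the good lengths and using io-agreement in place of advice.
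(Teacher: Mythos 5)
There is a genuine gap, and it sits exactly where you flagged the ``main obstacle'': your candidate language $L$ never actually lands in $\NE \cap \io\coNE$ (or $\NE/1 \cap \coNE/1$). The property ${\cal P}$ may accept many truth tables $T$ of length $2^m$, and two accepted tables can disagree on the bit indexed by $i$; then your nondeterministic machine has accepting paths outputting both $T_i=1$ and $T'_i=0$, so ``$L$'' is not a language at all. One bit of advice per length cannot select a canonical $T$ among exponentially many accepted ones, and the natural canonicalization (lex-first accepted $T$) is a universally quantified statement over $2^m$-bit strings, i.e.\ exactly the $\Pi_1$ condition you cannot check. Your proposed repair is circular: you define the canonical witness as the truth table $T^D$ of the circuit $D$ assumed (for contradiction) to compute $L$, but $L$ must be a fixed language \emph{before} one can hypothesize a circuit for it. Moreover the repair does not typecheck even locally: ${\cal P}$ is an arbitrary polynomial-time algorithm on $2^m$-bit inputs, not a quasi-polynomial-size ${\cal C}$-circuit on $m$-bit indices, so there is no $m^{O(\log^c m)}$-size circuit ``combining $D$ with ${\cal P}$'' to feed to the SAT algorithm; and in any case, certifying ${\cal P}(T^D)=1$ for a small circuit $D$ would directly contradict usefulness rather than establish $\coNE$ membership.

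The ingredient you are missing is the Merlin--Arthur collapse (Theorem~\ref{MVW}, from Babai--Fortnow--Nisan--Wigderson with padding). The paper never tries to make the hard function itself be the $\NE \cap \coNE$ language. Instead: (i) from the SAT hypothesis it gets witness circuit lower bounds for a unary $\NE$ language at size $n^{\log^c n}$, hence a \emph{nondeterministically verifiable} hard string $Y$ and, via Umans' generator, an infinitely-often nondeterministic CAPP algorithm; (ii) from the assumption that $\NE\cap\io\coNE$ (hence $\E$) has $n^{\log n}$-size circuits, Theorem~\ref{MVW} gives $\TIME[2^{n^{2\log n}}] \subseteq ({\sf MATIME}\cap{\sf coMATIME})[n^{O(\log^3 n)}]$; (iii) derandomizing Arthur in \emph{both} the MA and coMA protocols with the CAPP algorithm places this huge, a priori well-defined deterministic class in $\io(\NTIME \cap \io\coNTIME)[2^{O(n)}]$ --- here nondeterministically guessing different hard seeds $Y$ on different paths is harmless, since every verified seed yields the same (correct) answer; (iv) the circuit assumption then puts $\TIME[2^{n^{2\log n}}]$ in $\io\SIZE[n^{\log n}]$, contradicting direct diagonalization. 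Your step (1) (scaling the Williams framework to quasi-polynomial witness sizes and invoking Theorem~\ref{equiv}/Theorem~\ref{witnessequiv}) is correct and matches the paper, but without steps (ii)--(iv) the co-nondeterministic side cannot be obtained.
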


\begin{theorem}\label{generic2} Suppose we can approximate the acceptance probability of any given $n^{O(\log^c n)}$-size circuit (with fan-in two and arbitrary depth) on $n$ inputs to within $1/6$, for all $c$, in $O(2^n/n^{10})$ time (even nondeterministically). Then $\NE \cap \io\coNE$ and $\NE/1 \cap \coNE/1$ do not have $n^{\log n}$-size circuits.
\end{theorem}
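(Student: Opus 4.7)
The plan is to imitate the proof of Theorem~\ref{generic}, replacing the nontrivial $\C$-SAT algorithm with the nontrivial nondeterministic CAPP algorithm wherever the former is invoked. The fundamental structure of the argument is identical in both cases: a nontrivial algorithmic speedup for a circuit class, combined with the equivalence given by Theorem~\ref{equiv} and a careful padding/advice argument, yields the $\NE \cap \io\coNE$ and $\NE/1 \cap \coNE/1$ lower bounds. Indeed, since approximate counting (CAPP) is the derandomization primitive originally used in the Williams framework (see \cite{Williams10}), it is arguably the more natural primitive, and the SAT-based version of Theorem~\ref{generic} can be viewed as a specialization.

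First I would pinpoint the exact steps in the proof of Theorem~\ref{generic} that consume the SAT algorithm. Typically, the SAT algorithm is applied to a circuit of size $n^{O(\log^c n)}$ built from small $\C$-circuits whose unsatisfiability witnesses the failure of a guessed compressed object (e.g., a succinct purported $\NE$ witness) to simulate a given computation. In every such place, one can instead feed the XOR/merge of the relevant small circuits to the CAPP algorithm: if the acceptance probability is approximated below $1/6$, then the relevant constraint is satisfied on almost all inputs, which is precisely what the lower bound framework needs in order to derive a contradiction with the nondeterministic time hierarchy. In the places where exact satisfiability was used, standard error-reduction and the fact that the circuits in the framework are constructed from $\C$-circuits with only polynomial blowup let us tolerate a constant additive error of $1/6$.

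Second, I would reuse, essentially verbatim, the extra machinery from Theorem~\ref{generic} that strengthens the conclusion from $\NEXP \not\subset \C$ down to $\NE \cap \io\coNE$ and $\NE/1 \cap \coNE/1$. This step is driven by the $\iouseful$ characterization of Theorem~\ref{equiv} together with padding arguments and the one-bit-of-advice trick used to relativize the lower bound to a linear-exponential class. None of this depends on whether the underlying algorithmic primitive is SAT or CAPP, so it transports over directly once the algorithmic core has been replaced.

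The main obstacle will be reconciling the nondeterministic semantics of the CAPP hypothesis with the nondeterministic verifier built in the lower bound proof. A nondeterministic CAPP algorithm is only required to output a correct $1/6$-approximation on \emph{some} accepting path (all other paths may reject or output ``don't know''), whereas the lower bound framework often needs a universal guarantee that ``no good simulation exists.'' Handling this requires carefully composing the outer and inner nondeterministic computations so that the CAPP algorithm's failure paths are absorbed by the outer ``reject'' branch without degrading the overall $O(2^n/n^{10})$ time bound or disturbing the $\coNE$ side of the argument. I expect this to be routine but delicate, and it is the only place where the CAPP-based proof genuinely differs from its SAT-based cousin.
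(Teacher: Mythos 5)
Your overall plan coincides with the paper's: the paper's proof of Theorem~\ref{generic2} is itself only a two-sentence sketch saying that a $1/6$-approximate counter can be substituted for the SAT algorithm throughout the section (citing prior work), so that one still obtains the witness lower bounds of Theorem~\ref{expwitnessACC}, the infinitely-often PRG of Theorem~\ref{derandomizeACC1}, and then the $\NE \cap \io\coNE$ and $\NE/1\cap\coNE/1$ conclusions exactly as in Theorems~\ref{NEcapcoNE} and~\ref{generic}. Your handling of the nondeterministic semantics (absorbing the CAPP algorithm's non-accepting paths into the outer nondeterministic simulation's reject branch) is also the right and standard move, since the outer simulation is nondeterministic anyway.

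However, the specific justification you give for why CAPP can replace SAT contains a genuine gap. You write that if the acceptance probability of the checking circuit is certified to be below $1/6$, then ``the relevant constraint is satisfied on almost all inputs, which is precisely what the lower bound framework needs.'' It is not: in the witness-lower-bound step one must verify that a guessed circuit $E$ encodes a \emph{genuine} satisfying assignment of the {\sc Succinct 3SAT} instance, because if $x \notin L$ the formula is unsatisfiable and the simulation must reject on every guess. An assignment satisfying ``almost all'' clauses does not certify satisfiability, so a $1/6$-approximate count of violated clauses would let the simulation wrongly accept inputs outside $L$; ``standard error-reduction'' does not repair this, since no amount of amplification turns an approximate counter into an exact unsatisfiability test. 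The missing idea is gap creation: one must route the reduction through a PCP-style {\sc Succinct 3SAT} instance (as in \cite{Williams10,SanthanamWilliams13}) so that unsatisfiable instances admit no assignment satisfying more than a $1-\Omega(1)$ fraction of clauses, at which point a $1/6$-additive approximation to the fraction of violated clauses does distinguish the two cases. (In the later steps of the argument --- derandomizing the $\MATIME$ and ${\sf coMATIME}$ simulations in the proof of Theorem~\ref{NEcapcoNE} --- CAPP is used natively on protocols that already have a constant completeness/soundness gap, so no such repair is needed there.) With that one correction, the rest of your proposal matches the paper's argument.
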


\paragraph{Natural Proofs vs Derandomization}
Given Theorem~\ref{equiv}, it is natural to wonder if full-strength natural properties are equivalent to some circuit lower bound problems. If so, such lower bounds should be considered \emph{unlikely}. To set up the discussion, let $\RE = \RTIME[2^{O(n)}]$ and $\ZPE = \ZPTIME[2^{O(n)}]$; that is, $\RE$ is the class of languages solvable in $2^{O(n)}$ randomized time with one-sided error, and $\ZPE$ is the corresponding class with zero error (i.e., \emph{expected} $2^{O(n)}$ running time).

For a typical circuit class ${\cal C}$, we informally say that $\RE$ (respectively, $\ZPE$) \emph{has ${\cal C}$ seeds} if, for every predicate defining a language in the respective complexity class, there are ${\cal C}$ circuit families succinctly encoding exponential-length ``seeds'' that correctly decide the predicate. (Formal definitions are given in Section~\ref{natural}.) Having ${\cal C}$ seeds means that the randomized class can be derandomized very strongly: by trying all poly-size ${\cal C}$ circuits as random seeds, one can decide any predicate from the class in $\EXP$.

We prove a strong correspondence between the existence of such seeds, and the \emph{nonexistence} of natural properties:

\begin{theorem}\label{naturalequiv} Let ${\cal C}$ be typical. The following are equivalent:
\begin{compactenum}
\item There are no $\P$-natural properties $\iouseful$ (respectively, ae-useful\footnote{Here, \emph{ae-useful} is just the ``almost-everywhere useful'' version, where the property is required to distinguish random functions from easy ones on almost every input length.}) against ${\cal C}$
\item $\ZPE$ has ${\cal C}$ seeds for almost all (resp., infinitely many) input lengths
\end{compactenum}
\end{theorem}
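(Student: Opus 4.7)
The plan hinges on the observation that ``${\cal C}$ seeds'' for $\ZPE$ is, once unpacked, the statement that every poly-time predicate on $2^n$-bit strings which is satisfied by a noticeable fraction of random strings is also satisfied by the truth table of some poly-size ${\cal C}$-circuit (at the specified set of lengths). This is the Boolean dual of the statement that a $\P$-natural io-useful property against ${\cal C}$ exists, so both directions of the theorem reduce to producing the right dual object. I would prove both implications by contrapositive.

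For (2)$\Rightarrow$(1): given a $\P$-natural io-useful ${\cal P}$ against ${\cal C}$, embed the rejection-sampling task ``find $T$ with ${\cal P}(T)=1$'' into a $\ZPE$ decision language $L_{\cal P}$ whose answer bits are read off a found $T$, for instance $L_{\cal P}=\{(1^n,i) : \text{bit } i \text{ of the string returned by the canonical zero-error search for } T \text{ with } {\cal P}(T)=1 \text{ is } 1\}$. The hypothesis that $\ZPE$ has ${\cal C}$ seeds for almost all $n$, applied to $L_{\cal P}$, then yields at almost all $n$ a poly-size ${\cal C}$-circuit whose truth table $T^\star$ is a valid seed, so by construction of $L_{\cal P}$ satisfies ${\cal P}(T^\star)=1$. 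Intersecting this cofinite set of lengths with the infinite set of lengths at which ${\cal P}$ is useful produces a length where a ${\cal C}$-circuit truth table satisfies ${\cal P}$, contradicting usefulness.

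For (1)$\Rightarrow$(2), contrapositive: suppose $\ZPE$ lacks ${\cal C}$ seeds at an infinite set $S$ of lengths, witnessed by a $\ZPE$ algorithm $A$ for some $L\in\ZPE$, amplified so the ``don't know'' probability is $o(2^{-n})$. Define ${\cal P}(T)=1$ iff, interpreting $T\in\{0,1\}^{2^n}$ as $A$'s $2^{O(n)}$-bit random tape, $A(x,T)\neq\bot$ for every input $x$ of length $n$. Constructivity: checking ${\cal P}(T)$ takes $2^n\cdot\poly(n)=\poly(|T|)$ time. Largeness: a union bound over $x$'s of length $n$ gives $\Pr_T[{\cal P}(T)=1]\ge 1/2$. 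Usefulness at each $n\in S$: by zero-error of $A$, any $T$ satisfying ${\cal P}$ is a valid seed, so by the choice of $S$ no poly-size ${\cal C}$-circuit truth table satisfies ${\cal P}$. Hence ${\cal P}$ is $\P$-natural and io-useful against ${\cal C}$.

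The delicate piece will be constructing $L_{\cal P}$ in (2)$\Rightarrow$(1) so that ``${\cal C}$-seed for $L_{\cal P}$'' corresponds on the nose to ``${\cal C}$-circuit truth table satisfying ${\cal P}$''; a bit-extraction language as above should work, modulo absorbing the rejection-sampling failure event into the zero-error $\bot$ output. The io/ae quantifier matching across the two clauses of the theorem is then routine, since cofinite $\cap$ infinite is infinite and cofinite $\cap$ cofinite is cofinite, which is exactly what is needed for each contrapositive to deliver its conclusion at the correct set of lengths.
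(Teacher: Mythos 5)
Your direction $(1)\Rightarrow(2)$ (in contrapositive form) is essentially the paper's argument: test whether the input, viewed as a random tape, decides \emph{every} $n$-bit input of the seedless $\ZPE$ predicate; largeness follows from a union bound after amplification and usefulness from the absence of seeds. The one step you elide is that amplifying $A$ changes the predicate: a seed for the amplified predicate is a concatenation of tapes for the original, so to conclude that the amplified predicate also lacks small seeds you must invoke the substring lemma (Proposition~\ref{truthtable0}/Lemma~\ref{truthtable}) to pass from ``the concatenated tape has a small ${\cal C}$-circuit'' to ``each block does.'' The paper does exactly this.

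The direction $(2)\Rightarrow(1)$ has a genuine gap in the construction of $L_{\cal P}$. A language of the form $\{(1^n,i):\text{bit $i$ of the string returned by a zero-error search for $T$ with ${\cal P}(T)=1$ is }1\}$ is not well-defined as a $\ZPE$ language: a zero-error sampler returns \emph{different} accepted strings $T$ on different random tapes, so on a fixed input $(1^n,i)$ the bit-extraction predicate would output $1$ on some tapes and $0$ on others, violating the zero-error promise. Forcing the search to return a canonical (say, lexicographically first) accepted $T$ restores well-definedness but destroys the $\ZPE$ algorithm, since random sampling cannot find a canonical witness. There is also a quantitative tension you do not resolve: if the random tape encodes a single candidate $T$, the acceptance probability is only $1/\poly$ (largeness is $1/2^{O(n)}$), below the required $2/3$; if the tape encodes polynomially many candidates, then the seed is the whole tape rather than the accepted piece, and relating their circuit complexities again needs the substring lemma. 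The paper's fix avoids all of this: take the trivial language $L=\{0,1\}^{\star}$ and the predicate $V(x,z)$ that partitions its random tape $z$ into polynomially many blocks and outputs $1$ if ${\cal P}$ accepts some block, else ${\bf ?}$. This never outputs $0$, so it is vacuously zero-error; independence of the blocks amplifies the $1/\poly$ largeness to acceptance probability $\geq 2/3$; and any valid seed must encode a tape containing a ${\cal P}$-accepted block, which by usefulness and Lemma~\ref{truthtable} forces the seed itself to have superpolynomial ${\cal C}$-complexity in $\log|z|$. Your io/ae bookkeeping is fine once this construction is in place.
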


One can remove the $O(\log n)$ advice similarly to Theorem~\ref{equiv2} by relaxing the ``property of Boolean functions'' to algorithms on arbitrary strings. 
Informally, Theorem~\ref{naturalequiv} says that ruling out $\P$-natural properties is equivalent to a strong derandomization of randomized exponential time, using small circuits to encode exponentially-long random seeds. 
Similarly, we prove that a variant of natural properties is related to succinct ``hitting sets'' for $\RE$~(Theorem~\ref{rtimenatural}). 

It is worth discussing the meaning of these results in a little more detail. Let ${\cal C}, {\cal D}$ be appropriate circuit classes. 
Roughly speaking, the key lesson of Natural Proofs~\cite{RazborovRudich97,Naor-Reingold04,Krause-Lucks01} is that, if there are ${\cal D}$-natural properties useful against ${\cal C}$, then there are no pseudorandom functions (PRFs) computable in ${\cal C}$ that fool ${\cal D}$ circuits; namely, 
there is a statistical test $T$ computable in ${\cal D}$ such that, for every function $f(\cdot,\cdot) \in {\cal C}$, the test $T$ with query access to $f(x,\cdot)$ (where $x$ is a uniform random $n$-bit seed) can distinguish $f(x,\cdot)$ from a uniform random function (generated using $2^n$ uniform random bits). Now, if we have a PRF computable in ${\cal C}$ that can fool ${\cal D}$ circuits, this PRF can be used to obtain ${\cal C}$ seeds for randomized ${\cal D}$ circuits with one-sided error.\footnote{Consider any ${\cal D}$-circuit $D$ that tries to use $f$ as a source of randomness. A ${\cal C}$-circuit seed for $D$ can be obtained from a circuit computing $f$: since $f$ fools $D$, at least one $n$-bit seed to $f$ will make $D^f$ print $1$.}  That is, the existence of PRFs implies the existence of ${\cal C}$ seeds, so our consequence in Theorem~\ref{naturalequiv} (of the existence of natural properties) that ``no $\ZPE$ predicate has ${\cal C}$ seeds'' appears stronger than ``there are no PRFs'' (as in~\cite{RazborovRudich97}). Moreover, this stronger consequence in Theorem~\ref{naturalequiv} (and Theorem~\ref{rtimenatural}, proved later) yields an implication in the reverse direction: the \emph{lack} of ${\cal D}$-natural properties implies strong derandomizations of randomized 
exponential-size ${\cal D}$. 

Theorem~\ref{naturalequiv} also shows that plausible some derandomization problems are as hard as resolving $\P \neq \NP$.  Since we believe that there are no $\P$-natural properties useful against $\P/\poly$, then by Theorem~\ref{naturalequiv}, we must also believe that there are  ``canonical'' derandomizations of $\ZPE$ in $\EXP$, along the lines of item (2) in Theorem~\ref{naturalequiv}. However, \emph{proving} that such a canonical derandomization exists would in turn imply that there are no $\P$-natural properties useful against $\P/\poly$ (again by Theorem~\ref{naturalequiv}) and hence $\P \neq \NP$. 

\paragraph{Unconditional mild derandomizations} Understanding the relationships between the randomized complexity classes $\ZPP$, $\RP$, and $\BPP$ is a central problem in modern complexity theory. It is well-known that \[\P \subseteq \ZPP = \RP \cap \coRP \subseteq \RP \subseteq \BPP\] but it is not known if any inclusion is an equality. The ideas behind Theorem~\ref{naturalequiv} can also be applied to prove new relations between these classes. We define ${\sf ZPTIME}[t(n)]/d(n)$ to be the class of languages solvable in zero-error time $t(n)$ by machines of description length at most $d(n)$ (under some standard encoding of machines).\footnote{{\bf N.B.} Although our definition is standard (see for example \cite{Barak02,FST05}), it is important to note that there are other possible interpretations of the same notation. Here, we only require that the algorithm is required to be zero-error for the ``correct'' advice or description, but one could also require that the algorithm is zero-error \emph{no matter what} advice is given.} The ``infinitely often'' version $\io{\sf ZPTIME}[t(n)]/d(n)$ is the class of languages $L$ solvable with machines of description length $d(n)$ running in time $t(n)$ that are zero-error for infinitely many input lengths: for infinitely many $n$, the machine has the zero-error property on all inputs of length $n$.
 
\begin{theorem}\label{derand} Either $\RTIME[2^{O(n)}] \subseteq \SIZE[n^c]$ for some $c$, or $\BPP \subseteq \io\ZPTIME[2^{n^{\eps}}]/n^{\eps}$ for all $\eps > 0$.
\end{theorem}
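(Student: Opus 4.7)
I would prove the contrapositive: assume there exists $\eps_0 > 0$ such that $\BPP \not\subseteq \io\ZPTIME[2^{n^{\eps_0}}]/n^{\eps_0}$, and derive $\RTIME[2^{O(n)}] \subseteq \SIZE[n^c]$ for a suitable constant $c$.  The argument is a win-win built on Kabanets's easy-witness derandomization methodology, refined so that the hardness extracted from a failed derandomization lands in $\RTIME[2^{O(n)}]$ rather than in $\BPE$ or $\EXP$.

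I would start by fixing a $\BPP$ language $L$ with machine $M$ (amplified to error $\leq 2^{-n}$, using $m = n^a$ random bits) and a parameter $c$ to be chosen.  Consider the following candidate $\io\ZPTIME[2^{n^{\eps_0}}]/n^{\eps_0}$ algorithm $D_c$ for $L$: on input $x$ of length $n$, with advice encoding $c$ and the description of $M$, enumerate every size-$n^c$ circuit $C$ with $\log m$ inputs, compute its truth table $T_C \in \{0,1\}^m$, run $M(x, T_C)$, and output a clear super-majority if one exists and $\bot$ otherwise.  Choosing $c$ small in terms of $\eps_0$ keeps the running time within $2^{n^{\eps_0}}$ and the advice within $n^{\eps_0}$.

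If $D_c$ decides $L$ on infinitely many input lengths, then $\BPP \subseteq \io\ZPTIME[2^{n^{\eps_0}}]/n^{\eps_0}$, contradicting our hypothesis. Otherwise $D_c$ fails for almost every $n$: there exists a ``bad'' input $x_n$ such that no size-$n^c$ circuit $C$ yields a $T_C$ making $M(x_n, T_C)$ agree with the correct answer. From this I extract a language $H$ whose characteristic function has no $n^c$-size circuit: working at an input length $m$ polynomially related to $n$, encode pairs $\langle x, C\rangle$ and set $H(\langle x, C\rangle) = 1$ iff $T_C$ is bad for $M(x, \cdot)$.  Because $M$'s error has been amplified to $2^{-n}$, the predicate ``$T_C$ is a bad tape'' has a $(1 - 2^{-n})$-density witness set, so it is one-sidedly verifiable: the $\RE$ algorithm for $H$ guesses a random tape $r$ and accepts upon finding $M(x, r) \neq M(x, T_C)$.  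By the failure of $D_c$ together with the overwhelming witness density, $H \in \RTIME[2^{O(m)}]$ has no circuits of size $n^c$ on infinitely many input lengths.  Iterating with $c$ arbitrary contradicts the desired conclusion $\RE \subseteq \SIZE[n^{c'}]$ for any $c'$, unless $D_c$ in fact succeeds for infinitely many $n$ for the right $c$.  This resolves the win-win.

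The main obstacle is ensuring that the extracted hard function $H$ genuinely lives in $\RTIME[2^{O(m)}]$ and not merely in $\BPE$ or $\EXP$.  Classical easy-witness extractions yield hardness only in $\EXP$ because deciding ``which value is the BPP-majority'' is two-sided; the gap between a BPP accepting and rejecting instance is just a constant, so testing equality with the majority is inherently $\BPP$-shaped.  The trick that rescues the argument is the heavy amplification of $M$ to error $2^{-n}$, which replaces the two-sided estimation by a one-sided existence question about a disagreeing random tape $r$, fitting the $\RE$ template exactly.  The remaining technical work is parameter bookkeeping: balancing the advice budget $n^{\eps_0}$, the enumeration time $2^{n^{\eps_0}}$, the amplification depth of $M$, the circuit-size parameter $n^c$, and the length conversion between $n$ and $m$, while preserving the $\io$ zero-error semantics throughout.
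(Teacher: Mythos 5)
There are genuine gaps here, and the most serious one is structural: your win-win does not actually produce the first disjunct. You case on whether the enumerate-all-small-circuits simulation $D_c$ of a $\BPP$ machine succeeds infinitely often. If it succeeds you get the second disjunct (contradicting your hypothesis, fine), but if it fails you extract a function $H \in \RTIME[2^{O(m)}]$ \emph{without} small circuits --- which is the \emph{negation} of $\RTIME[2^{O(n)}] \subseteq \SIZE[n^c]$, the very thing you set out to derive. Your closing sentence (``iterating with $c$ arbitrary contradicts the desired conclusion \ldots unless $D_c$ in fact succeeds'') makes both branches terminate in a contradiction, which would prove $\BPP \subseteq \io\ZPTIME[2^{n^{\eps}}]/n^{\eps}$ unconditionally; the logic is inverted. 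The correct dichotomy (the one the paper uses) is on the $\RE$ side, not the $\BPP$ side: either every $\RTIME[2^{O(n)}]$ predicate has $n^{c}$-size circuits whose truth tables encode $O(n^2)$-size hitting sets for almost all $n$ --- in which case $\RTIME[2^{O(n)}] \subseteq \TIME[2^{O(n)}]/n^c$, upgraded to $\SIZE[n^c]$ via Babai--Fortnow--Nisan--Wigderson when the second disjunct fails --- or for every $c$ some $\RE$ predicate lacks such succinct hitting sets, and \emph{that} failure is what powers the derandomization of $\BPP$ (via Theorem~\ref{rtimenatural} and an io-$\P$-natural property).

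The second gap is in the hardness extraction itself, and it is exactly the two-sided-error obstacle you claim to have circumvented. Your verifier for $H(\langle x,C\rangle)$ accepts upon finding a random $r$ with $M(x,r)\neq M(x,T_C)$; when $T_C$ is a \emph{good} tape this still accepts with probability up to $2^{-n}$, which is small but not zero, so $H$ is placed in $\BPTIME[2^{O(m)}]$ with tiny error, not in $\RTIME[2^{O(m)}]$. Moreover, even the hardness claim is unsupported: the failure of $D_c$ on $x_n$ only says that the map $T \mapsto M(x_n,T)$ \emph{distinguishes} truth tables of $n^c$-size circuits from uniform; it does not imply that your particular language $H$ lacks $n^c$-size circuits (turning a distinguisher into a hard function requires a PRG reconstruction you do not have). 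The paper sidesteps both problems by working with one-sided ($\RE$) predicates and \emph{hitting sets}: membership of a string in a hitting set is deterministically checkable by running the predicate on all $2^n$ inputs, a random exponential-length string encodes a hitting set with high probability (largeness), and any string encoding a hitting set must be hard when succinct hitting sets do not exist. That largeness is also what your plan is missing at the final step: to get a \emph{zero-error} $2^{n^{\eps}}$-time simulation with only $n^{\eps}$ advice, one randomly guesses the hard string $Y$, verifies it with the large-and-useful property (outputting \emph{don't know} on failure), and only then feeds it to Umans' generator; a single hard language in a randomized exponential-time class cannot be used this way, since computing its truth table would itself require the derandomization being constructed.
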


We have a win-win: either randomized exptime is very easy with non-uniform circuits, or randomized computation with two-sided error has a  \emph{zero error} simulation (with description size $n^{\eps}$) that dramatically avoids brute-force. To appreciate the theorem statement, suppose the first case could be modified to conclude that $\RP \subseteq  \io\ZPTIME[2^{n^{\eps}}]/n^{\eps}$ for all $\eps > 0$. Then the famous ($\coRP$) problem of Polynomial Identity Testing would have a new subexponential-time algorithm, good enough to prove strong $\NEXP$ circuit lower bounds.\footnote{More precisely, the main result of Kabanets and Impagliazzo~\cite{KI04} concerning the derandomization of Polynomial Identity Testing (PIT) can be extended as follows: if PIT for arithmetic circuits can be solved for infinitely-many circuit sizes in nondeterministic subexponential time, then either $\NEXP \not\subset \P/\poly$ or the Permanent does not have polynomial-size arithmetic circuits.} A quick corollary of Theorem~\ref{derand} comes close to achieving this. To simplify notation, we use the $\SUBEXP$ modifier in a complexity class to abbreviate ``$2^{n^{\eps}}$ time, for every $\eps > 0$.''

\begin{corollary} \label{rp} For some $c > 0$, $\RP \subset \io\ZPSUBEXP/n^c$.\end{corollary}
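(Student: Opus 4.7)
The plan is a simple case split driven by Theorem~\ref{derand}. That theorem gives two alternatives: either (i) $\RTIME[2^{O(n)}] \subseteq \SIZE[n^c]$ for some fixed constant $c$, or (ii) $\BPP \subseteq \io\ZPTIME[2^{n^{\eps}}]/n^{\eps}$ for every $\eps > 0$. Since the conclusion we want, $\RP \subseteq \io\ZPSUBEXP/n^c$, is unconditional, I would obtain it in each of the two cases separately and then take the constant $c$ to be the maximum of the two constants that emerge.

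In case (ii), the derivation is essentially immediate: $\RP \subseteq \BPP$, so the hypothesis gives $\RP \subseteq \io\ZPTIME[2^{n^{\eps}}]/n^{\eps}$ for every $\eps > 0$, and this is directly contained in $\io\ZPSUBEXP/n$ (taking advice length $c = 1$, for instance). In case (i), I would argue that any $L \in \RP$ satisfies $L \in \RTIME[2^{O(n)}]$, so the hypothesis gives $L \in \SIZE[n^c]$. For each input length $n$, I take the advice string to be the description of a polynomial-size circuit deciding $L \cap \{0,1\}^n$; this has length $O(n^c \log n)$, which fits inside $n^{c+1}$ for large $n$. The resulting non-uniform machine is deterministic polynomial time with advice $n^{c+1}$, hence trivially a zero-error subexponential-time machine with advice $n^{c+1}$, so $\RP \subseteq \P/n^{c+1} \subseteq \io\ZPSUBEXP/n^{c+1}$.

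Combining the two cases, the final constant can be taken to be $c' = \max(c+1, 1) = c+1$, yielding $\RP \subseteq \io\ZPSUBEXP/n^{c'}$. There is no real obstacle here beyond bookkeeping; the only thing to be careful about is that the advice length in the first case accommodates the description of size-$n^c$ circuits (a $\log n$ overhead absorbed into the constant), and that the $\io$-modifier on the conclusion is harmless in the first case, where the non-uniform simulation actually holds for \emph{all} input lengths. The substantive work is already done in Theorem~\ref{derand}; the corollary is just its uniform repackaging.
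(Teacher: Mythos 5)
Your proposal is correct and follows essentially the same two-case argument as the paper: case (2) of Theorem~\ref{derand} gives the conclusion via $\RP \subseteq \BPP$, and case (1) gives it by hardwiring the size-$n^c$ circuit as advice for a deterministic (hence trivially zero-error) polynomial-time simulation. Your extra care with the $\log n$ encoding overhead in the advice length is a harmless refinement of the paper's bookkeeping.
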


That is, the error in an $\RP$ computation can be removed in subexponential time with fixed-polynomial advice, infinitely often. We emphasize that the advice needed is \emph{independent} of the running times of the $\RP$ and $\ZPSUBEXP$ computations: the $\RP$ computation could run in $n^{c^{{c}^{{c}^c}}}$ time and still need only $n^c$ advice to be simulated in $2^{n^{1/{c^{{c}^{{c}^c}}}}}$ time. 
Corollary~\ref{rp} should be compared with a theorem of Kabanets~\cite{Kabanets01}, who gave a simulation of $\RP$ in \emph{pseudo}-subexponential time with zero error. That is, his simulation is only guaranteed to succeed against efficient adversaries which try to generate bad inputs (but his simulation also does not require advice).

An analogous argument can be used to give a new simulation of Arthur-Merlin games. Informally (and following the notation outlined above), $\io\Sigma_2 \SUBEXP/n^c$ is the class of languages which agree infinitely often with $\Sigma_2$ machines running in $2^{n^{\eps}}$ time for all $\eps > 0$, with $O(n^c)$ bits of advice.

\begin{corollary} \label{am} For some $c > 0$, $\AM \subseteq \io\Sigma_2 \SUBEXP/n^c$.
\end{corollary}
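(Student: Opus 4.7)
The plan is to carry out an $\AM$-analogue of the proof of Corollary~\ref{rp}: apply the dichotomy of Theorem~\ref{derand} to a fixed $L \in \AM$ and dispatch each case via a standard containment. First, I would record the inclusion $\AM \subseteq \BPP^{\NP}$---a public-coin Arthur-Merlin protocol is simulated by a $\BPP$ machine that, on random coins $r$, makes the single oracle query ``$\exists y\colon R(x,y,r)$?'' to $\NP$.

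In Case~1, where $\RTIME[2^{O(n)}] \subseteq \SIZE[n^d]$ for some $d$, I would invoke the unconditional inclusion $\AM \subseteq \NP/\poly$ (Babai), obtaining $L \in \NP/n^{c_1} \subseteq \Sigma_2\P/n^{c_1} \subseteq \Sigma_2\SUBEXP/n^{c_1}$ for a constant $c_1$ depending on $L$, even in the almost-everywhere sense (note that this step does not actually require the hypothesis of Case~1). In Case~2, where $\BPP \subseteq \io\ZPTIME[2^{n^\eps}]/n^\eps$ for every $\eps > 0$, I would argue that the seed-based derandomization underlying Theorem~\ref{naturalequiv} relativizes: it iterates over small circuits as pseudorandom seeds and is oblivious to the tester's oracle access, so it lifts to $\BPP^{\NP} \subseteq \io(\ZPTIME[2^{n^\eps}])^{\NP}/n^\eps$. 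Combining with $\ZPP^{\NP} = \RP^{\NP} \cap \coRP^{\NP} \subseteq \Sigma_2^p \cap \Pi_2^p$---which scales to the $\SUBEXP$ level---would then yield $L \in \io\Sigma_2\SUBEXP/n$ (taking $\eps = 1$).

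Setting $c = \max(c_1, 1)$ concludes the argument. The hard part will be verifying the relativization claim in Case~2, namely that the derandomization produced by Theorem~\ref{naturalequiv} fools $\NP$-style tests as readily as $\P$-style ones. I expect this because the seeds are combinatorial (hardwired truth tables of ${\cal C}$-circuits) and their ``good-seed'' property reduces to fooling a fixed-size circuit test, which is no more demanding in the nondeterministic setting than in the deterministic one. If that subtlety fails, Case~1 alone already delivers the corollary, since $\AM \subseteq \NP/\poly$ is unconditional and independent of which branch of the dichotomy holds.
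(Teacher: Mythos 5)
Your proposal has two genuine gaps, and the fallback you offer for the second does not rescue the argument.

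First, the fallback. The content of Corollary~\ref{am} (as with Corollary~\ref{rp}) is that a \emph{single} constant $c$ works for all of $\AM$: the advice length $n^c$ must be independent of the running time of the protocol being simulated. Your Case-1 route via $\AM \subseteq \NP/\poly$ produces advice of length $n^{c_1}$ where $c_1$ depends on $L$ (it must be large enough to encode a hitting set for a time-$n^k$ verifier, so $c_1$ grows with $k$). That only yields $\AM \subseteq \Sigma_2\P/\poly$, which is trivially true and is not the statement being proved; consequently ``setting $c=\max(c_1,1)$'' is not available, and ``Case~1 alone already delivers the corollary'' is false. The actual proof uses the Case-1 hypothesis essentially: if every $\AM$ verifier has a hitting set encoded by an $n^c$-size \emph{SAT-oracle} circuit for a universal $c$, then the advice is just that circuit ($\tilde{O}(n^c)$ bits, fixed polynomial), which is expanded to the full hitting set inside $\P^{\NP}$.

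Second, the relativization claim in Case~2 is the step that fails, and for the reason you were worried about. The hard string $Y$ produced in Case~2 of Theorem~\ref{derand} has high circuit complexity for \emph{deterministic} circuits, and Theorem~\ref{hardness-randomness} then fools deterministic circuits of size $s$. To derandomize $\BPP^{\NP}$ (equivalently, Arthur's coins in an $\AM$ protocol) the statistical test is ``$\exists z\,V(x,y,z)$,'' a \emph{nondeterministic} test; a generator that fools size-$s$ deterministic circuits can fail badly against it, and hardness for deterministic circuits does not imply hardness for SAT-oracle circuits. This is exactly why the paper does not invoke the dichotomy of Theorem~\ref{derand} as a black box but reruns the whole argument with $\AM$ hitting sets measured in \emph{SAT-oracle} circuit complexity: the negation of Case~1 then supplies a string $Y$ of high SAT-oracle circuit complexity, which is what the Klivans--van~Melkebeek generator needs to fool $\AM$ predicates, and verifying that a guessed $Y$ is such a string (via the guessed accept/reject pattern plus $O(n)$ bits of advice counting accepted inputs) is what places the simulation in $\Sigma_2\SUBEXP$ rather than $\ZPSUBEXP$. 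So the oracle version of the dichotomy has to be established from scratch; it is not obtained by relativizing the unrelativized one after the fact.
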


The ideas used here can also be applied to prove a new equivalence between $\NEXP = \BPP$ and nontrivial simulations of $\BPP$. Informally, $\io{\sf Heuristic}\ZPTIME[2^{n^{\eps}}]/n^{\eps}$ is the class of languages which, for infinitely many $n$, agree on a $1-1/n$ fraction of the $n$-bit inputs with zero-error randomized subexponential-time machines using $O(n^{\eps})$ advice.

\begin{theorem}\label{NEXPBPPequiv}
$\NEXP \neq \BPP$ if and only if for all $\eps > 0$, $\BPP \subseteq \io{\sf Heuristic}\ZPTIME[2^{n^{\eps}}]/n^{\eps}$.
\end{theorem}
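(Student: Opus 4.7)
The theorem relates the believed separation $\NEXP \neq \BPP$ to an apparently-unconditional heuristic derandomization of $\BPP$. Both directions are proved by leveraging Theorem~\ref{derand}.

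For the forward direction, we apply Theorem~\ref{derand} to obtain a dichotomy: either (i) $\RTIME[2^{O(n)}] \subseteq \SIZE[n^c]$ for some constant $c$, or (ii) $\BPP \subseteq \io\ZPTIME[2^{n^{\eps}}]/n^{\eps}$ for all $\eps > 0$. Case (ii) yields the desired heuristic simulation at once, since any zero-error simulation is a fortiori heuristic with respect to any distribution. We then argue that case (i) cannot coexist with $\NEXP \neq \BPP$: the non-uniform upper bound $\RE \subseteq \SIZE[n^c]$, combined with the easy-witness framework of Impagliazzo--Kabanets--Wigderson and the sharpened $\NEXP$ lower bound translation given by Theorem~\ref{equiv}, implies $\NEXP \subseteq \P/\poly$, whence IKW forces $\NEXP = \MA \subseteq \BPP$, contradicting the hypothesis.

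For the backward direction, suppose the heuristic sub-exponential simulation holds for every $\eps > 0$ and assume for contradiction that $\NEXP = \BPP$. Pick an $\NE$-complete language $L$ under polynomial-time reductions; then $L \in \BPP$, so by assumption $L \in \io{\sf Heuristic}\ZPTIME[2^{n^{\eps}}]/n^{\eps}$. Because every $\NE$ language reduces to $L$ in polynomial time, the heuristic simulation of $L$ propagates to every $\NE$ language on every polynomial-samplable input distribution. We then construct an $\NE$ language that is hard on average with respect to a specific polynomial-samplable distribution, via padding and diagonalization in the spirit of Zak's nondeterministic time hierarchy with advice, which directly contradicts the propagated simulation.

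The main obstacle lies in the backward direction: the word ``heuristic'' permits errors on a small fraction of inputs, so worst-case hierarchy theorems alone cannot yield a contradiction. We must exhibit an $\NE$ language whose hardness survives on an appropriate polynomial-samplable distribution and then convert the heuristic ZP simulation into a nondeterministic one that violates Zak's hierarchy with small advice. In the forward direction, ruling out case (i) of Theorem~\ref{derand} requires bridging the non-uniform collapse on $\RE$ to one on $\NE$ via easy-witness arguments, which is the delicate step that ultimately closes the equivalence.
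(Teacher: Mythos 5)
Both directions of your proposal have genuine gaps. In the forward direction, your plan is to rule out case (i) of Theorem~\ref{derand} under the hypothesis $\NEXP \neq \BPP$, via the chain $\RE \subseteq \SIZE[n^c] \Rightarrow \NEXP \subseteq \P/\poly \Rightarrow \NEXP = \MA \subseteq \BPP$. This chain breaks twice: $\RE \subseteq \SIZE[n^c]$ gives $\EXP \subseteq \P/\poly$, but $\EXP \subseteq \P/\poly$ is not known to imply $\NEXP \subseteq \P/\poly$ (closing that gap is exactly the content of the IKW easy-witness results, which require additional hypotheses you do not have here); and even granting $\NEXP = \MA$, the inclusion $\MA \subseteq \BPP$ is not known ($\MA$ contains $\NP$). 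So case (i) alone does not contradict $\NEXP \neq \BPP$. The paper's proof instead works with the contrapositive and uses the negated conclusion \emph{twice}: once to invoke Theorem~\ref{derand} and get $\REXP = \EXP$ from seeds for $\RE$, and once to invoke Impagliazzo--Wigderson ($\BPP \not\subseteq \io{\sf Heuristic}\TIME[2^{n^{\eps}}]$ implies $\EXP = \BPP$). Combining these gives $\REXP = \BPP$, hence $\NP \subseteq \BPP$, hence $\NP = \RP$ by Ko's theorem, and $\NEXP = \REXP = \BPP$ by padding. The Impagliazzo--Wigderson step is indispensable and is entirely missing from your argument.

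In the backward direction, you correctly identify the obstacle but do not overcome it. Propagating a heuristic simulation of an $\NE$-complete language to all of $\NE$ via polynomial-time reductions fails in general, because a reduction can map the hard instances of the source language into the small error set of the target's simulation (this is the standard domination issue in average-case complexity). Likewise, an average-case nondeterministic time hierarchy with advice, ``in the spirit of \v{Z}\'ak,'' is not available off the shelf, and delayed diagonalization does not obviously produce hardness against heuristic classes. The paper avoids both problems: from $\NEXP = \BPP$ it first derives $\NEXP = \EXP$ and hence $\NTIME[2^{O(n)}] \subseteq \TIME[2^{O(n^c)}]$ via the linear-time complete {\sc SuccinctHalting} language, which converts the heuristic \emph{nondeterministic} subexponential simulation into a heuristic \emph{deterministic} $2^{O(n^c)}$-time simulation with $n^{\eps}$ advice; it then exhibits an explicit $\EXP$ language outside $\io{\sf Heuristic}\TIME[2^{O(n^c)}]/n^{1/2}$ by a block-based majority-vote diagonalization against all machine/advice pairs, which forces disagreement on more than a $1/n$ fraction of every input length. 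You would need to supply this determinization step and an explicit average-case diagonalization to complete your route.
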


Finally, these ideas can be extended to show an \emph{equivalence} between the existence of $\RP$-natural properties and $\P$-natural properties against a circuit class:

\begin{theorem}\label{RPnaturalprop} If there exists a $\RP$-natural property $P$ useful against a class ${\cal C}$, then there exists a  $\P$-natural property $P'$ against ${\cal C}$.
\end{theorem}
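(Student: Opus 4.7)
The plan is to derandomize the $\RP$ algorithm for $P$ by using a portion of the input function's own truth table as the random seed. Let $A(f, r)$ be the polynomial-time $\RP$ algorithm computing $P$, where $f \in \{0,1\}^N$ is the truth table of an $n$-variable function ($N = 2^n$) and $r \in \{0,1\}^m$ is the random seed with $m \leq N^c$ for some constant $c$. The one-sided error of $\RP$ gives $A(f, r) = 1 \Rightarrow P(f) = 1$, while $P(f) = 1 \Rightarrow \Pr_r[A(f,r) = 1] \geq 1/2$; this asymmetry will be crucial.

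First I would set $n' = cn + O(1)$ so that $2^{n'} \geq N + m$, and define the new property $P'$ on $n'$-variable functions as follows. Given a truth table $F \in \{0,1\}^{2^{n'}}$, parse $F$ into two disjoint portions: let $f \in \{0,1\}^N$ be the restriction $F(0^{n'-n}, \cdot)$, viewed as an $n$-variable truth table, and let $r \in \{0,1\}^m$ be any $m$ other bits of $F$'s truth table that are disjoint from $f$. Define $P'(F) := A(f, r)$.

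Next, I would verify the three natural-proofs conditions on $P'$. \emph{Constructivity}: $P' \in \P$ since $A$ runs in $\poly(N) \leq \poly(2^{n'})$ time. \emph{Largeness}: because $f$ and $r$ occupy disjoint positions in $F$'s truth table, a uniformly random $F$ induces independent, uniformly random $f$ and $r$. Hence $\Pr_F[P'(F) = 1] = \Pr_{f,r}[A(f,r) = 1] \geq \mu(n)/2$, where $\mu(n) \geq 1/N^{O(1)}$ is the largeness of $P$ at length $n$; this bound is still $\geq 1/(2^{n'})^{O(1)}$, the largeness required of $P'$. \emph{Usefulness}: if $F$ has a size-$n'^k$ ${\cal C}$-circuit, then hard-wiring the first $n'-n$ inputs yields a ${\cal C}$-circuit for $f$ of size $\leq n'^k = (cn + O(1))^k \leq n^{k+1}$ for large $n$. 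By usefulness of $P$ at sufficiently large polynomial sizes, $P(f) = 0$ at the relevant input lengths $n$, so $A(f, r) = 0$ by the one-sided error, whence $P'(F) = 0$. The existence of some $F$ with $P'(F) = 1$ follows from largeness.

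The main potential obstacle, and the reason one-sided error matters, is the usefulness direction: we must prevent $P'$ from spuriously accepting ``easy'' $F$. The $\RP$ one-sided error makes this automatic, since any accepting seed $r$ certifies $P(f) = 1$; were $A$ two-sided (as in a $\BPP$-natural property), an unlucky seed could make $A(f,r) = 1$ for some easy $f$ and break usefulness — which explains why the analogous statement for $\BPP$-natural properties is substantially more delicate. Everything else is bookkeeping: $f$ and $r$ are literally disjoint bits of $F$'s truth table, so no extractor or pseudorandom generator is needed to obtain independence, and the padding $n \mapsto n' = O(n)$ only rescales polynomial parameters by constant factors.
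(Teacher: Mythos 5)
Your proof is correct, and it takes a genuinely cleaner route than the paper's, though both implement the same slogan of drawing the randomness for $A$ from the input truth table itself. The paper's construction splits the $n$-bit input $T$ into $n^{1-1/k}$ substrings $T_i$ of length $n^{1/k}$, runs $A(T_i,T)$ for \emph{every} $i$ using the entire string $T$ as the random seed, and accepts if any call accepts. Because the seed then contains the substring being tested, input and randomness are not independent; the paper pays for this by first amplifying $A$'s success probability to $1-4^{-n}$ and union-bounding over all $2^{n^{1/k}}$ possible inputs, so that most strings $T$ are simultaneously good seeds for every input of the smaller length, and largeness is then extracted from an inclusion--exclusion count over the set of universally good seeds and the set of strings containing an accepted substring. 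By reserving \emph{disjoint} blocks of the truth table for the input $f$ and the seed $r$, you get independence for free, so no amplification, no union bound, and no intersection argument are needed, and largeness is a one-line product bound; the price is that you test only one designated sub-block rather than all of them, which is immaterial here. The usefulness argument is essentially identical in the two proofs (restrictions of easy truth tables are easy, as in Proposition~\ref{truthtable0}, plus the one-sided error guarantee that $A$ never accepts a function outside $P$ under any seed), and both lose only a constant factor in the exponent of the polynomial size bound, which the ``for all $k$'' quantifier absorbs. The remaining gaps in your write-up really are bookkeeping: fix a canonical choice of the $m$ seed bits, and define $P'$ at input lengths $n'$ not of the form $cn+O(1)$ by rounding $n$ down, so that largeness holds at \emph{every} length as the definition demands, while io-usefulness of $P'$ follows from io-usefulness of $P$ at the infinitely many corresponding lengths $n$.
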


That is, given any property $P$ with one-sided error that is sufficient for distinguishing all easy functions from many hard functions, we can obtain a  deterministic property $P'$ with analogous behavior. (Note this is not \emph{exactly} a derandomization of property $P$; the property $P'$ will in general have different input-output behavior from $P$, but $P'$ does use $P$ as a subroutine.) The key idea of the proof is to \emph{swap the input with the randomness} in the property $P$.

\section{Preliminaries}\label{prelims}

For simplicity, all languages are over $\{0,1\}$. We fix some standard encoding of Turing machines, and define the \emph{description length} of a machine $M$ to be the length of $M$ under the encoding. We assume knowledge of the basics of complexity theory~\cite{AroraBarak} such as advice-taking machines, and complexity classes like $\EXP=\TIME[2^{n^{O(1)}}]$, $\NEXP = \NTIME[2^{n^{O(1)}}]$, $\AC^0[m]$, $\ACC$, and so on. We use $\SIZE[s(n)]$ to denote the class of languages recognized by a (non-uniform) $s(n)$-size circuit family. We also use the (standard) ``subexponential-time'' notation $\SUBEXP = \bigcap_{\eps > 0}\TIME[2^{O(n^{\eps})}]$. (So for example, $\NSUBEXP$ refers to the class of languages accepted in nondeterministic $2^{n^{\eps}}$ time, for all $\eps > 0$.) When we refer to a ``typical'' circuit class ($\AC^0$, $\ACC$, $\TC^0$, $\NC^1$, $\NC$, or $\P/\poly\}$), we will always assume the class is \emph{non-uniform}, unless otherwise specified. Some familiarity with prior work connecting SAT algorithms and circuit lower bounds~\cite{Williams10,Williams11} would be helpful, but this paper is mostly self-contained.

We will use \emph{advice classes}: for a deterministic or nondeterministic class ${\cal C}$ and a function $a(n)$, ${\cal C}/a(n)$ is the class of languages $L$ such that there is an $L' \in {\cal C}$ and an arbitrary function $f : \N \rightarrow \{0,1\}^{\star}$ with $|f(n)|\leq a(n)$ for all $x$, such that $L = \{x~|~(x,f(|x|)) \in L'\}$. That is, the arbitrary advice string $f(n)$ can be used to solve all $n$-bit instances within class  ${\cal C}$. 

For semantic (e.g., randomized, $\NTIME \cap \coNTIME$) classes ${\cal C}$, the definition of advice is technically subtle. We shall only require that the class ${\cal C}$ algorithm exhibits the relevant promise condition (zero-error, one-sided error, or otherwise) for the ``correct'' advice or description; one could also require that the algorithm satisfies the promise condition \emph{no matter what} advice is given. 

More precisely, for a randomized machine $M$ and class ${\cal C} \in \{\RTIME[t(n)],\ZPTIME[t(n)],\BPTIME[t(n)]\}$, we say that $M$ is of type ${\cal C}$ on a given input $x$ if $M$ on $x$ runs in time $t(|x|)$ and $M$ satisfies the promise of one-sided/zero/two-sided error on input $x$. (For example, in the case of one-sided error, if $x \in L$ then $M$ on $x$ should accept at least $2/3$ of the computation paths; if $x \notin L$ then $M$ on $x$ should reject all of the computation paths. In the case of zero-error, if $x \in L$ then $M$ on $x$ should accept at least $2/3$ of the paths and output {\bf ?} (i.e., \emph{don't know}) on the others; if $x \notin L$ then $M$ on $x$ should reject at least $2/3$ of the paths and output {\bf ?} on the others.) Then for ${\cal C} \in \{\RTIME[t(n)],\ZPTIME[t(n)],\BPTIME[t(n)]\}$, ${\cal C}/a(n)$ is the class of languages $L$ recognized by a randomized machine of \emph{description length} $a(n)$ (under some standard encoding of machines) that is of type ${\cal C}$ on all inputs~\cite{Barak02}. Equivalently, $L \in {\cal C}/a(n)$ is in the class if there is a machine $M$ and advice function $s : \N \rightarrow \{0,1\}^{a(n)}$ such that for all $x \in \{0,1\}^{\star}$, $M$ is a machine of type ${\cal C}$ when executed on input $(x,a(|x|)$ ($M$ satisfies the promise of one-sided/zero/two-sided error on that input) and $x \in L$ if and only if $M(x,a(|x|))$ accepts~\cite{FST05}. 

We also use \emph{infinitely-often classes}: for a deterministic or nondeterministic complexity class ${\cal C}$, $\io{\cal C}$ is the class of languages $L$ such that there is an $L' \in {\cal C}$ where, for infinitely many $n$, $L \cap \{0,1\}^n = L' \cap \{0,1\}^n$. For randomized classes ${\cal C} \in \{\RTIME[t(n)],\ZPTIME[t(n)],\BPTIME[t(n)]\}$, as well as semantic classes such as $(\NTIME\cap\coNTIME)[t(n)]$, $\io{\cal C}$ is the class of languages $L$ recognized by a machine $M$ such that, for infinitely many input lengths $n$, $M$ is of type ${\cal C}$ on all inputs of length $n$ (and need not be of type ${\cal C}$ on other input lengths). 

Some particular notation and conventions will be useful for this paper. For any circuit $C(x_1,x_2,\ldots,x_n)$, $i < j$, and $a_1,\ldots,a_n \in \{0,1\}$, the notation $C(a_1,\ldots,a_i,\cdot,a_j,\ldots,a_n)$ represents the circuit with $j-i-1$ inputs obtained by assigning the input $x_q$ to $a_q$, for all $q \in [1,i]\cup[j,n]$. In general, $\cdot$ is used to denote free unassigned inputs to the circuit.

\subsection{Truth Tables and Their Circuit Complexity}\label{tt-ckt}

In this paper, we study the circuit complexities of all strings, even those which are not of length equal to a power of two. To make the discussion precise, we carefully develop the concepts in this section.

Let $y_1,\ldots,y_{2^k} \in \{0,1\}^k$ be the list of $k$-bit strings in lex order. For a Boolean function $f:\{0,1\}^n \rightarrow \{0,1\}$, the truth table of $f$ is defined to be \[tt(f) := f(y_1)f(y_2) \cdots f(y_{2^n}),\] and the truth table of a circuit is simply the truth table of the function it defines. For binary strings with lengths that are not powers of two, we use the following encoding convention. Let $T$ be a binary string, let $k=\lceil \log_2 |T| \rceil$. The \emph{Boolean function encoded by $T$} or the \emph{function corresponding to $T$}, denoted by $f_T$, is the function satisfying $tt(f_T) = T0^{2^k-|T|}$. 

The \emph{size} of a circuit is its number of gates. The circuit complexity of an arbitrary string (and hence, a function) takes some care to properly define, based on the circuit model. For the unrestricted model, the \emph{circuit complexity of $T$}, denoted as $CC(T)$, is simply the minimum size of any circuit computing $f_T$. For a depth-bounded circuit model, where a depth function must be specified prior to giving the circuit family, the appropriate measure is the \emph{depth-$d$ circuit complexity of $T$}, denoted as $CC_d(T)$, which is the minimum size of any depth-$d$ circuit computing $f_T$. (Note that, even for circuit classes like $\NC^1$, we have to specify a depth upper bound $c \log n$ for some constant $c$.) For the class $\ACC$, we must specify a modulus $m$ for the MOD gates, as well as a depth bound, so when considering $\ACC$ circuit complexity, we look at the \emph{depth-$d$ mod-$m$ circuit complexity of $T$}, $CC_{d,m}(T)$, for fixed $d$ and $m$.

A simple fact about the circuit complexities of truth tables and their substrings will be very useful:

\begin{proposition}\label{truthtable0} Suppose $T = T_1\cdots T_{2^k}$ is a string of length $2^{k+\ell}$, where $T_1,\ldots,T_{2^k}$ each have length $2^{\ell}$. Then $CC(T_i) \leq CC(T)$, $CC_{d}(T_i) \leq CC_d(T)$, and $CC_{d,m}(T_i) \leq CC_{d,m}(T)$.
\end{proposition}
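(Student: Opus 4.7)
The plan is to view $f_T$ as a function of $k+\ell$ variables and observe that each substring $T_i$ corresponds to a restriction of $f_T$ obtained by fixing its first $k$ inputs to constants. Specifically, under the lex-order convention for truth tables, if $z_1, \ldots, z_{2^k} \in \{0,1\}^k$ are the $k$-bit strings in lex order, then for $y \in \{0,1\}^\ell$ we have $f_{T_i}(y) = f_T(z_i, y)$. The first step is simply to verify this indexing match between the block decomposition $T = T_1 \cdots T_{2^k}$ and the restriction $x \mapsto f_T(z_i, x)$.

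Next, given any circuit $C$ of size $s$ on $k+\ell$ inputs computing $f_T$, I would construct a circuit $C_i$ for $f_{T_i}$ by hardwiring the first $k$ input wires to the bits of $z_i$: each wire previously reading input $x_j$ (for $j \le k$) is replaced by the constant $(z_i)_j$. The resulting $C_i$ has the remaining $\ell$ inputs free and computes $f_{T_i}$ by construction. Since this operation only removes input wires and substitutes constants, the number of gates does not increase, so $CC(T_i) \le CC(T)$. The same construction preserves the depth of every gate (each gate in $C_i$ has depth at most that of the corresponding gate in $C$), giving $CC_d(T_i) \le CC_d(T)$. For $\ACC$ circuits, observe further that MOD-$m$ gates remain MOD-$m$ gates after constant substitution (one can either keep them or simplify using the constant inputs), so $CC_{d,m}(T_i) \le CC_{d,m}(T)$ as well.

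There is essentially no obstacle here: the proof is just the standard observation that restriction of inputs does not increase circuit size, depth, or the required gate repertoire. The only thing requiring even minor care is the bookkeeping of the block-versus-restriction correspondence, which follows immediately from the lex-order convention used to define $tt(f)$ in the preliminaries.
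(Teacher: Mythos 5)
Your proposal is correct and follows exactly the paper's argument: the paper likewise obtains a circuit for $f_{T_i}$ by substituting the constants $z_i$ for the first $k$ inputs of a circuit for $f_T$, noting this cannot increase size (and, implicitly, preserves depth and gate types). Your write-up merely spells out the indexing and the depth/modulus bookkeeping in more detail.
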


\begin{proof} Given a circuit $C$ of size $s$ for $f_T$, a circuit for $f_{T_i}$ is obtained by substituting values for the first $k$ inputs of $C$. This yields a circuit of size at most $s$.
\end{proof}

We will sometimes need a more general claim: for any string $T$, the circuit complexity of an arbitrary substring of $T$ can be bounded via the circuit complexity of $T$.

\begin{lemma}\label{truthtable} There is a universal $c \geq 1$ such that the following holds. Let $T$ be a binary string, and let $S$ be any substring of $T$. Then for all $d$ and $m$, $CC(f_{S}) \leq CC(f_T)+(c\log|T|)$, $CC_{d}(f_{S}) \leq  CC_{d+c}(f_T)+(c\log|T|)^{1+o(1)}$, and $CC_{d,m}(f_{S}) \leq CC_{d+c,m}(f_T)+(c\log|T|)^{1+o(1)}$.
\end{lemma}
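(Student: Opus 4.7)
The plan is to reduce evaluating $f_S$ to evaluating $f_T$ on a shifted input, by wrapping the given circuit for $f_T$ with a small input-transformation circuit. Let $N = |T|$, $M = |S|$, and write the substring as $S[i] = T[p+i]$ for $0 \leq i < M$, where $p$ is the starting position of $S$ inside $T$. By the encoding convention from the preliminaries, for any $x$ of bit-length $\lceil \log_2 M \rceil$ viewed as a nonnegative integer, we have $f_S(x) = f_T(p+x)$ when $x < M$, and $f_S(x) = 0$ otherwise.

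Given a circuit $C_T$ for $f_T$, I would build $C_S$ on input $x$ as follows: first compute $y = p + x$ as a string of $\lceil \log_2 N \rceil$ bits via a constant-addition sub-circuit $\mathrm{Add}_p$; in parallel compute the predicate $b = [x < M]$ via a constant-comparison sub-circuit $\mathrm{Lt}_M$; finally output $C_T(y) \land b$. Correctness is immediate: when $x < M$, we have $p + x \in [p, p+M-1] \subseteq [0, N-1]$ so $C_T(p+x) = T[p+x] = S[x]$ and $b = 1$; when $x \geq M$, $b = 0$ forces the output to $0$, matching the padding convention for $f_S$.

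The remaining work is to bound the size and depth of $\mathrm{Add}_p$ and $\mathrm{Lt}_M$. For unrestricted circuits, a ripple-carry constant-adder and a ripple constant-comparator each use $O(\log N)$ gates, which yields $CC(f_S) \leq CC(f_T) + O(\log|T|)$. For the depth-bounded models $CC_d$ and $CC_{d,m}$, the key observation is that for any fixed $p$ the carry bit $c_i$ of $p + x$ has an $O(\log N)$-term DNF in the bits of $x$ (via the standard generate/propagate decomposition with $p_j$ fixed), so $\mathrm{Add}_p$ is realized in constant depth and $(\log N)^{1+o(1)}$ size using a slightly tuned adder; the same holds for $\mathrm{Lt}_M$, and the final $\land$ adds one gate. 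All added depth fits into a single universal constant $c$, independent of $d$ and $m$. The $\AC^0[m]$ case follows because $\AC^0 \subseteq \AC^0[m]$ and the $\MOD$ gates are simply unused.

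The main obstacle I expect is pinning down the $(c\log|T|)^{1+o(1)}$ bound in constant depth rather than the crude $(\log|T|)^{O(1)}$ from a per-carry DNF implementation; this likely requires appealing to a known fast constant-depth adder (e.g., a carry-lookahead variant with unbounded fan-in) or a tighter analysis of the carry DNFs to trim the polynomial overhead. A secondary subtlety is the depth bookkeeping between the input-transformation layers and the bottom layer of $C_T$, which I would handle by collapsing gates across the interface whenever the gate types agree, so that the constant $c$ absorbing the extra layers of $\mathrm{Add}_p$, $\mathrm{Lt}_M$, and the final AND stays universal.
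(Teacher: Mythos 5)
Your proposal is correct and follows essentially the same route as the paper: wrap the given circuit for $f_T$ with a constant-depth constant-adder for the index shift and a constant-comparator for the range check, then AND the results. The one obstacle you flag—achieving size $(\log|T|)^{1+o(1)}$ rather than $(\log|T|)^2$ for the adder in constant depth—is resolved in the paper exactly as you anticipate, by invoking the known constant-depth addition circuits of size $O(n\log^{\star}n)$ due to Chandra, Fortune, and Lipton.
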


\begin{proof} Let $c'$ be sufficiently large in the following. Let $k$ be the minimum integer satisfying $2^k \geq |T|$, so the Boolean function $f_T$ representing $T$ has truth table $T0^{2^k-|T|}$. Suppose $C$ is a size-$s$ depth-$d$ circuit for $f_T$. Let $S$ be a substring of $T=t_1\cdots t_{2^k} \in \{0,1\}^{2^k}$, and let $A, B \in \{1,\ldots,2^k\}$ be such that $S = t_A \cdots t_B$. Let $\ell \leq k$ be a minimum integer which satisfies $2^{\ell} \geq B-A$. Our goal is to construct a small circuit $D$ with $\ell$ inputs and truth table $S 0^{2^{\ell}-(B-A)}$.

Let $x_1,\ldots,x_{2^{\ell}}$ be the $\ell$-bit strings in lex order. The desired circuit $D$ on input $x_i$ can be implemented as follows: Compute $i+A$. If $(i+A) \leq B$ then output $C(x_{i+A})$, otherwise output $0$. To bound the size of $D$, first note there are depth-$c'$ circuits of at most $c'\cdot n \log^{\star} n$ size for addition of two $n$-bit numbers~\cite{Chandra-Fortune-Lipton85}, and there are also well-known $O(n)$-size (unrestricted depth) circuits for addition. 

Therefore in depth-$c'$ and size at most $c' \cdot k \log^{\star} k$ we can, given input $x_i$ of length $\ell$, output $i+A$. Determining if $i \leq B-A$ can be done with $(c' \cdot \ell)$-size depth-$c'$ circuits. Therefore $D$ can either be implemented as a circuit of size at most $s + c'((k\log^{\star} k) +\ell+1)$ and depth $2c'+d$, or as an (unrestricted depth) circuit of size at most $s + c'(k+\ell+1)$. To complete the proof, let $c \geq 3c'$.
\end{proof}

We will use the following strong construction of pseudorandom generators from hard functions:

\begin{theorem}[Umans~\cite{Umans03}]\label{hardness-randomness} There is a universal constant $g$ and a function $G : \{0,1\}^{\star} \times \{0,1\}^{\star} \rightarrow \{0,1\}^{\star}$ such that, for all $s$ and $Y$ satisfying $CC(Y) \geq s^g$, and for all circuits $C$ of size $s$, \[\left|\Pr_{x \in \{0,1\}^{g \log |Y|}}[C(G(Y,x))=1] - \Pr_{x \in \{0,1\}^s}[C(x)=1]\right| < 1/s.\] Furthermore, $G$ is computable in $\poly(|Y|)$ time.
\end{theorem}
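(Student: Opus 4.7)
The plan is to follow the standard hardness-to-randomness paradigm, which builds a pseudorandom generator from a hard truth table in two stages: first amplify worst-case hardness into average-case hardness via error-correcting codes, and then feed the amplified function into a Nisan--Wigderson-style combinatorial generator. The novelty we must capture is Umans's near-optimal seed length $g \log |Y|$ (rather than $\poly(\log|Y|)$), which forces specific algebraic choices at each stage.

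First, starting from $Y$ with $CC(Y) \geq s^g$, I would form an encoding $Y' = \text{ECC}(Y)$ using a locally list-decodable code (e.g.\ derived from Reed--Muller codes over a suitable field). The contrapositive argument is standard: if a size-$\poly(s)$ circuit agreed with $Y'$ on a $1/2 + 1/s$ fraction of inputs, then plugging such a circuit into the local list-decoder and hard-wiring $O(\log s)$ bits of advice would yield a small circuit recovering $Y$ itself, contradicting $CC(Y) \geq s^g$. The code must be chosen so that $|Y'| = \poly(|Y|)$, so that the final seed length $O(\log|Y'|) = O(\log|Y|)$ is preserved.

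Second, I would apply an NW-type generator with output length $s$: fix a combinatorial design $S_1,\ldots,S_s \subseteq [\ell]$ where $\ell = g \log|Y|$, each $|S_i|$ equals the input length of $Y'$, and pairwise intersections are small. On seed $x \in \{0,1\}^\ell$, define $G(Y,x) := Y'(x|_{S_1}) \, Y'(x|_{S_2}) \cdots Y'(x|_{S_s})$. The analysis is the hybrid/next-bit-predictor argument: any size-$s$ distinguisher $C$ with advantage $1/s$ yields, after averaging, a hybrid index $i$ and a fixing of the non-overlapping coordinates, producing a small circuit that predicts $Y'(x|_{S_i})$ from $Y'(x|_{S_j})_{j<i}$ better than $1/2 + 1/s^2$. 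Because the overlap $|S_i \cap S_j|$ is small, each $Y'(x|_{S_j})$ can be hard-coded as a small lookup table, and the resulting predictor contradicts the average-case hardness of $Y'$ obtained in the first step.

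The main obstacle is the tight parameter $\ell = g\log|Y|$: the ordinary NW design requires $\ell = \poly(\log|Y|)$ to keep overlaps below $\log s$, which would blow up the seed. Overcoming this is the technical heart of Umans's construction --- it requires replacing the Nisan--Wigderson combinatorial design with an algebraically derandomized design (essentially a low-degree-polynomial-based set system) whose seed is logarithmic in the output length, and simultaneously using a code whose local list-decoder tolerates exactly this seed length. Once those algebraic ingredients are in place, polynomial-time computability of $G$ is immediate: encoding $Y \mapsto Y'$ runs in $\poly(|Y|)$ time, the design elements $S_i$ are generated in $\poly(\log|Y|)$ time per index from the seed, and assembling the output concatenates $s \leq \poly(|Y|)$ evaluations of $Y'$.
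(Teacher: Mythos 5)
The paper does not prove this statement at all: it is imported verbatim as a black box from Umans's ``Pseudo-random generators for all hardnesses'' and used only through its interface (seed length $g\log|Y|$, output length $s$, error $1/s$, $\poly(|Y|)$ computability). So the only question is whether your sketch would actually establish the theorem as stated, and there is a genuine gap at exactly the point you flag and then wave away.

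The two-stage route you describe (worst-case to average-case amplification via a locally list-decodable Reed--Muller-type code, followed by a Nisan--Wigderson generator analyzed by the hybrid/next-bit-predictor argument) is the Sudan--Trevisan--Vadhan construction, and it provably cannot give seed length $O(\log|Y|)$ in the regime where this paper actually invokes the theorem. In the applications here (e.g.\ Theorems~\ref{derandomizeACC1} and~\ref{derand}), $|Y|=2^{\Theta(n^{\eps})}$ while $s=\poly(n)$ or $n^{\polylog(n)}$, i.e.\ $\log s \ll \log|Y|$. The hybrid argument forces the pairwise intersections of the design to be $O(\log s)$ (otherwise the predictor's lookup tables of size $\sum_{j<i}2^{|S_i\cap S_j|}$ exceed the hardness budget $s^{O(1)}$), and any family of $s$ sets of size $m=\Theta(\log|Y|)$ with pairwise intersections $O(\log s)$ needs a universe of size roughly $m^2/\log s \gg g\log|Y|$ when $\log s = o(\log |Y|)$. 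No ``algebraically derandomized design'' circumventing this is known; the derandomized-design constructions (Impagliazzo--Shaltiel--Wigderson and relatives) achieve optimal seed length only at the high-hardness end $s = |Y|^{\Omega(1)}$. Umans's actual proof abandons the design framework entirely: it takes the low-degree extension $\hat{f}$ of the hard function and outputs $\hat{f}(x), \hat{f}(Ax), \hat{f}(A^2x), \ldots$ along the orbit of a generator $A$ of the multiplicative group of a field extension, and the analysis is a reconstruction argument via list-decoding along curves, not a hybrid argument over design sets. Since the entire content of the theorem --- seed length $g\log|Y|$ simultaneously for all hardness levels --- is deferred in your write-up to an unspecified and, as described, nonexistent combinatorial object, the proposal does not constitute a proof.
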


\paragraph{Natural Proofs} A \emph{property of Boolean functions ${\cal P}$} is a subset of the set of all Boolean functions. Let $\Gamma$ be a complexity class and let ${\cal C}$ be a circuit class (typically, $\Gamma=\P$ and ${\cal C}=\P/\poly$). A \emph{$\Gamma$-natural property useful against ${\cal C}$} is a property of Boolean functions ${\cal P}$ that satisfies the axioms:
\begin{compactitem}
\item (Constructivity) \indent~~~\indent ${\cal P}$ is decidable in $\Gamma$, 
\item (Largeness)  \indent~~~\indent for all $n$, ${\cal P}$ contains a $1/2^{O(n)}$ fraction of all $2^n$-bit strings,
\item (Usefulness)  \indent~~~\indent Let $f = \{f_n\}$ be a sequence of functions $\{f_n\}$ such that $f_n \in {\cal P}$ for all $n$. Then for all $k$ and infinitely many $n$, $f_n$ does not have $n^k$-size ${\cal C}$-circuits.\footnote{Note that some papers, including Razborov and Rudich~\cite{RazborovRudich97}, replace `infinitely many' with `almost every'; in this paper, we call that version \emph{ae-usefulness}.}
\end{compactitem}

Let $f = \{f_n : \{0,1\}^n \rightarrow \{0,1\}\}$ be a sequence of Boolean functions. A \emph{$\Gamma$-natural proof} that $f \not\in {\cal C}$ establishes the existence of a $\Gamma$-natural property ${\cal P}$ useful against ${\cal C}$ such that ${\cal P}(f_n)=1$ for all $n$. Razborov and Rudich proved that any $\P/\poly$-natural property useful against $\P/\poly$ could break all strong pseudorandom generator candidates in $\P/\poly$. More generally, $\P/\poly$-natural properties useful against typical ${\cal C} \subset \P/\poly$ imply there are no strong pseudorandom functions in ${\cal C}$ (but such functions are believed to exist, even when ${\cal C}=\TC^0$~\cite{Naor-Reingold04}).

The natural property framework (as originally defined) only applies to strings encoding Boolean functions, with lengths always equal to a power of two. In this paper, we also consider the obvious extension of the natural property concept to \emph{arbitrary} length strings. We call such objects \emph{natural algorithms}, to emphasize that they are best viewed as algorithms operating on inputs of arbitrary length. For a string $x$ of length $n$, let $\ell$ be the smallest integer such that $2^{\ell} \geq n$. Recall we defined the \emph{Boolean function corresponding to $x$} to be $f_x : \{0,1\}^{\ell} \rightarrow \{0,1\}$  with truth table $x0^{2^{\ell}-n}$. 

\begin{definition} \label{natalg} A \emph{$\Gamma$-natural algorithm $A$ useful against ${\cal C}$} satisfies the axioms:
\begin{compactitem}
\item (Constructivity) \indent~~~\indent $L(A)$ is in $\Gamma$, 
\item (Largeness)  \indent~~~\indent For all $n$, $A$ accepts at least a $1/n^{O(1)}$ fraction of all $n$-bit strings,
\item (Usefulness)  \indent~~~\indent There are infinitely many $n$ such that \begin{compactitem}
\item[(a)] $A$ accepts at least one string $x$ of length $n$, and 
\item[(b)] for all $y$ of length $n$ accepted by $A$, the function $f_y$ does not have $n^k$-size ${\cal C}$ circuits. 
\end{compactitem}
\end{compactitem}
\end{definition}

The above definition of natural algorithm does not radically change the notion of usefulness (due to Lemma~\ref{truthtable} in Section~\ref{tt-ckt}); that is, padding a modest number of zeroes onto a string does not significantly alter the circuit complexity of the function represented by the string. However, the generalization to arbitrary input lengths is very useful for connecting the ideas of natural proofs to derandomization and circuit lower bounds. 

\subsection{Related Work}

\paragraph{Equivalences between algorithms \& lower bounds} Some of our results are equivalences between algorithm design problems and circuit lower bounds. Equivalences between derandomization hypotheses and circuit lower bounds have been known for some time, and recently there has been an increase in results of this form. Nisan and Wigderson~\cite{Nisan-Wigderson94} famously proved an equivalence between ``approximate'' circuit lower bounds and the existence of pseudorandom generators. 
Impagliazzo and Wigderson~\cite{Impagliazzo-Wigderson01} prove that $\BPP\neq\EXP$ implies deterministic subexponential-time \emph{heuristic} algorithms for $\BPP$ (the simulation succeeds on most inputs drawn from an efficiently samplable distribution, for infinitely many input lengths). As the opposite direction can be shown to hold, this is actually an equivalence. (Impagliazzo, Kabanets, and Wigderson~\cite{IKW} proved another such equivalence, which we discuss below.) Two more recent examples are Jansen and Santhanam~\cite{JS12}, who give an equivalence between nontrivial algorithms for polynomial identity testing and lower bounds for the algebraic version of $\NEXP$, and Aydinlioglu and Van Melkebeek~\cite{AvM12}, who give an equivalence between $\Sigma_2$-simulations of Arthur-Merlin games and circuit lower bounds for $\Sigma_2 \EXP$.

\paragraph{Almost-Natural Proofs} Philosophically related to the present work, Chow~\cite{Chow} showed that if strong pseudorandom generators do exist, then there \emph{is} a proof of $\NP \not\subset \P/\poly$ that is \emph{almost-natural}, where the fraction of inputs in the largeness condition is relaxed from $1/2^{O(n)}$ to $1/2^{n^{\poly(\log n)}}$. Hence the Natural Proofs barrier was already known to be sensitive to relaxations of  largeness. To compare, we show that removing the largeness condition entirely results in a direct \emph{equivalence} between the existence of ``almost-natural'' properties and circuit lower bounds against $\NEXP$. Chow also proved relevant unconditional results: for example, there exists a $\SIZE[O(n)]$-natural property that is $1/2^{n^{(\log n)^{\omega(1)}}}$-large and useful against $\P/\poly$. Theorem~\ref{equiv} shows that if $\SIZE[O(n)]$ could be replaced with $\P$, then $\NEXP \not\subset \P/\poly$ follows.

\paragraph{The work of Impagliazzo-Kabanets-Wigderson (IKW)} Impagliazzo, Kabanets, and Wigderson~\cite{IKW} proved a theorem similar to one direction of Theorem~\ref{equiv}, showing that an $\NP$-natural property (without largeness) useful against $\P/\poly$ implies $\NEXP \not\subset \P/\poly$. 
Allender~\cite{Allender01} proved that there is a (non-large) property computable in $\NP$ useful against $\P/\poly$ if and only if there is such a property in uniform $\AC^0$. Hence his equivalence implies, at least for ${\cal C}=\P/\poly$, that the ``polynomial-time'' guarantee of Theorem~\ref{equiv} can be relaxed to ``$\AC^0$.''  

IKW~\cite{IKW} also give an equivalence between $\NEXP$ lower bounds and an algorithmic problem: $\NEXP \not\subset \P/\poly$ if and only if 
the acceptance probability of any circuit can be approximated, for infinitely many circuit sizes, in nondeterministic subexponential time with subpolynomial advice. The major differences between their equivalence and Theorems~\ref{equiv} and~\ref{equiv2} are in the underlying computational problems and the algorithmic guarantees: they study subexponential-time algorithms for approximating acceptance probabilities of circuits, while we study algorithms which estimate the circuit complexities of given functions. Moreover, their equivalence is less general with respect to circuit classes; for example, it is not known how to prove an analogue of their equivalence for $\ACC$. 

Since they proved that the existence of $\NP$-natural properties useful against $\P/\poly$ imply that $\NEXP \not\subset \P/\poly$, IKW posed the interesting open problem: 
\begin{center}
\emph{Does the existence of a $\P$-natural property useful against $\P/\poly$ imply $\EXP \not\subset \P/\poly$?}
\end{center}
Our work shows that the \emph{absence} of a $\P$-natural property useful against $\P/\poly$ implies new lower bounds: 
\begin{claim} \label{NPZPP} If there is no $\P$-natural property useful against $\P/\poly$, then $\NP \neq \ZPP$.\end{claim}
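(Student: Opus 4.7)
The plan is to prove the contrapositive: assume $\NP = \ZPP$, and produce a $\P$-natural property useful against $\P/\poly$, contradicting the hypothesis. The key lemma is Theorem~\ref{RPnaturalprop}, which converts any $\RP$-natural property useful against a class ${\cal C}$ into a $\P$-natural property useful against ${\cal C}$. So it suffices to exhibit an $\RP$-natural property useful against $\P/\poly$ under the assumption $\NP = \ZPP$.

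First I would invoke the standard $\coNP$-natural property useful against $\P/\poly$: on input a string $T$ of length $N = 2^n$ (interpreted as the truth table of an $n$-variable Boolean function $f_T$), accept iff $f_T$ has no circuit of size at most $s(n) := n^{\log n}$. This property is in $\coNP$, because its complement amounts to guessing a $\poly(s(n))$-bit circuit and verifying agreement with $T$ in time $\poly(N)$. Largeness holds by Shannon's counting argument: all but a $2^{-\Omega(2^n)}$ fraction of $n$-variable Boolean functions have circuit complexity $\Omega(2^n / n) \gg s(n)$. Usefulness against $\P/\poly$ holds because any $n^k$-size circuit has size below $s(n)$ for sufficiently large $n$, so every function in $\SIZE[n^k]$ is rejected.

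Under the assumption $\NP = \ZPP$, taking complements gives $\coNP = \ZPP$, since $\ZPP$ is closed under complement (swap the ``accept'' and ``reject'' outputs of a zero-error machine, leaving the ``\textbf{?}'' outputs unchanged). Since $\ZPP \subseteq \RP$, the property above lies in $\RP$ and is therefore an $\RP$-natural property useful against $\P/\poly$. Applying Theorem~\ref{RPnaturalprop} produces a $\P$-natural property useful against $\P/\poly$, contradicting the hypothesis; we conclude $\NP \neq \ZPP$.

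There is no real obstacle once Theorem~\ref{RPnaturalprop} is in hand: the claim is essentially a corollary that feeds the canonical high-complexity $\coNP$ property into the $\RP$-to-$\P$ conversion promised by that theorem. The only mild care needed is in selecting the size parameter $s(n)$, which must be simultaneously large enough to be useful against every fixed polynomial size and small enough that $\coNP$-verification runs in time polynomial in $N = 2^n$; any $s(n) = n^{\omega(1)}$ with $s(n) = 2^{o(n)}$ (such as $n^{\log n}$) suffices.
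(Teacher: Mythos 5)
Your proposal is correct and follows essentially the same route as the paper: exhibit the standard $\coNP$-natural property useful against $\P/\poly$, use $\NP=\ZPP$ (hence $\coNP=\ZPP\subseteq\RP$) to place it in $\RP$, and then invoke Theorem~\ref{RPnaturalprop} to derandomize it to a $\P$-natural property. The paper states this only as a one-sentence sketch; your write-up just fills in the routine details (the choice of size threshold $n^{\log n}$, Shannon counting for largeness, and closure of $\ZPP$ under complement), all of which check out.
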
 
\begin{proof} We prove the contrapositive. If $\NP=\ZPP$, then there is  a $\ZPP$-natural property useful against $\P/\poly$ (since there are trivially $\coNP$-natural properties). Theorem~\ref{RPnaturalprop} implies that there is also a $\P$-natural property useful against $\P/\poly$. 
\end{proof}

Therefore, an affirmative answer to IKW's problem would prove that $\EXP \neq \ZPP$:
\begin{theorem} If ($\P$-natural properties useful against $\P/\poly$ $\Rightarrow$ $\EXP \not\subset \P/\poly$) is true, then $\EXP\neq\ZPP$ unconditionally.
\end{theorem}
\begin{proof}\begin{align*}
\text{We have }\EXP = \ZPP &\Rightarrow \NP = \ZPP\\
 &\Rightarrow \text{there are $\P$-natural properties useful against $\P/\poly$, by Claim~\ref{NPZPP}}\\
 & \Rightarrow \EXP \not\subset \P/\poly, \text{ by assumption}\\
 & \Rightarrow \EXP \neq \ZPP.
\end{align*} 
 Thus $\EXP \neq \ZPP$.
\end{proof}

\section{NEXP Lower Bounds and Useful Properties}\label{nexp}

In this section, we prove equivalences between $\NEXP$ circuit lower bounds and some relaxations of natural properties:

\begin{reminder}{Theorem~\ref{equiv}} For all typical ${\cal C}$, $\NEXP \not\subset {\cal C}$ if and only if there is a polynomial-time computable property of Boolean functions that is $\iouseful$ against ${\cal C}$ with $O(\log n)$ bits of advice.
\end{reminder}

\begin{reminder}{Theorem~\ref{equiv2}} For all typical ${\cal C}$, $\NEXP \not\subset {\cal C}$ if and only if there is a polynomial-time algorithm that is $\iouseful$ against ${\cal C}$.
\end{reminder}

Our proofs of these theorems take several steps (they could be shortened, as in Oliveira's survey~\cite{Oliveira13}, but the overall proofs would be less informative). First, we give an equivalence between the existence of small circuits for $\NEXP$ and the existence of small circuits encoding \emph{witnesses} to $\NEXP$ languages (Theorem~\ref{cktswitnesses}), strengthening results of Impagliazzo, Kabanets, and Wigderson~\cite{IKW} (who essentially proved one direction of the equivalence). Second, we prove an equivalence between the \emph{non-existence} of size-$s(O(n))$ witness circuits for $\NEXP$ and the existence of a $\P$-constructive property $P_s$ useful against size $s(O(n))$ circuits (Theorem~\ref{witnessequiv}), for all circuit sizes $s(n)$. For each polynomial $s(n)=n^k$, this yields a (potentially different) useful property $P_s$; to get a single property that works for all polynomial circuit sizes, we show that there exists a ``universal'' $\P$-constructive property $P^{\star}$: if for every circuit size $s$ there is \emph{some} $\P$-constructive useful property $P_s$, this particular property $P^{\star}$ is useful for all $s$ (Theorem~\ref{universalprop}). 

We first need a definition of what it means for a language (and a complexity class) to have small circuits encoding witnesses. We restrict ourselves to ``good'' verifiers which examine witnesses of length equal to a power of two, so that witnesses can be viewed as truth tables of Boolean functions:

\begin{definition}\label{witnessckts2} Let $L \in \NTIME[t(n)]$ where $t(n)\geq n$ is constructible, and let ${\cal C}$ be a circuit class. An algorithm $V(x,y)$ is a \emph{good predicate for $L$} if
\begin{compactitem}
\item $V$ runs in time $O(\poly(|y|+t(|x|)))$ and 
\item for all $x \in \{0,1\}^{\star}$, $x \in L$ if and only if there is a string $y$ such that $|y|=2^{\ell} \leq O(t(|x|))$ for some $\ell$ (a \emph{witness for $x$}) such that $V(x,y)$ accepts.
\end{compactitem}
Let $L(V)$ denote the language accepted by $V$. 
\end{definition}

For every $L \in \NTIME[t(n)]$, basic complexity arguments show that there is at least one good predicate $V$ such that $L = L(V)$. Furthermore, for every reasonable verifier $V$ used to define an $\NEXP$ language $L$, there is an equivalent good predicate $V'$ (with possibly slightly longer witness lengths). Now we define what it means for a verifier to have small-circuit witnesses:

\begin{definition} Let $V$ be a good predicate. \emph{$V$ has ${\cal C}$ witnesses of size $s(n)$} if for all strings $x$, if $x \in L$ then there is a ${\cal C}$-circuit $C_x$ of size at most $s(n)$ such that $V(x,tt(C_x(\cdot)))$ accepts. 

\emph{$L$ has ${\cal C}$ witnesses of $s(n)$ size} if for all good predicates $V$ for $L$, $V$ has ${\cal C}$ witnesses of size at most $s(n)$.\footnote{{\bf N.B.} For circuit classes ${\cal C}$ where the depth $d$ and/or modulus $m$ may be bounded, we also quantify this $d$ and $m$ simultaneously with the size parameter $s(n)$. That is, the depth, size, and modulus parameters are chosen prior to choosing an input, as usual.} 

The class $\NTIME[t(n)]$ \emph{has ${\cal C}$ witnesses of size $s(n)$} if for every language $L \in \NTIME[t(n)]$, $L$ has ${\cal C}$ witnesses of at most $s(n)$ size. The meaning of $\NEXP$ having ${\cal C}$ witnesses is defined analogously.
\end{definition}

The above definition of circuit witnesses allows, for every $x$, a different circuit $C_x$ encoding a witness for $x$. We will also consider a stronger notion of \emph{oblivious} witnesses, where a single circuit $C_n$ encodes witnesses for all $x \in L$ of length $n$.

\begin{definition}\label{witnessckts} Let $L \in \NTIME[t(n)]$, and let ${\cal C}$ be a circuit class.  
\emph{$L$ has oblivious ${\cal C}$ witnesses of size $s(n)$} if for every good predicate $V$ for $L$, there is a ${\cal C}$ circuit family $\{C_n\}$ of size $s(n)$ such that for all $x \in \{0,1\}^{\star}$, if $x \in L$ then $V(x,tt(C_{|x|}(x,\cdot))$ accepts.\footnote{That is, the truth table of $C_{|x|}$ with $x$ hard-coded is a valid witness for $x$.}

$\NTIME[t(n)]$ \emph{has oblivious ${\cal C}$ witnesses} if every $L \in \NTIME[t(n)]$ has oblivious ${\cal C}$ witnesses. The meaning of $\NEXP$ having ${\cal C}$ witnesses is defined analogously.
\end{definition}

We establish an equivalence between the existence of small circuits for $\NEXP$ and small circuits for $\NEXP$ witnesses, in both the oblivious and normal senses. 

\begin{theorem} \label{cktswitnesses} Let ${\cal C}$ be a typical polynomial-size circuit class. The following are equivalent:
\begin{compactenum}
\item[(1)] $\NEXP \subset {\cal C}$
\item[(2)] $\NEXP$ has ${\cal C}$ witnesses 
\item[(3)] $\NEXP$ has oblivious ${\cal C}$ witnesses 
\end{compactenum}
\end{theorem}

\begin{proof} $(1) \Rightarrow (2)$ Impagliazzo, Kabanets, and Wigderson~\cite{IKW} proved this direction for ${\cal C}=\P/\poly$. The other cases of ${\cal C}$ were observed in prior work~\cite{Williams10,Williams11}.

$(2) \Rightarrow (3)$ Assume $\NEXP$ has ${\cal C}$ witnesses (implicitly, they are of polynomial size). Let $V(x,y)$ be a good predicate for an $\NEXP$ problem that (without loss of generality) accepts witnesses $y$ of length exactly $2^{p(|x|)}$, for some polynomial $p(n)$. We will construct a ${\cal C}$-circuit family $\{C_n\}$ such that $x \in L$ if and only if $V(x,tt(C_{|x|}(x,\cdot)))$ accepts (recall $tt(C_{|x|}(x,\cdot))$ is the truth table of the circuit $C_{|x|}$ with $x$ hard-coded and the remaining inputs are free). The idea is to construct a new verifier that ``merges'' witnesses for all inputs of a given length into a single witness. (This theme will reappear throughout the paper.)

Let $x_1,\ldots,x_{2^n}$ be the list of strings of length $n$ in lexicographical order. We define a new good predicate $V'$ which takes a pair $(x,q)$ where $x \in \{0,1\}^n$ and $q = 0,\ldots,2^n$, along with $y$ of length $2^{n+p(n)}$:

\smallskip

\begin{framed}

$V'((x,q),y)$: \emph{Accept} if and only if
\begin{compactitem}
\item $y = b_1 z_1\cdots b_{2^{|x|}} z_{2^{|x|}}$, where for all $i=1,\ldots,2^{|x|}$, $b_i \in \{0,1\}$ and $z_i \in \{0,1\}^{2^{p(|x|)}}$, 
\item exactly $q$ of the $b_i$'s are $1$, 
\item for all $i$'s such that $b_i = 1$, $V(x_i,z_i)$ accepts.
\item for all $i$'s such that $b_i = 0$, $z_i = 0^{2^{p(|x|)}}$. 
\end{compactitem}

\end{framed}

\smallskip

$V'$ runs in time exponential in $|x|$; by assumption, $V'$ has ${\cal C}$ witnesses of polynomial size. Observe that the computation of $V'$ does not depend on the input $x$, only the length $|x|$.

To obtain oblivious ${\cal C}$ witnesses for $V$,
let $q_n$ be the actual number of $x$ of length $n$ such that $x \in L(V)$. Then for every $y''$ such that $V'((x',q_n),y'')$ accepts, the string $y''$ must encode a valid witnesses $z_i$ for every $x_i \in L(V)$. By assumption, there is a circuit $C_{(x',q_n)}$ such that $C_{(x',q_n)}(i)$ outputs the $i$th bit of $y''$. This circuit $C_{(x',q_n)}$ yields the desired witness circuit: indeed, the circuit $D_n(x,j) := C_{(x',q_n)}(x\circ j)$ (where $x \circ j$ denotes the concatenation of $x$ and $j$ as binary strings) prints the $j$th bit of a valid witness for $x$ (or it prints $0$, if $x \notin L(V)$).

$(3) \Rightarrow (1)$ Assume $\NEXP$ has oblivious ${\cal C}$ witnesses. Let $M$ be a nondeterministic exponential-time machine. We want to give a ${\cal C}$-circuit family recognizing $L(M)$. First, we define a good  predicate $V_k$:

\smallskip

\begin{framed}

$V_k(x,y)$: For all circuits $C$ of size $|x|^k+k$,\\
\indent \indent \indent \indent \indent \indent If $tt(C)$ encodes an accepting computation history of $M(x)$, then\\
\indent \indent \indent \indent \indent \indent ~~~~\emph{accept} if and only if the first bit of $y$ is $1$. \\
\indent \indent\indent \indent ~~~ End for \\
\indent \indent \indent \indent  ~~~ \emph{Accept} if and only if the first bit of $y$ is $0$.

\end{framed}

By assumption, there is a $k$ such that accepting computation histories of $M$ on all length $n$ inputs can be encoded with a single ${\cal C}$-circuit family of size at most $n^k+k$. For such a $k$, $V_k$ will run in $2^{O(n^k)}$ time and will always find a circuit $C$ encoding an accepting computation history of $M(x)$, when $x \in L(M)$. Therefore, $V_k(x,y)$ accepts if and only if \[[(\text{first bit of }y=1) \wedge (x \in L(M))] \vee [(\text{first bit of }y=0) \wedge (x \notin L(M))].\] Now, because $V_k$ is an good predicate for the $\NEXP$ language $L(M)$, we can apply the assumption again to $V_k$ itself, meaning there is a ${\cal C}$-circuit family $\{C_n\}$ encoding witnesses for $V_k$ obliviously. This family can be easily used to compute $L(M)$: define the circuit $D_n$ for $n$-bit instances of $L(M)$ to output the first bit of the witness encoded by $C_n(x,\cdot)$.
\end{proof}

Next, we prove a tight relation between witnesses for $\NE$ computations and constructive useful properties. (This equivalence will be useful for proving new consequences later.) Here, the typical circuit class ${\cal C}$ does not have to be polynomial-size bounded, and the size function $s(n)$ quantified below can be any reasonable function in the range $[n^2,2^n/(2n)]$ (for example). We have two versions of the relation: one for constructive properties of Boolean functions (defined only on $2^n$-bit strings) and one for polynomial-time algorithms (running on strings of all possible lengths).

\begin{theorem} \label{witnessequiv} For all size functions $s(n) \in [n^2,2^n/(2n)]$, the following are equivalent: 
\begin{enumerate}
\item There is a $c \in (0,1]$ such that $\NTIME[2^{O(n)}]$ does not have $s(cn)$ size witness circuits from ${\cal C}$.
\item There is a $c \in (0,1]$ and a $\P/(\log n)$-computable property of Boolean functions that is $\iouseful$ against ${\cal C}$-circuits of size at most $s(cn)$.\footnote{For circuit classes ${\cal C}$ with depth bound $d$, this $d$ will be universally quantified after $c$. So for example, there is a $c$ such that for all constant $d$, $\NTIME[2^{O(n)}]$ does not have $s(cn)$ size depth-$d$ $\AC^0[6]$ witnesses, if and only if there is a $c$ such that for all $d$, there is a polynomial-time algorithm $\iouseful$ against depth-$d$ $\AC^0[6]$ circuits of size $s(cn)$.}
\item There is a $c \in (0,1]$ and a polynomial-time algorithm that is $\iouseful$ against ${\cal C}$-circuits of size at most $s(cn)$.\end{enumerate}
\end{theorem}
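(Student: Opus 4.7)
The plan is to prove both directions; $(2)\Rightarrow(1)$ is a short reduction, while $(1)\Rightarrow(2)$ requires building a single property capturing infinitely many useful input lengths.

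For $(2)\Rightarrow(1)$: given a $\P$-computable $P$ that is $\iouseful$ against $s(cn)$-size $\cal C$-circuits, I would define the $\NE$ predicate $V(1^n,y)$ accepting iff $|y|=2^n$ and $P(y)=1$. Since $P$ runs in $\poly(|y|)=2^{O(n)}$ time, $V\in\NTIME[2^{O(n)}]$. By io-usefulness, for infinitely many $n$ there is some $y\in\{0,1\}^{2^n}$ with $P(y)=1$ (hence $1^n\in L(V)$) yet no $\cal C$-circuit on $n$ inputs of size $s(cn)$ has truth table accepted by $P$; so $V$ admits no $s(cn)$-size $\cal C$-witness circuit at such $n$, yielding (1).

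For $(1)\Rightarrow(2)$: fix an $\NE$ predicate $V_0$ of running time $2^{kn}$ and an infinite sequence $x^*_1,x^*_2,\ldots$ of strings of increasing length such that each $x^*_i\in L(V_0)$ has no $\cal C$-witness circuit of size $s(c|x^*_i|)$ (quantifying depth and modulus universally after $c$ where applicable). Choose an efficient injective pairing $\text{pair}:\{\text{verifiers}\}\times\{0,1\}^*\to\N$ satisfying $\text{pair}(V_0,x)=k|x|+O(1)$, and define
\[P(T) \;:=\; \bigl[\,V(x,T[1..2^{k|x|}])=1\,\bigr] \text{ where } (V,x) := \text{pair}^{-1}(\log_2|T|),\]
returning $0$ if the decoding fails or if $|T|<2^{k|x|}$. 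Then $P$ is $\P$-computable. At $N_i := \text{pair}(V_0,x^*_i)$, the property $P$ on truth tables of length $2^{N_i}$ tests whether $T[1..2^{k|x^*_i|}]$ is a valid $V_0$-witness for $x^*_i$; padding any valid witness $y^*_i$ with zeros shows $P$ accepts some function. Conversely, if some $T\in\{0,1\}^{2^{N_i}}$ with $P(T)=1$ had a $\cal C$-circuit of size $s(c'N_i)$ for $c' := c/(2k)$, Lemma~\ref{truthtable} would give a $\cal C$-circuit of size $s(c'N_i)+O(N_i)^{1+o(1)}$ for the substring $T[1..2^{k|x^*_i|}]$, strictly below $s(c|x^*_i|)$ for large $i$ (since $N_i=k|x^*_i|+O(1)$), contradicting the assumption on $x^*_i$. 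Thus $P$ is $\iouseful$ against $s(c'n)$-size $\cal C$-circuits on the infinite set $\{N_i\}$, giving (2).

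The main obstacle is the forward direction: translating ``$V_0$ has bad inputs at infinitely many lengths'' into a single property with io-usefulness at infinitely many truth-table lengths. Naive choices like $P(T):=V_0(1^{\log|T|},T)$ fail because the bad inputs of $V_0$ need not be unary, and padding $V_0$ to a unary predicate by existentially quantifying over inputs only asserts that \emph{some} input of a given length is bad (not a fixed one). The pairing function circumvents this by using the truth-table length $2^{N_i}$ to unambiguously designate which bad pair $(V_0,x^*_i)$ the property is testing, so each $N_i$ pins down a single bad instance; the scaling $N_i=\Theta(|x^*_i|)$ is what permits Lemma~\ref{truthtable} to transfer the size lower bound cleanly, with only a constant-factor loss $c\mapsto c/(2k)$ that the theorem's formulation tolerates.
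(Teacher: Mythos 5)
Your $(2)\Rightarrow(1)$ direction is correct, and in fact cleaner than the paper's: the paper packs the accepted truth table into the $x$-th block of a witness for a non-unary language, whereas your unary predicate $V(1^n,y)$ iff $P(y)=1$ achieves the same end with no constant-factor loss. The problem is the $(1)\Rightarrow(2)$ direction, whose central device cannot exist. You require an \emph{injective} map $\text{pair}:\{\text{verifiers}\}\times\{0,1\}^{\star}\to\N$ with $\text{pair}(V_0,x)=k|x|+O(1)$. For each length $\ell$ there are $2^{\ell}$ strings $x$, all of which would have to map injectively into an interval of $O(1)$ integers around $k\ell$; this is information-theoretically impossible. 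Any genuinely injective pairing must produce $N_i$ of magnitude roughly $2^{|x^{\star}_i|}$, at which point the scaling $N_i=\Theta(|x^{\star}_i|)$ that your Lemma~\ref{truthtable} step relies on is destroyed: you would need usefulness against size $s(c'N_i)$ while only possessing hardness $s(c\log N_i)$. The underlying difficulty is real and you correctly identified it --- the bad inputs $x^{\star}_i$ are arbitrary strings that the property cannot know --- but ``designate the bad instance via the truth-table length'' cannot resolve it, because the truth-table length carries only $O(\log|x^{\star}_i|)$ bits of information.

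The paper escapes this by never trying to single out a bad instance. Its property interprets an input of length $n=2^{(c+1)\ell}+k+1$ as a concatenation $y_1\cdots y_{2^{\ell}}$ of candidate witnesses for \emph{all} $2^{\ell}$ inputs of length $\ell$, followed by padding $01^k$, and accepts iff exactly $k$ of the pairs $(x_i,y_i)$ are accepted by the predicate. The quantity that gets encoded into the input length is not the identity of a bad input but the \emph{count} $k\in\{0,\dots,2^{\ell}\}$ of yes-instances of length $\ell$ --- only $2^{\ell}+1$ values, which fit comfortably between consecutive powers $2^{(c+1)\ell}$. For the unique input length carrying the correct count, any accepted string must contain a valid witness for every yes-instance, in particular for the (unknown) bad one, and Proposition~\ref{truthtable0} transfers the hardness from that block to the whole string with only a constant-factor loss in $c$. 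Since usefulness is only required infinitely often, hitting one correct length per good $\ell$ suffices. You would need to replace your pairing construction with this merge-and-count mechanism (or something equivalent) for the forward direction to go through.
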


\begin{proof} 
$(1) \Rightarrow (2)$  Suppose $\NTIME[2^{O(n)}]$ does not have $s(c \cdot n)$-size witness ${\cal C}$-circuits for some $c \in (0,1]$. Then there must be a good predicate $V$ running in $\TIME[2^{d n}]$ for some $d \geq 1$ that does not have $s(c \cdot n)$-size witnesses. Hence there is an infinite subsequence of ``bad'' inputs $\{x'_i\}$ such that for all $i$, $x'_i \in L(V)$, but for every $y$ such that $V(x'_i,y)$ accepts, $y$ requires $s(c \cdot |x'_i|)$ size ${\cal C}$-circuits to encode.

To give a $\P/(\log n)$-computable property of Boolean functions ${\cal P}$ that is $\iouseful$ against ${\cal C}$-circuits, 
 simply define ${\cal P}(f)$ with advice $x'_i$ to be true if and only if $f : \{0,1\}^{d |x'_i|} \rightarrow \{0,1\}$ and $V(x'_i,f)$ accepts (when $f$ is construed as a $2^{d |x_i|}$-bit string). The property ${\cal P}$ is clearly implementable in $\P/(\log n)$ (the advice can be anything when no appropriate $x'_i$ exists), and for infinitely many input lengths $\ell$, there is a string $x'_i \in L(V)$ of length $\ell$ such that every string $y$ of length $2^{d\ell}$ accepted by $V(x'_i,y)$ requires $s(c \cdot \ell)$ size ${\cal C}$-circuits as a Boolean function. Hence for infinitely many $\ell$, the property ${\cal P}$ is true of at least one Boolean function on $d \ell$ bits, and is false for all functions on $d \ell$ bits with $s(c \cdot \ell)$ size ${\cal C}$-circuits, for some fixed $d$.

(2) $\Rightarrow$ (3) Let ${\cal P}$ be a property of Boolean functions with $\log n$ bits of advice, implemented by a polynomial-time algorithm $B(\cdot,\cdot)$, which is $\iouseful$ against ${\cal C}$-circuits of size $s(cn)$. We give a polynomial-time \emph{algorithm} $A$ with no advice that is $\iouseful$ against ${\cal C}$-circuits of size at most $s(cn)$.  
Again, let $x_1,\ldots,x_{2^{\ell}}$ be the $\ell$-bit strings in lexicographical order in the following.

\begin{framed}
$A(y)$: If $y$ does not have the form $z01^k$, with $|z|=2^{\ell}$, for some $k=0,\ldots,2^{\ell}-1$ and $\ell$, then \emph{reject}.\\
\indent \indent ~~~~~~ Otherwise, compute $k$ by counting the trailing $1$'s at the end of $y$.\\ 
\indent \indent ~~~~~~ \emph{Accept} if and only if $B(z,x_k)$ accepts.
\end{framed}

Let $\ell$ be an integer such that the property ${\cal P}$, with the appropriate advice $x_k$ of length $d\ell$, is $\iouseful$ for functions on $\ell$ bits. Then for every $(z,x_k)$ pair accepted by the algorithm $B$, the Boolean function defined by $z$ of length $2^{\ell}$ is not computable with $s(c \cdot \ell)$-size ${\cal C}$-circuits. 

Observe that, for each $\ell$, and every possible $k=0,\ldots,2^{\ell}-1$, there is \emph{exactly} one input length, namely $n=2^{\ell}+k+1$, for which the input $x_k$ of length $\ell$ will be considered, along with all possible $z$'s of length $2^{\ell}$. Therefore, on those infinitely many input lengths $n$ for which the corresponding input $x_k$ of length $\ell$ equals some bad input $x'_j$, $A$ is useful against size-$s(c \cdot \ell)$ circuits from ${\cal C}$. 

$(3) \Rightarrow (1)$  Let $A$ be a $\poly(n)$-time algorithm that is $\iouseful$ against $s(c \cdot n)$-size  ${\cal C}$ circuits for some fixed constant $c$. In the following, let $x_k$ be the $k$th string in the lexicographical ordering of strings of length $|x_k|$. Define a machine:

\begin{framed}
$M(x_k,T)$: If $|T| \neq 2^{|x_k|}$, \emph{reject}. If $k > |T|/2$, \emph{reject}. \\
\indent \indent \indent ~~~~~~ Otherwise, strip the last $k-1$ bits from $T$, obtaining a string $T'$ of length $2^{|x_k|}-(k-1)$.\\
\indent \indent \indent ~~~~~~ \emph{Accept} if and only if $A(T')$ accepts. 
\end{framed}

Now define $L = \{x ~|~ (\exists~T:|T|=2^{|x|})[M(x,T)~\text{accepts}]\}$. Note that $L \in \NTIME[2^{O(n)}]$, and that $M$ is a good verifier for $L$. By our assumption that $A$ is a polytime useful algorithm, there are infinitely many integers $\ell$ such that
\begin{compactitem}
\item[(1)] $A$ accepts at least one string $y_{\ell}$ of length $\ell$, and 
\item[(2)] if $A$ accepts $y_{\ell}$ of length $\ell$, then the Boolean function corresponding to $y_{\ell}$ (possibly obtained by padding zeroes to the end of $y_{\ell}$) has circuit complexity greater than $s(c \cdot \ell)$. 
\end{compactitem}
For each such $\ell$, let $j_{\ell}$ be the smallest integer such that $2^{j_{\ell}} \geq \ell$. Define $i_{\ell} := 2^{j_{\ell}}-\ell$; that is, $i_{\ell} \in \{0,1,\ldots,2^{j_{\ell}-1}-1\}$ equals the number of zeroes needed to pad $y_{\ell}$ so that the length becomes a power of two. In the following, let $x_1,\ldots,x_{2^{j_{\ell}}}$ be the list of all $j_{\ell}$-bit strings in lexicographical order.

Then, $M(x_{i_{\ell}},T)$ accepts if and only if $|T| = 2^{j_\ell}$, $T = y_{\ell}z$ for some $z$ with $|z|=i_{\ell}$, and $A(y_{\ell})$ accepts. For infinitely many $\ell$, each such $y_{\ell}$ has the property that $y_{\ell}0^{i_{\ell}}$ has circuit complexity greater than $s(c \cdot j_{\ell})$, therefore each of the strings $T$ such that $M(x_{i_{\ell}},T)$ accepts must have circuit complexity greater than $s(c \cdot j_{\ell})-j_{\ell}^{1+o(1)}$ as well, by Proposition~\ref{truthtable}. So there is an infinite sequence of inputs $\{x'_{\ell}\}$ such that all strings $x'_{\ell}$ are in $L$, and all witnesses of $x'_{\ell}$ have circuit complexity greater than $s(c \cdot |x'_{\ell}|)-|x'_{\ell}|^{1+o(1)}$. Hence $L$ is a language in $\NTIME[2^{O(n)}]$ that does not have $(s(c \cdot n)-n^{1+o(1)})$-size witnesses. Since $s(n) \geq n^2$, we have completed the proof of this direction.
\end{proof}

Using complete languages for $\NEXP$, one can obtain an explicit property in $\P$ that is useful against ${\cal C}$ circuits, if there is \emph{any} constructive useful property. This universality means that, if there are multiple constructive properties that are useful against various circuit size functions, then there is one constructive property useful against all these size functions.

\begin{theorem}\label{universalprop}
Let $\{s_k(n)\}$ be an infinite family of functions such that for all $k$, there is a polynomial-time algorithm $P_k$ (or, polynomial-time property of Boolean functions with $\log n$ bits of advice) that is $\iouseful$ against all ${\cal C}$-circuits of $s_k(n)$ size. Then there is a single $\P$-computable algorithm $P^{\star}$ such that, for all $k$, there is a $c > 0$ such that $P^{\star}$ is $\iouseful$ against all ${\cal C}$-circuits of $s_k(cn)$ size.\footnote{For depth-bounded/modulus-bounded circuit classes ${\cal C}$, an analogous statement holds where we quantify not only over $k$ but also the depth $d$ and modulus $m$.}
\end{theorem}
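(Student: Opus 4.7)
The strategy is to consolidate all of the witness lower bounds implied by the $P_k$'s into one $\NE$ predicate by means of a universal $\NE$-predicate, and then apply the $(1)\Rightarrow(2)$ direction of Theorem~\ref{witnessequiv} only once to extract $P^\star$. In outline: (i) fix a universal $\NE$ predicate $V^\star$; (ii) use the $(2)\Rightarrow(1)$ direction of Theorem~\ref{witnessequiv} on each $P_k$ to show that $V^\star$ itself lacks $s_k(c_k^\star n)$-size ${\cal C}$-witnesses for every $k$; (iii) apply the $(1)\Rightarrow(2)$ direction to $V^\star$ to obtain a single property $P^\star$ whose usefulness against each $s_k$ follows automatically.

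For step (i), define $V^\star$ on input $\langle i, x, 1^t\rangle$ of length $N$: a witness $w$ is accepted iff $w$ encodes a length-$2^t$ accepting computation history of the $i$-th NTM $M_i$ (from a fixed enumeration) on $x$. Clearly $V^\star \in \NTIME[2^{O(N)}]$. For step (ii), fix $k$. The $(2)\Rightarrow(1)$ direction of Theorem~\ref{witnessequiv} supplies a constant $d_k > 0$ and an $\NE$-predicate $V_k$, computed by some NTM $M_{i_k}$ in time $2^{\gamma_k n}$, such that an infinite family of accepted inputs forces every $V_k$-witness to require ${\cal C}$-circuits of size exceeding $s_k(d_k n)$. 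Consider $V^\star$-inputs of the form $\langle i_k, x, 1^{\gamma_k |x|}\rangle$; each has length $N = \alpha_k |x| + O(\log |x|)$ for a constant $\alpha_k = 1 + \gamma_k + O(1)$. A valid witness for $V^\star$ on such a tuple is a computation history of $M_{i_k}$ on $x$, which contains the $V_k$-witness for $x$ as a contiguous substring (the nondeterministic bits of the computation). By Proposition~\ref{truthtable0} (or Lemma~\ref{truthtable} in the depth-bounded case, at the cost of an $O(\log N)^{1+o(1)}$ additive overhead and a constant depth increase), any ${\cal C}$-witness of size $s$ for $V^\star$ on these inputs yields a ${\cal C}$-circuit of size $\leq s$ (respectively $\leq s + O(\log N)^{1+o(1)}$) computing the $V_k$-witness on $x$. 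Consequently, if $V^\star$ admitted $s_k(cN)$-size ${\cal C}$-witnesses with $c \alpha_k < d_k$, the lower bound on $V_k$ would be violated. Choosing $c_k^\star < d_k/\alpha_k$ yields: $V^\star$ has no $s_k(c_k^\star N)$-size ${\cal C}$-witnesses on an infinite subfamily of inputs.

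For step (iii), apply the $(1)\Rightarrow(2)$ direction of Theorem~\ref{witnessequiv} to $V^\star$, producing a single property $P^\star$. A close reading of that construction shows $P^\star$ depends only on $V^\star$ (its polynomial-time checker and runtime exponent), not on any particular size function; moreover, its internal analysis shows that $P^\star$ is $\iouseful$ against $s(\Theta(n))$-size ${\cal C}$-circuits for every size function $s$ such that $V^\star$ lacks $s(\Theta(N))$-size ${\cal C}$-witnesses on infinitely many inputs. Instantiating $s = s_k$ and invoking step (ii), we conclude that for each $k$ there is $c > 0$ such that $P^\star$ is $\iouseful$ against $s_k(cn)$-size ${\cal C}$-circuits, as required.

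The main obstacle is careful bookkeeping of constants and, for depth-bounded ${\cal C}$ such as $\AC^0[m]$, of the depth and modulus parameters. The reduction $V_k \to V^\star$ inflates input length by only the constant factor $\alpha_k$ depending on $\gamma_k$ and $|M_{i_k}|$, and Lemma~\ref{truthtable}'s additive $O(\log N)^{1+o(1)}$ overhead is negligible against any reasonable $s_k$. For depth-bounded ${\cal C}$, substring extraction via Lemma~\ref{truthtable} raises depth by only a universal additive constant, absorbed by the quantification order specified in the footnote (quantifying depth after $c$); the modulus is preserved verbatim. Beyond this accounting, the argument is a direct chaining of the two directions of Theorem~\ref{witnessequiv} through the universal predicate.
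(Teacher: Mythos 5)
Your proposal is correct and follows essentially the same route as the paper: both pass through a universal $\NE$ predicate (the paper uses {\sc SuccinctHalting}), transfer each $P_k$'s witness lower bound to it via Theorem~\ref{witnessequiv} together with the substring lemmas, and then read off a single property from that universal predicate. The only difference is presentational --- the paper defines the final property explicitly (the {\sc History} algorithm, which checks whether a prefix of the input encodes an accepting computation history for the instance coded by the input length) rather than re-invoking the $(1)\Rightarrow(2)$ construction of Theorem~\ref{witnessequiv} as a black box.
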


\begin{proof} Let $b(n)$ denote the $n$th string of $\{0,1\}^{\star}$ in lexicographical order. The {\sc Succinct Halting} problem consists of all triples $\langle M,x,b(n)\rangle$ such that the nondeterministic TM $M$ accepts $x$ within at most $n$ steps. 
Define the algorithm

\smallskip

\begin{framed}

{\sc History}$(y)$: Compute $z = b(|y|)$. If $z$ does not have the form $\langle M,x,b(n)\rangle$, reject. \emph{Accept} if and only if there is a prefix $y'$ of $y$ with length equal to a power of two such that $y'$ encodes an accepting computation history to $z \in \text{\sc SuccinctHalting}$.

\end{framed}

\smallskip

Observe that {\sc History} is implementable in polynomial time. The theorem follows from the claim:

\begin{claim} {\sc History} is $\iouseful$ against ${\cal C}$ circuits of size $s(cn)$ for some $c > 0$ if and only if there is some polynomial-time algorithm (possibly with $\log n$ bits of advice) that is $\iouseful$ against ${\cal C}$ circuits of size $s(n)$.
\end{claim}

To see why Theorem~\ref{universalprop} follows, observe that if we have infinitely many properties $P_k$, each of which is $\iouseful$ against ${\cal C}$ circuits of $s_k(n)$ size, then for every $k$, {\sc History} will be useful against $s_k(n)$ size ${\cal C}$ circuits.

One direction of the claim is obvious. For the other, suppose there is a polynomial-time property with $\log n$ bits of advice (or a polynomial-time algorithm) $\iouseful$ against ${\cal C}$-circuits of size $s(n)$. By Theorem~\ref{witnessequiv}, $\NTIME[2^{O(n)}]$ does not have $s(dn)$ size witnesses from ${\cal C}$ for some constant $d$. Let $V$ be a good predicate running in time $2^{kn}$ that does not have $s(dn)$-size ${\cal C}$ witnesses, and let $M$ be the corresponding nondeterministic machine which, on $x$, guesses a $y$ and accepts iff $V(x,y)$ accepts. It follows that there are infinitely many instances of {\sc SuccinctHalting} of the form $\langle M,x,b(2^{k|x|})\rangle$ that do not have ${\cal C}$ witnesses of size $s(cn)$ for some constant $c$.
Therefore, there are infinitely many $z_i=\langle M_i,x_i,n_i\rangle$ in $\text{\sc SuccinctHalting}$, where every accepting computation history $y'$ of $M_i(x_i)$ has greater than $s(cn)$-size ${\cal C}$-circuit complexity. Then for all $n$ such that $z_i = b(n)$ for some $i$, there is a $y$ of length $n$ such that {\sc History}$(y)$ accepts but for all $y''$ which encode functions with ${\cal C}$-circuits of $s(cn)$-size, {\sc History}$(y'')$ rejects (by Proposition~\ref{truthtable0}; note $y''$ has length equal to a power of two). Hence {\sc History} is $\iouseful$ against ${\cal C}$ circuits of size $s(cn)$. This concludes the proof of the theorem.
\end{proof}

Putting it all together, we obtain Theorem~\ref{equiv} and Theorem~\ref{equiv2}:

\begin{proofof}{Theorem~\ref{equiv} and Theorem~\ref{equiv2}} We prove Theorem~\ref{equiv2}; the proof of Theorem~\ref{equiv} is analogous, and we add parenthetical remarks below about how to prove it. Let ${\cal C}$ be a typical class (of polynomial-size circuits). By Theorem~\ref{cktswitnesses}, we have $\NEXP \not\subset {\cal C}$ if and only if for every $k$, $\NEXP$ does not have ${\cal C}$ witnesses of $n^k$ size.

Setting $s(n)=n^k$ for arbitrary $k$ in Theorem~\ref{witnessequiv}, we infer that for every $k$, we have the equivalence: $\NEXP$ does not have ${\cal C}$ witnesses of $n^k$ size if and only if there is $c > 0$ and a polynomial-time algorithm that is $\iouseful$ against all ${\cal C}$-circuits of size at most $(cn)^k$. (Note that Theorem~\ref{witnessequiv} also implies an equivalence between the above two conditions and the existence of a $\P/(\log n)$-computable property useful against ${\cal C}$-circuits of size $(cn)^k$.) 

Applying Theorem~\ref{universalprop}, we conclude that $\NEXP \not\subset {\cal C}$ if and only if there is a polynomial-time algorithm such that, for all $k$, it is $\iouseful$ against all ${\cal C}$-circuits of size at most $n^k$. 
\end{proofof}

\section{New ACC Lower Bounds}

In this section, we prove new lower bounds against $\ACC$. Our approach uses a new nondeterministic simulation of randomized computation (assuming small circuits for $\ACC$). The simulation itself uses several ingredients. First, we prove an exponential-size lower bound on the sizes of $\ACC$ circuits encoding \emph{witnesses} for $\NTIME[2^{O(n)}]$. (Recall that, for $\NEXP$, the best known $\ACC$ size lower bounds are only ``third-exponential''~\cite{Williams11}.) Second, we use the connection between witness size lower bounds and constructive useful properties of Theorem~\ref{witnessequiv}. The third ingredient is a well-known hardness-randomness connection: from a constructive useful property, we can nondeterministically guess a hard function, verify its hardness using the property, then use the hard function to construct a pseudorandom generator. (Here, we will need to make an assumption like $\P \subset \ACC$, as it is not known how to convert hardness into pseudorandomness in the $\ACC$ setting~\cite{Shaltiel-Viola10}.)

\subsection{Exponential Lower Bounds for Encoding NEXP Witnesses}

\begin{reminder}{Theorem~\ref{expwitnessACC}} For all $d$, $m$ there is an $\eps = 1/m^{\Theta(d)}$ such that $\NTIME[2^{O(n)}]$ does not have $2^{n^{\eps}}$-size $d$-depth $\AC^0[m]$ witnesses.\footnote{The $m^{\Theta(d)}$ factor arises from the ACC-SAT algorithm in~\cite{Williams11}, which in turn comes from Beigel and Tarui's simulation of $\ACC$ in SYM-AND~\cite{Beigel-Tarui}.}
\end{reminder}

The proof is quite related in structure to the $\NEXP \not\subset \ACC$ proof, so we will merely sketch how it is different.

\begin{proof} (Sketch) Assume $\NTIME[2^{O(n)}]$ has $2^{n^{\eps}}$-size $\ACC$ witnesses, for all $\eps > 0$. We will show that the earlier framework~\cite{Williams11} can be adapted to still establish a contradiction. First, observe the assumption implies that $\TIME[2^{O(n)}]$ has $2^{n^{\eps}}$-size $\ACC$ circuits. (The proof is similar to the proof of Theorem~\ref{equiv}: for any given exponential-time algorithm $A$, one can set up a good predicate that only accepts its input of length $n$ if the witness is a truth table for the $2^n$-bit function computed by $A$ on $n$-bit inputs. Then, a witness circuit for this $x$ is a circuit for the entire function on $n$ bits.) Therefore (by Lemma 3.1 in~\cite{Williams11}) there is a nondeterministic $2^{n-n^{\delta}}$ time algorithm $A$ (where $\delta$ depends on the depth and modulus of $\ACC$ circuits for Circuit Evaluation) that, given any circuit $C$ of size $n^{O(1)}$ and $n$ inputs, generates an equivalent $\ACC$ circuit $C'$ of $2^{n^{\eps}}$ size, for all $\eps > 0$. (More precisely, there is some computation path on which $A$ generates such a circuit, and on every path, it either prints such a circuit or outputs \emph{fail}.)

The rest of the proof is analogous to prior $\NEXP$ lower bounds~\cite{Williams11}; we sketch the details for completeness. Our goal is to simulate every $L \in \NTIME[2^n]$ in nondeterministic time $2^{n-n^{\delta}}$, which will contradict the nondeterministic time hierarchy of {\v{Z}}\'{a}k~\cite{Zak83}. Given an instance $x$ of $L$, we first reduce $L$ to the $\NEXP$-complete {\sc Succinct 3SAT} problem using an efficient polynomial-time reduction. This yields an unrestricted circuit $D$ of size $n^{O(1)}$ and $n+O(\log n)$ inputs with truth table equal to a formula $F$, such that $F$ is satisfiable if and only if $x \in L$. We run algorithm $A$ on $D$ to obtain an equivalent $2^{n^{\eps}}$ size $\ACC$ circuit $D'$. Then we guess a $2^{n^{\eps}}$ size $\ACC$ circuit $E$ with truth table equal to a satisfying assignment for $F$. (If $x \in L$, then such a circuit exists, by assumption.) By combining copies of $D'$ and copies of $E$, we can obtain a single $\ACC$ circuit $C$ with $n+O(\log n)$ inputs which is unsatisfiable if and only if $E$ encodes a satisfying assignment for $F$. By calling a nontrivial satisfiability algorithm for ACC, we get a nondeterministic $2^{n-n^{\delta}}$ time simulation for every $L$, a contradiction.
\end{proof}

Applying Theorem~\ref{witnessequiv} and its corollary to the lower bound of Theorem~\ref{expwitnessACC}, we can conclude:

\begin{corollary}\label{accuseful} For all $d,m$, there is an $\eps = 1/m^{\Theta(d)}$ and a $\P$-computable property that is $\iouseful$ against all depth-$d$ $\AC^0[m]$ circuits of size at most $2^{n^{\eps}}$.
\end{corollary}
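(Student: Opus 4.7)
The plan is to derive Corollary~\ref{accuseful} as an essentially immediate consequence of chaining Theorem~\ref{expwitnessACC} with the $(1)\Rightarrow(2)$ direction of Theorem~\ref{witnessequiv}. Given a fixed depth $d$ and modulus $m$, I first invoke Theorem~\ref{expwitnessACC} to obtain an exponent $\eps_0 = 1/m^{\Theta(d)}$ such that $\NTIME[2^{O(n)}]$ has no $2^{n^{\eps_0}}$-size depth-$d$ $\AC^0[m]$ witnesses. I then set $s(n) := 2^{n^{\eps_0}}$ and let ${\cal C}$ be the class of depth-$d$ $\AC^0[m]$ circuits, noting that $s$ lies comfortably in the ``reasonable'' size range $[n^2, 2^n/(2n)]$ for which Theorem~\ref{witnessequiv} is stated.

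Applying the $(1)\Rightarrow(2)$ direction of Theorem~\ref{witnessequiv} to this choice of $s$ and ${\cal C}$ yields a constant $c = c(d,m) \in (0,1]$ and a polynomial-time property that is $\iouseful$ against depth-$d$ $\AC^0[m]$ circuits of size $s(cn) = 2^{(cn)^{\eps_0}} = 2^{c^{\eps_0}\, n^{\eps_0}}$. The only cosmetic nuisance is that the size bound $s(cn)$ carries a multiplicative constant $c^{\eps_0}$ inside the exponent, rather than being of the clean form $2^{n^{\eps}}$ demanded by the corollary. To handle this, I pick any $\eps < \eps_0$ (say $\eps := \eps_0/2$): since $c^{\eps_0}$ is a positive constant and $n^{\eps}/n^{\eps_0} = n^{\eps - \eps_0} \to 0$, we have $n^{\eps} \leq c^{\eps_0}\, n^{\eps_0}$ for all sufficiently large $n$, so the property produced by Theorem~\ref{witnessequiv} is in particular $\iouseful$ against depth-$d$ $\AC^0[m]$ circuits of size $2^{n^{\eps}}$. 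Because $\eps_0/2$ is still of the form $1/m^{\Theta(d)}$, this delivers the statement of the corollary.

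I do not anticipate any substantive obstacle: both Theorem~\ref{expwitnessACC} and Theorem~\ref{witnessequiv} are already in hand, and the remaining work is glue together with the constant absorption just described. The footnote quantification in Theorem~\ref{witnessequiv} confirms that the depth and modulus of ${\cal C}$ are handled in a manner compatible with the ``for all $d,m$'' quantifier of Corollary~\ref{accuseful}. If one additionally wanted a \emph{single} polynomial-time property that simultaneously witnesses the conclusion for every pair $(d,m)$, one could invoke Theorem~\ref{universalprop} to merge the family of per-$(d,m)$ properties into one universal $\P$-computable property, but this is strictly stronger than what the stated corollary asks for.
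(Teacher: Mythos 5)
Your proposal is correct and follows exactly the paper's route: the corollary is obtained by feeding the witness lower bound of Theorem~\ref{expwitnessACC} into the $(1)\Rightarrow(2)$ direction of Theorem~\ref{witnessequiv} with $s(n)=2^{n^{\eps_0}}$, and the paper treats this as immediate. Your extra step absorbing the constant $c$ into a slightly smaller exponent $\eps$ is a harmless bit of bookkeeping that the paper leaves implicit.
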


Hence there is an efficient way of distinguishing some functions from all functions computable with subexponential-size $\ACC$ circuits. Let CAPP be the problem: {\em given a circuit $C$, output $p \in [0,1]$ satisfying} \[|Pr_{x}[C(x)=1] - p| < 1/6.\] That is, we wish to approximate the acceptance probability of $C$ to within $1/6$. We can give a quasi-polynomial time nondeterministic algorithm for CAPP, assuming $\P$ is in quasi-polynomial size $\ACC$.

\begin{theorem}\label{derandomizeACC1} Suppose $\P$ has $\ACC$ circuits of size $n^{\log n}$. Then there is a constant $c$ such that for infinitely many sizes $s$, CAPP for size $s$ circuits is computable in nondeterministic $2^{(\log s)^c}$ time.
\end{theorem}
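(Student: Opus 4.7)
The plan is a hardness-versus-randomness construction: combine Corollary~\ref{accuseful} with the hypothesis $\P \subseteq \ACC$ of size $n^{\log n}$ to nondeterministically manufacture a truth table hard against \emph{general} circuits, then feed it into Umans' generator (Theorem~\ref{hardness-randomness}) to deterministically approximate the acceptance probability of the input circuit.

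Setup and the $\ACC$-to-general embedding. Under the hypothesis, Circuit Value (which is in $\P$) has depth-$d_0$ $\AC^0[m_0]$ circuits of size $n^{\log n}$ for some fixed $d_0,m_0$. Fix these, let $\eps = 1/m_0^{\Theta(d_0)}$ be the constant furnished by Corollary~\ref{accuseful}, and let $P^\star$ be the $\P$-computable property that is $\iouseful$ against depth-$d_0$ $\AC^0[m_0]$ circuits of size $2^{n^\eps}$. The crucial use of the hypothesis is the following embedding: hardcoding an arbitrary general circuit of size $r$ on $n$ inputs into the $\ACC$ circuit for Circuit Value yields an equivalent depth-$d_0$ $\AC^0[m_0]$ circuit of size $(r+n)^{O(\log(r+n))}$. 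Contrapositively, any $T \in \{0,1\}^{2^n}$ with depth-$d_0$ $\AC^0[m_0]$-complexity at least $2^{n^\eps}$ must have general circuit complexity at least $2^{\Omega(n^{\eps/2})}$.

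Algorithm, correctness, and timing. On input a circuit $C$ of size $s$, set $n = \lceil (C_0 \log s)^{2/\eps}\rceil$ for a sufficiently large constant $C_0$ (absorbing the Umans constant $g$ and the hidden constant in the embedding), nondeterministically guess $T \in \{0,1\}^{2^n}$, deterministically verify $P^\star(T)=1$ in time $\poly(2^n)$, and if so output the fraction of seeds $z \in \{0,1\}^{gn}$ for which $C(G(T,z))=1$. Whenever $P^\star(T)=1$ at a ``good'' length $n$, the embedding guarantees that $T$ has general circuit complexity $\geq 2^{\Omega(n^{\eps/2})} \geq s^g$, so Umans' generator fools $C$ to within error $1/s < 1/6$ and the output is a valid CAPP approximation. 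The cost is dominated by enumerating the $2^{gn}$ seeds at $\poly(2^n,s)$ cost each, totalling $2^{O(n)} = 2^{O((\log s)^{2/\eps})}$, so any $c > 2/\eps$ suffices. Because $P^\star$ is only $\iouseful$, correct accepting paths are guaranteed only on the infinitely many good $n$; under the map $n = \Theta((\log s)^{2/\eps})$ these translate into the required infinitely many good sizes $s$.

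Main obstacle. Everything else is routine bookkeeping; the single delicate step is the $\ACC$-to-general embedding used to upgrade $P^\star$'s hardness, and this is precisely where the hypothesis must be invoked. Without it, Corollary~\ref{accuseful} only yields hardness against $\ACC$, whereas Umans' PRG (like every Nisan--Wigderson-style construction) requires hardness against the very same unrestricted class of circuits one wishes to fool. It is this mismatch that forces the statement to assume the uniform-looking hypothesis ``$\P$ has $n^{\log n}$-size $\ACC$ circuits'' rather than a bare $\ACC$ size lower bound.
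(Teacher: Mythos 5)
Your overall architecture (useful property $\to$ nondeterministically guessed hard truth table $\to$ $\ACC$-to-general complexity transfer via the Circuit Value circuits $\to$ Umans' generator) is the paper's, and your embedding step, parameter settings, and timing analysis all match the actual proof. But there is a genuine gap in the very first move: you cannot use Corollary~\ref{accuseful} as a black box in the way you do. The property $P^{\star}$ produced by Theorem~\ref{witnessequiv} (and hence by Corollary~\ref{accuseful}) is an algorithm on \emph{strings} whose ``good'' input lengths have the special form $N = 2^{(c+1)\ell}+k_\ell+1$, where $k_\ell$ is the number of $\ell$-bit inputs accepted by the underlying $\NTIME[2^{O(n)}]$ verifier --- census information that is smuggled into the input length and is not computable from the circuit size $s$. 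Your algorithm guesses $T \in \{0,1\}^{2^n}$ with $n$ determined by $s$; at length exactly $2^n$ the property's format check fails (or, at nearby lengths with the wrong $k$, the property accepts strings of low circuit complexity, e.g.\ the near-all-zeroes string when $k=0$). So either completeness fails (no accepting path at the lengths your reduction targets) or soundness fails (accepting paths carrying easy $T$, hence a broken generator and a wrong CAPP output), and guessing the length $N$ nondeterministically does not repair soundness.

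The missing idea is precisely what the paper's Corollary~\ref{unary} supplies: under the hypothesis $\P \subseteq \ACC[n^{\log n}]$, one re-runs the witness lower bound of Theorem~\ref{expwitnessACC} for \emph{unary} languages (using the unary form of {\v{Z}}\'{a}k's hierarchy), obtaining a single predicate $V(1^n,\cdot)$ on $2^{O(n)}$-bit witnesses such that for infinitely many $n$, every accepted witness is hard. Because the hard instances are now canonically indexed by $n$ itself, the map $s \mapsto n = (g\log s)^{2/\eps} \mapsto$ ``run $V(1^n,\cdot)$ on the guessed $Y$'' is well defined for every $s$ and is simultaneously complete and sound on the infinitely many good $n$. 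Note the tell-tale asymmetry: Corollary~\ref{accuseful} is unconditional, while Corollary~\ref{unary} genuinely needs the theorem's hypothesis --- so the hypothesis is used \emph{twice} in the real proof (once for the unary witness lower bound, once for the $\ACC$-to-general embedding), whereas your write-up invokes it only for the embedding. Identifying the embedding as ``the single delicate step'' therefore overlooks the second one; with Corollary~\ref{unary} in hand, the rest of your argument goes through verbatim.
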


Theorem~\ref{derandomizeACC1} is a surprisingly strong consequence: given that $\NEXP \not\subset \ACC$, one would expect only a $2^{O(n^{\eps})}$-time algorithm for CAPP, with $n^{\eps}$ bits of advice. (Indeed, from the results of IKW~\cite{IKW} one can derive such an algorithm, assuming $\P \subseteq \ACC$.) 

Before proving Theorem~\ref{derandomizeACC1}, we first extend Theorem~\ref{expwitnessACC} a little bit.
Recall a \emph{unary language} is a subset of $\{1^n~|~n \in \N\} \subseteq \{0,1\}^{\star}$. The proof of Theorem~\ref{expwitnessACC} also has the following consequence:

\begin{corollary}\label{unary} If $\P$ has $\ACC$ circuits of $n^{\log n}$ size, then for all $d$, $m$ there is an $\eps$ such that there are unary languages in $\NTIME[2^n]$ without $2^{n^{\eps}}$-size $d$-depth $\AC^0[m]$ witnesses.
\end{corollary}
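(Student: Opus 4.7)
The plan is to adapt the proof of Theorem~\ref{expwitnessACC} to the unary setting, replacing the binary nondeterministic time hierarchy with its unary counterpart. I prove the contrapositive: suppose $\P$ has $\ACC$ circuits of size $n^{\log n}$, yet for some fixed depth $d$ and modulus $m$, every unary $L \in \NTIME[2^n]$ admits depth-$d$ $\AC^0[m]$ witnesses of size $2^{n^\eps}$ for all $\eps > 0$. I will show that every unary language in $\NTIME[2^n]$ is also in $\NTIME[2^{n - n^\delta}]$ for some $\delta = \delta(d,m) > 0$, contradicting a tight unary nondeterministic time hierarchy.

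The first step is to show that the unary hypothesis alone already forces $\TIME[2^{O(n)}] \subseteq \AC^0[m]$ of depth $d$ and size $2^{n^{o(1)}}$. For any $L \in \TIME[2^{O(n)}]$, define the unary predicate $V_L(1^n, y)$ that accepts iff $y$ is the length-$2^n$ truth table of $L$ on $n$-bit inputs; checking this takes $2^{O(n)}$ time, so after a constant-factor padding of the unary input length, $L(V_L)$ lies in unary $\NTIME[2^n]$. Its unique accepting witness is the truth table of $L|_n$, so by the unary hypothesis this truth table has a depth-$d$ $\AC^0[m]$ circuit of size $2^{n^{o(1)}}$, which is exactly a small $\AC^0[m]$ circuit family for $L$.

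The second step applies the framework of~\cite{Williams11}. Using the derived $\TIME[2^{O(n)}] \subseteq \AC^0[m]$ (with $\P \subseteq \ACC[n^{\log n}]$ providing the small $\ACC$ circuits for the internal polynomial-time bookkeeping), Lemma~3.1 of~\cite{Williams11} produces a nondeterministic $2^{n - n^\delta}$-time algorithm $A$ that, on any $\poly(n)$-size circuit with $n$ inputs, outputs an equivalent depth-$O(d)$ $\AC^0[m]$ circuit of size $2^{n^{o(1)}}$. Given any unary $L \in \NTIME[2^n]$ with predicate $V$, I simulate it on input $1^n$ as follows: apply Cook--Levin to produce a $\poly(n)$-size circuit $D_n$ encoding a 3-SAT formula $F_n$ that is satisfiable iff $1^n \in L$; convert $D_n$ to $D'_n \in \AC^0[m]$ via $A$; nondeterministically guess an $\AC^0[m]$ circuit $E$ of depth $d$ and size $2^{n^{o(1)}}$ whose truth table encodes a satisfying assignment of $F_n$. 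Such an $E$ exists whenever $1^n \in L$: by the unary hypothesis applied to $V$ itself, there is a $V$-witness $y$ with a small $\AC^0[m]$ encoding $E_y$; the Cook--Levin tableau computed from $y$ is a satisfying assignment of $F_n$, and since this tableau is a deterministic function of $y$ and $n$ computable in $\TIME[2^{O(n)}]$ given oracle access to $y$, Step~1 gives it a small $\AC^0[m]$ representation once $E_y$ is substituted. Combining $D'_n$ and $E$ yields a single $\AC^0[m]$ circuit $C_n$ of size $2^{n^{o(1)}}$ with $n + O(\log n)$ inputs such that $C_n$ is unsatisfiable iff $E$ is a valid assignment; calling the nontrivial $\ACC$-satisfiability algorithm of~\cite{Williams11} on $C_n$ in $2^n / n^{\omega(1)}$ time completes the simulation, which runs nondeterministically in $2^{n - n^\delta}$ time overall.

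The main obstacle is verifying a tight \emph{unary} nondeterministic time hierarchy: specifically, that for every $\delta > 0$ there is a unary language in $\NTIME[2^n]$ that is not in $\NTIME[2^{n - n^\delta}]$. A Seiferas--Fischer--Meyer style diagonalization carries over to tally languages with some care, and this is the delicate technical point. A secondary concern is tracking the $\AC^0[m]$ depth parameter through the composition of Steps~1 and 2 and the combining of $D'_n$ with $E$, to ensure the final circuits stay within a constant multiple of $d$ in depth; propagating these constants quantitatively yields the $\eps = 1/m^{\Theta(d)}$ scaling coming from the $\ACC$-SAT algorithm of~\cite{Williams11}, exactly as in Theorem~\ref{expwitnessACC}.
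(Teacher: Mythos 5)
Your overall strategy is the paper's: assume the witness upper bound for unary $\NTIME[2^n]$ languages, use $\P \subseteq \ACC$-$\SIZE[n^{\log n}]$ to convert the {\sc Succinct3SAT} circuit into $\ACC$ form, guess a small $\ACC$ circuit encoding a satisfying assignment, and run the $\ACC$-SAT algorithm to contradict a unary nondeterministic time hierarchy. The hierarchy you flag as delicate is exactly what the paper leans on: {\v{Z}}\'{a}k's theorem already yields a unary $L \in \NTIME[2^n]\setminus\NTIME[2^n/n^{10}]$, which is enough since the simulation runs in $2^{n-n^{\delta}} \leq 2^n/n^{10}$ time; so that concern is resolved by citation rather than new work. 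Your Step~1 (deriving subexponential $\AC^0[m]$ circuits for $\TIME[2^{O(n)}]$ from the unary witness hypothesis) is correct but unnecessary for the circuit-conversion step, since the explicit hypothesis $\P \subseteq \ACC$-$\SIZE[n^{\log n}]$ already powers Lemma~3.1 of~\cite{Williams11}.

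The genuine problem is the step where you argue that a witness $y$ for an \emph{arbitrary} predicate $V$, once given a small encoding $E_y$, yields a small $\AC^0[m]$ circuit for the satisfying assignment (the Cook--Levin tableau) of $F_n$ ``since the tableau is computable in $\TIME[2^{O(n)}]$ given oracle access to $y$, so Step~1 applies.'' Step~1 gives circuits for oracle-free $\TIME[2^{O(n)}]$ computations, with size bounded as $2^{N^{o(1)}}$ in the \emph{input length} $N$ of that computation. To feed $E_y$ (or $y$ itself) to such a computation you must place its description on the input tape, making $N = 2^{\Theta(n^{\eps})}$, and the resulting size bound $2^{N^{o(1)}}$ is then doubly exponential in $n^{\eps}$ --- useless. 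Nor does Step~1 produce oracle circuits into which $E_y$ could be ``substituted.'' The paper sidesteps this entirely: it chooses the predicate for $L$ to be one that \emph{directly} verifies that its witness is a satisfying assignment of the {\sc Succinct3SAT} instance (equivalently, one whose witness is the pair consisting of $y$ and the full tableau). The unary witness hypothesis, which quantifies over \emph{all} predicates for $L$, then hands you a $2^{n^{\eps}}$-size circuit for the satisfying assignment with no composition argument at all. Your proof is repaired by making that same move; as written, the composition step fails.
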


\begin{proof} The tight nondeterministic time hierarchy of {\v{Z}}\'{a}k~\cite{Zak83} holds also for unary languages. That is, there is a unary $L \in \NTIME[2^n]\setminus \NTIME[2^n/n^{10}]$. So assume (for a contradiction to this hierarchy) that all unary languages in $\NTIME[2^n]$ have $2^{n^{\eps}}$ size witnesses for every $\eps > 0$. This says that, for every good predicate $V$ for every unary language $L \in \NTIME[2^n]$, every $1^n \in L$ has a witness $y$ with $2^{n^{\eps}}$-size circuit complexity. Choose a predicate $V$ that reduces a given unary $L$ to a {\sc Succinct3SAT} instance, then checks that its witness is a SAT assignment to the instance; by assumption, such SAT assignments must have circuit complexity at most $2^{n^{\eps}}$, for almost all $n$. By guessing such a circuit and assuming $\P$ has $n^{\log n}$-size $\ACC$ circuits, the remainder of the proof of Theorem~\ref{expwitnessACC} goes through: the simulation of arbitrary $L$ in $\NTIME[2^{n-n^{\delta}}]$ works and yields the contradiction.
\end{proof}

Corollary~\ref{unary} allows us to strengthen Corollary~\ref{accuseful}, to yield a ``nondeterministically constructive'' and useful property against $\ACC$. Informally, having a \emph{unary} language without small witness circuits allows us to obtain a derandomization \emph{without} advice, as there is no need to store a ``hard'' input for a given input length. In particular, the unconditional lower bound of Corollary~\ref{unary} can be used to build an efficient ``hardness test'' for $\ACC$ circuit complexity, which is then used with a pseudorandom generator to solve CAPP by guessing a hard function and verifying it with the test. This basic idea seems to have originated with~\cite{Kabanets-Cai00,Kabanets01}.

\begin{proofof}{Theorem~\ref{derandomizeACC1}} First we claim that, if $\P$ has $n^{\log n}$ size $\ACC$ circuits, then there is a $d^{\star}$ and $m^{\star}$ such that every Boolean function $f$ with unrestricted circuits of size $S$ has depth-$d^{\star}$ $\AC^0[m^{\star}]$ circuits of size at most $S^{\log S}$. To see this, consider the {\sc Circuit Evaluation} problem: {\em given a circuit $C$ and an input $x$, does $C(x)=1$?} Assuming $\P$ is in $n^{\log n}$ $\ACC$, this problem has a depth-$d^{\star}$ $AC^0[m^{\star}]$ circuit family $\{D_n\}$ of $n^{\log n}$ size, for some fixed $d^{\star}$ and $m^{\star}$. Therefore, by plugging in the description of any circuit $C$ of size $S$ into the input of the appropriate $\ACC$ circuit $D_{O(S)}$, we get an $\ACC$ circuit of fixed modulus and depth that is equivalent to $C$ and has size $O(S^{\log S})$.

By Corollary~\ref{unary}, there is an $\eps$ and a unary $L$ in $\NTIME[2^n]$ that does not have $2^{n^{\eps}}$ size $\AC^0[m^{\star}]$ witnesses of depth $d^{\star}$. By the previous paragraph (and assuming $\P$ is in $n^{\log n}$-size $\ACC$), it follows that $L$ does not have witnesses encoded with $2^{n^{\eps/2}}$-size unrestricted circuits. (Letting $S^{\log S} = 2^{n^{\eps}}$, we find that $S = 2^{n^{\eps/2}}$.) Let $V$ be a good predicate for $L$ that lacks such witnesses, and let $g$ be the constant in the pseudorandom generator of Theorem~\ref{hardness-randomness}. Consider the nondeterministic algorithm $P$ which, on input $1^s$, sets $n = (g\log s)^{2/\eps}$, guesses a string $Y$ of $2^n$ length, and outputs $Y$ if $V(1^n,Y)$ accepts (otherwise, $P$ outputs \emph{reject}). For infinitely many $s$, $P(1^s)$ nondeterministically generates strings $Y$ of $2^{(g\log s)^{2/\eps}}$ length that do not have $s^g = 2^{n^{\eps/2}}$ size circuits: as there is an infinite set of $\{n_i\}$ such that all witnesses to $1^{n_i}$ have circuit complexity at least $2^{(n_i)^{\eps/2}}$, there is an infinite set $\{s_i\}$ such that $P(1^{s_i})$ computes $n_i = (g\log s_i)^{2/\eps}$ and generates $Y$ which does not have $(s_i)^g = 2^{(n_i)^{\eps/2}}$ size circuits.

Given a circuit $C$ of size $s$, our nondeterministic simulation runs $P$ to generate $Y$. (If $P$ rejects, the simulation \emph{rejects}.) Applying Theorem~\ref{hardness-randomness}, $Y$ can be used to construct a $\poly(|Y|)$-time PRG $G(Y,\cdot) : \{0,1\}^{g\log |Y|} \rightarrow \{0,1\}^{s}$ which fools circuits of size $s$. By trying all $|Y|^g \leq 2^{O((\log s)^{2/\eps})}$ inputs to $G_Y$, we can approximate the acceptance probability of a size-$s$ circuit in $2^{O((\log s)^{2/\eps})}$ time. As $\eps$ depended only on $d^{\star}$ and $m^{\star}$, which are both constants, we can set $c = 3/\eps$ to complete the proof.
\end{proofof}

\subsection{A Slightly Stronger ACC Lower Bound}

Now we turn to proving lower bounds for the classes $\NE \cap \io\coNE$ and $\NE/1 \cap \coNE/1$. We will need an implication between circuits and Merlin-Arthur simulations that extends Babai-Fortnow-Nisan-Wigderson~\cite{Babai-Fortnow-Nisan-Wigderson93}:

\begin{theorem}[Lemma 8,~\cite{MiltersenVW99}] \label{MVW} Let $g(n) > 2^n$ and $s(n) \geq n$ be increasing and time constructible. There is a constant $c > 1$ such that $\TIME[2^{O(n)}] \subseteq \SIZE[s(n)] \Longrightarrow \TIME[g(n)] \subseteq {\sf MATIME}[s(3 \log g(n))^c]$.
\end{theorem}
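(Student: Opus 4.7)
The plan is to adapt the Babai-Fortnow-Nisan-Wigderson argument ($\EXP\subseteq\P/\poly \Rightarrow \EXP=\MA$) while tracking the circuit-size parameter $s$ explicitly, and then to feed $\TIME[g(n)]$ into this framework by padding. The BFNW argument rests on an $\EXP$-complete problem $\Pi$ (typically a low-degree arithmetization of a Turing-machine tableau over a field of polynomial size, or an analogous succinctly specified permanent problem) that is simultaneously \emph{downward self-reducible} and \emph{random self-reducible}. Under the hypothesis $\TIME[2^{O(n)}]\subseteq\SIZE[s(n)]$, every ``level'' of $\Pi$ of parameter $k$ has a Boolean circuit of size $O(s(O(k)))$; the two self-reducibility properties together then power an $\MA$ protocol in which Merlin sends claimed circuits for the relevant levels and Arthur verifies them level-by-level.

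First I would make BFNW quantitative. To convince Arthur of the value of $\Pi$ at parameter $k$, Merlin sends circuits $C_1,\ldots,C_k$ purportedly computing $\Pi$ at parameters $1,\ldots,k$. Arthur directly checks the base case $C_1$ by brute force, and then for each $j$ uses random self-reducibility to check that $C_{j+1}$ is consistent with $C_j$: sample random field points, express the value of $\Pi$ at parameter $j{+}1$ there as a combination of values of $\Pi$ at parameter $j$, recompute the combination using $C_j$, and compare against $C_{j+1}$. A standard Schwartz-Zippel analysis over a field of size $\poly(s(O(k)))$ shows that an incorrect claimed circuit is caught with constant probability, amplifiable to $1/3$ error by a constant number of repetitions. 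The total time is $\poly(k,s(O(k)))$, i.e.\ $s(O(k))^{O(1)}$.

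Next I would apply this to $L\in\TIME[g(n)]$ by padding. Given $x$ of length $n$, pad $x$ so that the padded instance has length $N$ with $N\le 3\log g(n)$ (absorbing small constants from the time-to-tableau reduction into the factor $3$), producing a padded language $L'$ decidable in time $2^{O(N)}$ on inputs of length $N$. The standard reduction from $\TIME[2^{O(N)}]$ to $\Pi$ produces a $\Pi$-instance whose parameter is $O(N)$. By the hypothesis, each level of $\Pi$ up to this parameter admits a Boolean circuit of size at most $s(3\log g(n))$. Running the quantitative BFNW protocol above then yields an $\MA$ simulation of $L$ in time $s(3\log g(n))^{O(1)}=s(3\log g(n))^c$, which is the desired conclusion.

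The main obstacle is purely quantitative bookkeeping: one has to choose the $\EXP$-complete arithmetization so that the reduction from $\TIME[g(n)]$-tableaus into $\Pi$-instances costs at most a factor of $3$ in the parameter (rather than a larger constant), and choose the underlying field so that it is large enough, $\poly(s(3\log g(n)))$, for soundness to hold yet small enough that Arthur's total time fits inside $s(3\log g(n))^c$. Both of these are resolved by standard tight analyses of low-degree extensions of Turing-machine tableaus over carefully chosen prime fields; beyond that, the argument is a direct parameterization of BFNW.
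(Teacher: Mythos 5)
Your high-level plan --- make the BFNW collapse quantitative in the circuit-size parameter and then pad $\TIME[g(n)]$ down to instances of length about $3\log g(n)$ --- is exactly how this lemma is obtained (the paper only cites it, remarking that it follows from the proof of $\EXP \subset \P/\poly \Rightarrow \EXP = \MA$ combined with padding), and your padding step and final bookkeeping are fine. The gap is in the engine you propose for the $\MA$ protocol: an $\EXP$-complete problem $\Pi$ that is simultaneously downward self-reducible and random self-reducible does not exist unconditionally. Any downward self-reducible function can be evaluated by a depth-$n$ recursion in polynomial space, so it lies in $\PSPACE$; an $\EXP$-complete such $\Pi$ would give $\EXP=\PSPACE$. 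The objects you gesture at (arithmetized tableaus, the permanent) have both properties only at the $\PSPACE$ or $\P^{\#\P}$ level, and $\TIME[g(n)]$ padded down to length $N=3\log g(n)$ becomes $\TIME[2^{O(N)}]$, which is not known to reduce to either.

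The mechanism actually used in BFNW and in MVW's Lemma~8 is different: by Babai--Fortnow--Lund, every language in $\TIME[2^{O(N)}]$ has a two-prover interactive proof (equivalently, a PCP) in which the honest provers' answer functions are themselves computable in $\TIME[2^{O(N)}]$. The hypothesis $\TIME[2^{O(n)}]\subseteq\SIZE[s(n)]$ then directly supplies $s(O(N))$-size circuits for those prover strategies --- no downward self-reduction and no level-by-level consistency check. Merlin sends the circuits, Arthur runs the polynomial-time randomized verifier against them; completeness uses the hypothesized circuits, and soundness is just MIP soundness over all (even non-circuit) prover strategies. This single application of $s$ is what yields the tight bound $s(3\log g(n))^{c}$. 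Your route could be repaired by first collapsing $\EXP$ (Meyer's theorem, then Toda/LFKN to reach the permanent, which genuinely is downward and random self-reducible), but each collapse composes $s$ with itself; since the theorem is applied in the paper with the superpolynomial bound $s(n)=n^{\log n}$, these compositions degrade the conclusion to roughly $s(\poly(s(3\log g(n))))^{c}$ and the stated form is lost.
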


That is, if we assume exponential time has $s(n)$-size circuits, we can simulate even larger time bounds with Merlin-Arthur games. This follows from the proof of $\EXP \subset \P/\poly \Longrightarrow \EXP = \MA$~(\cite{Babai-Fortnow-Nisan-Wigderson93}) combined with a padding argument.

\begin{reminder}{Theorem~\ref{NEcapcoNE}}
$\NE \cap \io\coNE$ and $\NE/1 \cap \coNE/1$ do not have $\ACC$ circuits of $n^{\log n}$ size.
\end{reminder}

\begin{proof}
Suppose $\NE \cap \io\coNE$ has $n^{\log n}$-size $\ACC$ circuits. We wish to derive a contradiction. Of course the assumption implies that $\TIME[2^{O(n)}]$ has $n^{\log n}$-size circuits as well. Applying Theorem~\ref{MVW} with $g(n)=2^{n^{2\log n}}$ and $s(n) = n^{\log n}$, we have \[\TIME[2^{n^{2\log n}}] \subseteq \MATIME[n^{O(\log^3 n)}].\] By Theorem~\ref{derandomizeACC1} and assuming that $\P$ has $\ACC$ circuits of size $n^{\log n}$, there is a constant $c$ and a pseudorandom generator with the following properties: for infinitely many circuit sizes $s$, the generator nondeterministically guesses a string $Y$ of length $2^{(\log s)^{c}}$, verifies $Y$ in $\poly(|Y|)$ deterministic time with a $\iouseful$ property $P$, then uses $Y$ to construct a PRG that runs in $\poly(|Y|)$ time deterministically over $\poly(|Y|)$ different seeds. The $\poly(|Y|)$ outputs of length $s$ can then be used to correctly approximate the acceptance probability of any size $s$ circuit.

We can use this generator to fool Merlin-Arthur games on infinitely many circuit sizes, as well as co-Merlin-Arthur games. Take a $n^{O(\log^3 n)}$-size circuit $C$ encoding the predicate in a given Merlin-Arthur game of that length ($C$ takes an input $x$, Merlin's string of length $n^{O(\log^3 n)}$, and Arthur's string of length $n^{O(\log^3 n)}$, and outputs a bit).
Our simulation first guesses Merlin's string $m$, then runs the PRG which guesses a $Y$ and verifies that $Y$ is a hard function; if the verification fails, we \emph{reject}.  
Then the simulation uses the PRG on $C(x,m,\cdot)$ to simulate Arthur's string and the final outcome, accepting if and only if the majority of strings generated by the PRG lead to acceptance. 
On infinitely many input lengths, 
the simulation of the Merlin-Arthur game will be ``faithful'' in the sense that the PRG simulating Arthur will work as intended. 

Hence there is a constant $d$ such that \begin{equation}\label{ncontain1}\TIME[2^{n^{2\log n}}] \subseteq \MATIME[n^{O(\log^3 n)}] \subseteq \io\NTIME[n^{\log^{d} n}].\end{equation}
As $\TIME[2^{n^{2\log n}}]$ is closed under complement, an analogous argument (applied to any machine accepting the complement of a given $\TIME[2^{n^{2\log n}}]$ language) implies
\begin{equation}\label{ncontain2}\TIME[2^{n^{2\log n}}] \subseteq {\sf coMATIME}[n^{O(\log^3 n)}] \subseteq \io\coNTIME[n^{\log^{d} n}].\end{equation} 

Let us look at these simulations more closely. Given a language $L$ in time $2^{n^{2\log n}}$ time, by \eqref{ncontain1} we have a language $L' \in \NTIME[n^{\log^{d} n}]$ which agrees with $L$ on infinitely many input lengths $n_1,n_2,\ldots$. Since $\TIME[2^{n^{2\log n}}]$ is closed under complement, for the language $\overline{L}$ (the complement of $L$) there is also a language $L'' \in \NTIME[n^{\log^d n}]$ which agrees with $\overline{L}$ on the same list of input lengths $n_1,n_2,\ldots$. Since $L'$ agrees with the complement of $L''$ on these infinitely many input lengths, we have that $L' \in \io\coNTIME[n^{\log^d n}]$, and therefore \[\TIME[2^{n^{2\log n}}] \subseteq \io(\NTIME \cap \io\coNTIME)[n^{\log^d n}].\] Assuming every language in $\NE \cap \io\coNE$ has circuits of size $n^{\log n}$, it follows that every language in the class $\io(\NE \cap \io\coNE)$ has circuits of size $n^{\log n}$ for infinitely many input lengths. Therefore \[\TIME[2^{n^{2\log n}}] \subset \io\SIZE[n^{\log n}].\] But this is a contradiction: for almost every $n$, by simply enumerating all $n^{\log n}$-size circuits and their $2^n$-bit truth tables, we can compute the lexicographically first Boolean function on $n$ bits which does not have $n^{\log n}$ size circuits, in $O(2^{n^{2\log n}})$ time. 

To prove a lower bound $\NE/1 \cap \coNE/1$, we follow precisely the same argument up to \eqref{ncontain1}, and make the following modifications. By using a bit of advice $y_n \in \{0,1\}$ to encode whether or not the PRG will be successful for a given input length $n$, we can simulate an arbitrary $L \in \TIME[2^{n^{2\log n}}]$ infinitely often in $\NE/1 \cap \coNE/1$. In particular, we define a nondeterministic $N$ and co-nondeterministic $N'$ which take an advice bit, as follows: if the advice bit is $0$, both simulations reject; otherwise, $N$ attempts to run the Merlin-Arthur simulation of $L$ (and $N'$ attempts to Merlin-Arthur simulate $\overline{L}$, respectively) as described above. When the advice bits are assigned appropriately on all input lengths, $N$ and $N'$ accept a language $L' \in \NE/1 \cap \coNE/1$ such that for all $n$, either $L' \cap \{0,1\}^n = \varnothing$ (for input lengths where the advice is set to $0$) or $L' \cap \{0,1\}^n = L \cap \{0,1\}^n$ (for infinitely many $n$). Therefore \[\TIME[2^{n^{2\log n}}]\subseteq \io(\NE/1 \cap \coNE/1),\] and the remainder of the argument concludes as above.
\end{proof}

We conclude the section by sketching how the above argument can be recast in a more generic form, as a connection between SAT algorithms and circuit lower bounds:

\begin{reminder}{Theorem~\ref{generic}} Let ${\cal C}$ be typical. Suppose the satisfiability problem for $n^{O(\log^c n)}$-size ${\cal C}$ circuits can be solved in $O(2^n/n^{10})$ time, for all constants $c$. Then $\NE \cap \io\coNE$ and $\NE/1 \cap \coNE/1$ do not have $n^{\log n}$-size ${\cal C}$ circuits.
\end{reminder}

\begin{proof} (Sketch) Suppose satisfiability for ${\cal C}$ circuits of $n^{O(\log^c n)}$ size is in $O(2^n/n^{10})$ time (for all $c$), and that $\NE \cap \io\coNE$ has $n^{\log n}$ size circuits.
By the proof of Theorem~\ref{derandomizeACC1}, assuming $\P$ has $n^{\log n}$ size ${\cal C}$ circuits, for all $\eps > 0$, we obtain a nondeterminstic algorithm $N$ running in $2^{2^{O(\log^{\eps} s)}}$ time on all circuits of size $s$ (for infinitely many $s$) and outputs a good approximation to the given circuit's acceptance probability. (In particular, from the assumptions we can derive a unary language computable in $\NTIME[2^n]$ that does not have witness circuits of $n^{\log^c n}$ size, for every $c$; this can be used to obtain a nondeterministic algorithm $N$ as in Theorem~\ref{derandomizeACC1}, by setting $s = n^{O(\log^c n)}$, solving for $n=2^{O((\log s)^{1/(c+1)})}$, then running the nondeterministic algorithm $N$ in $2^{O(n)} \leq 2^{2^{O(\log^{\eps} s)}}$ time, where $\eps \leq 1/(c+1)$.)

By the same argument as in the proof of Theorem~\ref{NEcapcoNE}, we obtain \[\TIME[2^{n^{2 \log n}}] \subseteq ({\sf MATIME} \cap {\sf coMATIME})[n^{O(\log^3 n)}].\] By applying algorithm $N$ to circuits of size $s = n^{O(\log^3 n)}$ and setting $\eps \ll 1/4$, we obtain \[({\sf MATIME} \cap {\sf coMATIME})[n^{O(\log^3 n)}] \subseteq \io(\NTIME \cap \io\coNTIME)[2^{O(n)}].\] But the latter class is in $\io\SIZE[n^{\log n}]$ by assumption; we conclude a contradiction as in Theorem~\ref{NEcapcoNE}.

Similarly as in Theorem~\ref{NEcapcoNE}, assuming $\NE/1 \cap \coNE/1$ has $n^{\log n}$ size circuits, we can conclude \[({\sf MATIME} \cap {\sf coMATIME})[n^{O(\log^3 n)}] \subseteq \io(\NTIME[2^{O(n)}]/1 \cap \coNTIME[2^{O(n)}]/1) \subset \SIZE[n^{\log n}],\] yielding another contradiction.
\end{proof}

\begin{reminder}{Theorem~\ref{generic2}} Suppose we can approximate the acceptance probability of any given $n^{O(\log^c n)}$-size circuit (with fan-in two and arbitrary depth) on $n$ inputs to within $1/6$, for all $c$, in $O(2^n/n^{10})$ time (even nondeterministically). Then $\NE \cap \io\coNE$ and $\NE/1 \cap \coNE/1$ do not have $n^{\log n}$-size circuits.
\end{reminder}

\begin{proof} (Sketch) For all the lower bound arguments given in this section, an algorithm which can approximate the acceptance probability of a given $n^{O(\log^c n)}$-size circuit can be applied in place of a faster SAT algorithm (\cite{Williams10,Williams11,SanthanamWilliams13}). That is, from the hypothesis of the theorem we can derive exponential-size witness circuit lower bounds for $\NEXP$ (as in Theorem~\ref{expwitnessACC}) and  infinitely-often correct pseudorandom generators against general circuits (as in Theorem~\ref{derandomizeACC1}). Therefore the proofs of Theorem~\ref{NEcapcoNE} and consequently Theorem~\ref{generic} also carry over under the hypothesis of the theorem.
\end{proof}

\section{Natural Properties and Derandomization}\label{natural}

In this section, we characterize (the nonexistence of) natural properties as a particular sort of derandomization problem, and exhibit several consequences. 

Let $\ZPE = \ZPTIME[2^{O(n)}]$, i.e., the class of languages solvable in $2^{O(n)}$ time with randomness and no error (the machine can output {\bf ?}, or \emph{don't know}). $\RE=\RTIME[2^{O(n)}]$ is its one-sided-error equivalent. Analogously to Definition~\ref{witnessckts2}, we define a witness notion for $\ZPE$ as follows:

\begin{definition} Let $L \in \ZPE$. A \emph{$\ZPE$ predicate for $L$} is a procedure $M(x,y)$ that runs in time $2^{O(|x|)}$ on inputs $y$ of length $2^{c|x|}$ for some constant $c$, such that for every $x$ and $y$,
\begin{compactitem}
\item The output of $M(x,y)$ is in the set $\{1,0,\text{\bf ?}\}$.
\item $x \in L \Longrightarrow \Pr_{y \in \{0,1\}^{2^{c|x|}}}[M(x,y)\text{ outputs }1] \geq 2/3$, and for all $y$ of length $2^{c|x|}$, $M(x,y)\in\{1,\text{\bf ?}\}$.
\item $x \notin L \Longrightarrow \Pr_{y \in \{0,1\}^{2^{c|x|}}}[M(x,y)\text{ outputs }0] \geq 2/3$, and for all $y$ of length $2^{c|x|}$, $M(x,y) \in \{0,\text{\bf ?}\}$.
\end{compactitem}

$\ZPE$ \emph{has ${\cal C}$ seeds} if for every $\ZPE$ predicate $M$, there is a $k$ such that for all $x$, there is a ${\cal C}$-circuit $C_x$ of size at most $|x|^k+k$ such that $M(x,tt(C_x)) \neq \text{\bf ?}$.\footnote{For circuit classes where the depth $d$ and/or modulus $m$ may be bounded, we also quantify this $d$ and $m$ simultaneously with the size parameter $k$. That is, the depth, size, and modulus parameters are chosen prior to choosing the circuit family, as usual.}

$\ZPE$ \emph{has ${\cal C}$ seeds for infinitely many input lengths} if for every $\ZPE$ predicate $M$, there is a $k$ such that for infinitely many $n$ and for all $x$ of length $n$, there is a ${\cal C}$-circuit $C_x$ of size at most $n^k+k$ such that $M(x,tt(C_x)) \neq \text{\bf ?}$.
\end{definition}

That is, ${\cal C}$ seeds for $\ZPE$ are succinct encodings of strings that lead to a decision by the algorithm. Analogously, we can define \emph{$\RE$ predicates} and the notion of \emph{$\RE$ having ${\cal C}$ seeds}: $\RE$ predicates will accept with probability at least $2/3$ when $x \in L$, but reject with probability $1$ when $x \notin L$. Hence, when $\RE$ has ${\cal C}$ seeds, we only require $x \in L$ to have small circuits $C_x$ encoding witnesses.

Succinct seeds for zero-error computation are closely related to uniform natural properties, as follows:

\begin{reminder}{Theorem~\ref{naturalequiv}} Let ${\cal C}$ be a typical polynomial-size circuit class. The following are equivalent:
\begin{compactenum}
\item There are no $\P$-natural properties $\iouseful$ (respectively, ae-useful\footnote{Here, \emph{ae-useful} is just the ``almost-everywhere useful'' version, where the property is required to be distinguish random functions from easy ones on almost every input length.}) against ${\cal C}$
\item $\ZPE$ has ${\cal C}$ seeds for almost all (resp., infinitely many) input lengths.
\end{compactenum}
\end{reminder}

The intuition is that, given a $\P$-natural useful property, its probability of acceptance can be amplified (at a mild cost to usefulness), yielding a $\ZPE$ predicate which accepts random strings with decent probability but still lacks small seeds. In the other direction, suppose a $\ZPE$ predicate has ``bad'' inputs that can't be decided using small circuits encoding seeds. This implies that a ``hitting set'' of exponential-length strings, sufficient for deciding all inputs of a given length, must have high circuit complexity---otherwise, all strings in the set would have low circuit complexity (by Lemma~\ref{truthtable}), but at least one such string decides even a bad input. Checking for a hitting set is then a $\P$-natural, useful property.

\begin{proofof}{Theorem~\ref{naturalequiv}} $(\neg(1) \Rightarrow \neg(2))$ Suppose there is a $\P$-natural property which is ae-useful (resp., $\iouseful$) against ${\cal C}$.  For some $c, d \geq 1$, this is an $n^c$-time algorithm $A$ such that, for almost all $n$ (resp., infinitely many $n$), $A$ accepts at least a $1/2^{d\log n} = 1/n^d$ fraction of $n$-bit inputs, for $n = 2^{\ell}$, and for almost all $n = 2^{\ell}$ (resp., for infinitely many $n$) and all $c$, $A$ rejects all $n$-bit inputs representing truth tables of $(\log n)^c$-size ${\cal C}$-circuits.

Let $b(n)$ denote the $n$th string in lexicographical order. Let $\eps > 0$ be sufficiently small. Define an algorithm $V$:

\smallskip

\begin{framed}

$V(x,z)$: If $x \neq b(|z|)$ then output {\bf ?}. If $|z| \neq 2^{(d+1)k+1}$ for some $k$, then output {\bf ?}.\\ 
\indent \indent ~~~ Partition $z$ into $t = 2^{dk+1}$ strings $z_1,\ldots,z_t$ each of length $2^{k}$.\\
\indent \indent ~~~  If $A(z_i)$ accepts for some $i$, then output $1$; else, output {\bf ?}.

\end{framed}

We claim $V$ is a $\ZPE$ predicate for $L = \{0,1\}^{\star}$. Consider a $z$ chosen at random. All $z_i$ of length $2^{dk}$ are independent random variables, and by assumption, $A$ accepts at least $1/|z_i|^d = 1/2^{dk}$ strings of that length. The probability that all $z_i$ are among the $(1-1/2^{dk})$ fraction of strings of length $2^{k}$ rejected by $A$ is at most $(1-1/2^{dk})^{2^{dk+1}} \leq \exp(-2) < 1/3$. Therefore $V$ accepts a random $z$ of the appropriate length with at least $2/3$ probability.

By construction, $V$ accepts $(x,z)$ precisely when $x=b(|z|)$ and some $z_i$ of length $2^{dk}$ is accepted by $A$. Hence for almost all $k$ (resp., infinitely many $k$), when $V(x,z)$ accepts on $z$ of length $2^{(d+1)k+1}$, some substring $z_i$ of length $2^k$ has ${\cal C}$-circuit complexity at least $(\log 2^{k})^c \geq \Omega(\log^c |z|)$. Therefore by Lemma~\ref{truthtable}, $z$ itself has ${\cal C}$-circuit complexity at least $\Omega((\log |z|)^c-(\log|z|)^{1+o(1)})$. As this holds for every $c$, the predicate $V$ does not have ${\cal C}$ seeds infinitely often (respectively, almost everywhere).

$(\neg(2) \Rightarrow \neg(1))$  Suppose there is a $\ZPE$ predicate $V$ that does not have ${\cal C}$ seeds almost everywhere (resp, infinitely often). This means that, for all $k$ and for infinitely many (resp., almost all) input lengths $n_i$, there is some input $x$ of length $n_i$ such that, for every string $r$ of length $2^{c n_i}$ satisfying $V(x,r) \neq \text{\bf ?}$, the ${\cal C}$-circuit complexity of $r$ is at least $(c n_i)^k$. (Note that the constant $c$ depends only on $V$.) Define a new predicate $V'$ as follows, intended to be executed on the inputs $x$ with lengths in $\{n_i\}$:

\begin{framed}
$V'(x,r)$: If $|r| \neq 2^{\ell + c|x|}$ where $\ell$ is the smallest integer such that $2|x| \leq 2^{\ell}$, \emph{reject}.\\
\indent \indent ~~~  Partition $r$ into $2^{\ell}$ strings $\{r_i\}$ of length $2^{c|x|}$ each.\\
\indent \indent ~~~ \emph{Accept} if and only if $V(x,r_i) \neq {\bf ?}$ for some $i$.
\end{framed}

For those inputs $x$ of length $n_i$, any $r$ accepted by $V'(x,r)$ does not have circuits of size $n_i^k$, due to Proposition~\ref{truthtable0} and the fact that such an $r$ contains a substring $r_i$ such that $V(x,r_i)$ accepts, hence $r_i$ has circuit complexity at least $n_i^k$. By standard probabilistic arguments and our choice of $\ell$, it is likely that the string $r$ encodes a hitting set for all inputs of length $n_i$, i.e., \[\Pr_{r \in \{0,1\}^{2^{cn_i + \ell}}}\left[(\exists x \in \{0,1\}^{n_i})(\forall~i=1,\ldots,2^{\ell})[V(x,r_i) = \text{\bf ?}]\right] < 1/3.\] Therefore, a randomly chosen $r$ of length $2^{cn_i+\ell}$ is accepted by $V'$, with probability at least $2/3$. Equipped with this knowledge, we now define an algorithm $A$ that defines a $\P$-natural property of Boolean functions:

\begin{framed}
$A(f)$: Given a Boolean function $f : \{0,1\}^{\ell'} \rightarrow \{0,1\}$, \\
\indent \indent ~~~  Compute the largest $n \in \Z$ such that $\ell' \geq \ell + cn$, \\
\indent \indent ~~~~~~~~~~~~~~~~ where $\ell = O(\log n)$ is the smallest integer satisfying $2n \leq 2^{\ell}$.\\
\indent \indent ~~~ Set $r$ to be the first $2^{\ell + cn}$ bits of $f$.\\
\indent \indent ~~~ Search over all strings in $\{0,1\}^n$ for an $x$ such that $V'(x,r) \neq \text{\bf ?}$ for some $r_i$. \\
\indent \indent ~~~ Output \emph{accept} if such an $x$ is found, otherwise \emph{reject}.
\end{framed}

The algorithm $A$ runs in $\poly(2^{\ell'})$ time, and accepts at least $1/2$ of its inputs. Furthermore, when the integer $n$ computed by $A$ is in the sequence $\{n_i\}$, $A$ rejects all $f$ with ${\cal C}$-circuit complexity at most $n_i^k = \Theta((\ell')^k)$: if $f$ had such circuits, then all substrings $r_i$ of $f$ would as well, by Proposition~\ref{truthtable0}. As this is true for every constant $k$, $A$ is a $\P$-natural property useful against polynomial-size ${\cal C}$ circuits.
\end{proofof}

To prove a related result for $\RE$ predicates, we first need a little more notation. Let $V$ be an $\RTIME[2^{kn}]$ predicate accepting a language $L$. For a given input length $n$, a set $S_n \subseteq \{0,1\}^{2^{kn}}$ is a \emph{hitting set for $V$ on $n$} if, for all $x \in L$ of length $n$, there is a $y \in S_n$ such that $V(x_n,y)$ accepts. For a string $T$ of length $m \cdot 2^{kn}$, $T$ \emph{encodes a hitting set for $V$ on $n$} if, breaking $T$ into $m$ strings $y_1,\ldots,y_m$ of equal length, the set $\{y_1,\ldots,y_m\}$ is a hitting set for $V$ on $n$.

We consider yet another relaxation of naturalness. For a typical circuit class ${\cal C}$, we say that a polynomial-time algorithm $A$ \emph{is io-$\P$-natural against ${\cal C}$} provided that, for every $k$ and infinitely many integers $n$,
\begin{compactitem}
\item  $A$ accepts at least a $1/\poly(n)$ fraction of $n$-bit inputs, and 
\item $A$ rejects all $n$-bit inputs $x$ such that the corresponding Boolean function $f_x$ has $((\log n)^k+k)$-size ${\cal C}$-circuits.\footnote{As usual, if ${\cal C}$ is also characterized by a depth $d$ or modulus constraint $m$, those $d$ and $m$ are quantified alongside $k$.} 
\end{compactitem}
(Compare with Definition~\ref{natalg}.) In the usual notion of natural properties, we are restricted to inputs with length equal to a power of two, and largeness holds almost everywhere; here, neither conditions are required. 

We can relate succinctly encoded hitting sets to natural algorithms as follows:

\begin{theorem} \label{rtimenatural} Suppose for all $c$, $\RTIME[2^{O(n)}]$ does not have $n^2$-size hitting sets encoded by $n^c$-size circuits. Then for all $c$, there is an io-$\P$-natural algorithm useful against $n^c$ size circuits.
\end{theorem}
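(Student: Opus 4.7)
The plan is to imitate the $(\neg(2)\Rightarrow\neg(1))$ direction of Theorem~\ref{naturalequiv}, adapting it to $\RE$ predicates. Fix $c$ and set $c' := c + O(1)$; by the hypothesis there is an $\RE$ predicate $V$ (for some $L\in\RTIME[2^{O(n)}]$, running in time $2^{kn}$) such that for infinitely many $n$ no $O(n^2)$-size hitting set for $V$ on $n$ is encoded by a circuit of size $n^{c'}$. Choose $m := Cn$ for a constant $C$ large enough that $2^n(1/3)^{Cn} < 1/2$.

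Next, I would define the candidate property $A(R)$: on input $R$ of length $N$, decode the unique $n$ with $N = m \cdot 2^{kn}$ (reject if none exists), partition $R = r_1 r_2\cdots r_m$ into blocks of length $2^{kn}$, and accept iff for every $x\in\{0,1\}^n$ there is some $i$ with $V(x,r_i)=1$. Since $V$ runs in $2^{kn}$ time and we quantify over $\le 2^n\cdot m$ pairs, this runs in $\poly(N)$ time.

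For \emph{largeness}, a uniformly random $R$ fails to witness any fixed $x\in L$ with probability $\le (1/3)^{Cn}$, so a union bound over the relevant ``full'' $n$ (i.e., $n$ with $L\cap\{0,1\}^n=\{0,1\}^n$) gives failure probability $\le 2^n(1/3)^{Cn} < 1/2$; hence $A$ accepts at least a $1/2$ fraction of inputs of those lengths. For \emph{usefulness}, suppose $R$ of length $N$ has a ${\cal C}$-circuit of size at most $(\log N)^c \le n^{c'}$ (since $\log N = kn + O(\log n)$, this holds for large $n$). By Proposition~\ref{truthtable0} (or Lemma~\ref{truthtable}, to absorb an $O(\log N)$ loss when $m$ is not a power of two into the gap $c'-c$), every block $r_i$ inherits a ${\cal C}$-circuit of size $\le n^{c'}$; but then $R$ itself is a size-$n^{c'}$ circuit encoding an $O(n^2)$-size hitting set for $V$ on $n$, contradicting our choice of $V$ for those infinitely many ``bad'' $n$. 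Hence $A$ rejects all such $R$ on the corresponding (infinitely many) input lengths $N$.

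The main obstacle is the bookkeeping around $L$: the check ``some $r_i$ has $V(x,r_i)=1$'' for every $x\in\{0,1\}^n$ can only succeed when $L\cap\{0,1\}^n = \{0,1\}^n$, so one has to guarantee that the $V$ supplied by the hypothesis (or a mild modification of it) has this property on infinitely many $n$; this is automatic if we apply the hypothesis with $L=\{0,1\}^\star$, but in general requires either a direct argument that the hypothesis can be restricted to this case or a simple padding step. The remaining parameter-matching---aligning the ``bad'' $n$ with property input lengths $N=m\cdot 2^{kn}$ and keeping the $O(\log N)$ slack in Lemma~\ref{truthtable} inside the constant additive gap between $c$ and $c'$---is routine, and the structural heart of the proof is already present in the $\ZPE$ argument.
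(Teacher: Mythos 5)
There is a genuine gap, and it is exactly the ``main obstacle'' you flag at the end and then defer. Your property $A$ accepts $R$ only if \emph{every} $x\in\{0,1\}^n$ is hit by some block $r_i$. Since $V$ is an $\RE$ predicate, $V(x,\cdot)$ rejects all seeds when $x\notin L$, so $A$ accepts \emph{no} input of length $N$ whenever $L\cap\{0,1\}^n\neq\{0,1\}^n$. The hypothesis only guarantees that \emph{some} predicate $V_c$ for \emph{some} language $L$ lacks succinctly encoded hitting sets on infinitely many $n$; there is no reason those ``bad'' $n$ are lengths on which $L$ contains every string. Neither of your proposed fixes closes this. You cannot ``apply the hypothesis with $L=\{0,1\}^\star$'': the hypothesis is not yours to instantiate at a language of your choosing. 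And padding $V$ so that it accepts everything breaks the one-sided-error promise or the hardness: e.g., also accepting when the seed is a fixed string makes rejected $x$ accepted only with probability $2^{-2^{kn}}$ (not a valid $\RE$ predicate), while boosting that probability makes hitting sets trivial to encode. The correct notion of ``$R$ encodes a hitting set'' is ``every $x\in L\cap\{0,1\}^n$ is hit,'' and verifying \emph{that} deterministically requires knowing $L\cap\{0,1\}^n$, which a $\P$ algorithm cannot compute.

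This is precisely where the paper's proof does its real work, and it is not routine bookkeeping. The paper first builds an io-$\P/(\log n)$-natural property: the advice is the count $a=|L(V_c)\cap\{0,1\}^n|$, and the property accepts $Y=y_1\cdots y_{n^2}$ iff the number of $x$ with $V_c(x,y_i)=1$ for some $i$ \emph{equals} $a$ (since no string can be hit spuriously, equality certifies that every $x\in L$ is hit). It then eliminates the advice by encoding it into the input length via the intervals $I_n=[n^2,(n+1)^2-1]$: an input of length $m\in I_n$ supplies $a=m-n^2$ as the advice and runs the advised property on a prefix. This is also what makes the conclusion only ``io-$\P$-natural'' (largeness and usefulness hold only on the specially chosen lengths $m_i$) and costs a constant factor in the usefulness parameter ($(n/4)^c$ rather than $n^c$), absorbed because the statement quantifies over all $c$. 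Your block-partition structure and the use of Proposition~\ref{truthtable0}/Lemma~\ref{truthtable} for usefulness match the paper, but without the advice-and-length-encoding step the construction does not yield a well-defined large property.
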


\begin{proof}
The hypothesis says that for every $c$, there is an $\RTIME[2^{O(n)}]$ predicate $V_c$ accepting some language $L$ with the following property: for every $n^{c}$-size circuit family $\{C_n\}$, there are infinitely many $n$ where $tt(C_n)$ does not encode an $n^2$-size hitting set for $V_c$ on $n$.

We may obtain an io-natural algorithm computable in $\poly(N)$ time with $O(\log N)$ bits of advice (where $N$ is the length of the input), as follows. 

\begin{framed}
$A(Y,a)$: Given $Y$ of length $N=2^{kn + 2\log n}$,\\
\indent \indent ~~~ View the $O(\log N)$-bit advice string $a$ as the number of inputs of length $n$ in $L(V_c)$. \\
\indent \indent ~~~ Partition $Y$ into $y_1,\ldots,y_{2^{2\log n}}$ of length $2^{kn}$.\\
\indent \indent ~~~ Let $b$ be the number of $x$ of length $n\leq (\log N)/k$ such that $V_c(x,y_i)$ accepts for some $i$.\\
\indent \indent ~~~ If $(a=b)$ then \emph{accept} else \emph{reject}. 
\end{framed}

For infinitely many $N$, this procedure (with the appropriate advice string $a$) accepts a random string with high probability, because a random collection of $n^2$ strings is a hitting set, whp. On those same input lengths $N$, the procedure $A$ also rejects strings encoded by $n^{c}$-size circuit families, by assumption. Therefore $A$ defines io-$\P/(\log n)$-natural algorithm $A$ useful against $n^c$-size circuits, running on strings $Y$ with length equal to a power of two. 

We can use $A$ to design an io-$\P$-natural algorithm $A'$ that runs on arbitrary length strings, analogously to one direction of Theorem~\ref{witnessequiv}. For every $n \in \N$, we associate the interval $I_n = [n^2,(n+1)^2-1]$; note that the collection of $I_n$ is a partition of $\N$. Our algorithm $A'$ runs as follows:

\begin{framed}
$A'(X)$: On input $X$ of length $m$, determine $n$ such that $m \in I_n$. If $n$ is not a power of two, \emph{reject}.\\
\indent \indent ~~~ Compute $a = m-n^2$, and treat $a$ as a binary string of length $O(\log n)$.\\
\indent \indent ~~~ Let $Y$ be the first $n$ bits of $X$.\\
\indent \indent ~~~ Run $A(Y,a)$ and output the answer. 
\end{framed}

Observe that $A'(X)$ runs in $\poly(m)$ time. Since $a$ as defined in $A'$ is contained in $\{0,\ldots,2n\}$, the number $a$ can be treated as an advice string of length $(\log n)$ for $n$-bit inputs.

For infinitely many input lengths $n_i$, the original algorithm $A$ (equipped with the appropriate advice $a_i$) satisfies largeness and usefulness against $n^c$-size circuits. For each such $n_i$, there is an slightly larger input length $m_i$ such that the number of $n_i$-bit inputs in $L(V_c)$ is exactly $a_i = m_i - n_i^2$. 

On these integers $m_i$, the algorithm $A'(\cdot)$ also satisfies largeness and usefulness, since it is essentially equivalent to running $A(\cdot,a_i)$ on inputs of length $n_i$. More precisely, since the input length has increased by a square ($m_i = \Theta(n_i^2)$), the strings of length $m_i$ define functions on only twice as many input bits as $n_i$. Therefore, when $A(x,a_i)$ accepts (hence $x$ has circuit complexity at least $(\log n_i)^c$), by Lemma~\ref{truthtable} we may conclude that the original input $X$ to $A'$ defines a Boolean function on at most $2\log m_i \leq 4\log n_i$ bits, with circuit complexity at least $(\log n_i)^c - (\log n_i)^{1+o(1)}$. Therefore the new algorithm $A'$ is $\iouseful$ against circuits of size up to $(n/4)^c$. As this condition holds for every constant $c$, the theorem follows.
\end{proof}

The other direction (from io-$\P$-natural algorithms to $\RTIME[2^{O(n)}]$) seems difficult to satisfy: it could be that, for infinitely many $n$, the natural algorithm does not obey any nice promise conditions on the number of accepted inputs of length $n$.

\subsection{Unconditional Mild Derandomizations}

We are now prepared to give some unconditionally-true derandomization results. The first one is:

\begin{reminder}{Theorem~\ref{derand}} Either $\RTIME[2^{O(n)}] \subset \SIZE[n^c]$ for some $c$, or $\BPP \subset \io\ZPTIME[2^{n^{\eps}}]/n^{\eps}$ for all $\eps > 0$.
\end{reminder}

To give intuition for the proof, we compare with the ``easy witness'' method of Kabanets~\cite{Kabanets01}, which shows that $\RP$ can be \emph{pseudo}-simulated in $\io\ZPTIME[2^{n^{\eps}}]$ (no efficient adversary can generate an input on which the simulation fails, almost everywhere). That simulation works as follows: for all $\eps > 0$, given an $\RP$ predicate, try all $n^{\eps}$-size circuits and check if any encode a good seed for the predicate. If this always happens (against all efficient adversaries), then we can simulate $\RP$ in subexponential time. Otherwise, some efficient algorithm can generate, infinitely often, inputs on which this simulation fails. This algorithm generates the truth table of a function that does not have $n^{\eps}$-size circuits; this hard function can be used to derandomize $\BPP$.

In order to get a nontrivial simulation that works on all inputs for many lengths, we consider \emph{easy hitting sets}: sets of strings (as in Theorem~\ref{rtimenatural}) that contain seeds for \emph{all} inputs of a given length, encoded by $n^c$-size circuits (where $c$ does not have to be tiny, but rather a fixed constant). When such seeds exist for some $c$, we can use $\tilde{O}(n^c)$ bits of advice to simulate $\RP$ deterministically. Otherwise, we apply Theorem~\ref{rtimenatural} to obtain an io-$\P$-natural algorithm which can be used (by randomly guessing a hard function) to simulate $\BPP$ in subexponential time. This allows us to avoid explicit enumeration of all small circuits; instead, we let the circuit size exceed the input length, and enumerate over (short) inputs in our natural property.

\begin{proofof}{Theorem~\ref{derand}} First, suppose there is a $c \geq 2$ so that for every $\RTIME[2^{O(n)}]$ predicate $V$ accepting a language $L$, there is an $n^{c-1}$-size circuit family $\{C_n\}$ such that for almost all $n$, $C_n$ has $O(n)$ inputs and its truth table encodes a hitting set for $V$ on $n$ with $2^{2 \log n}$ strings. That is, the truth table of $C_n$ is a string $Y$ of length $\ell = 2^{2\log n} \cdot 2^{kn}$ for a constant $k$, with the property that when we break $Y$ into $O(n^2)$ equal length strings $y_1,\ldots,y_{2^{2\log n}}$, the set $\{y_i\}$ is a hitting set for $V$ on $n$. Then it follows immediately that $\RTIME[2^{O(n)}] \subset \TIME[2^{O(n)}]/n^c$, because for almost all lengths $n$, we can provide the appropriate $n^{c-1}$-size circuit $C_n$ as $O(n^c)$ bits of advice, and recognize $L$ on any $n$-bit input $x$ by evaluating $C$ on all its possible inputs, testing the resulting hitting set of $O(n^2)$ size with $x$. (We will show later how to strengthen this case.)

If the above supposition is false, that means for every $c$, there is an $\RTIME[2^{O(n)}]$ predicate $V_c$ accepting some language $L$ with the following property: for every $n^{c}$-size circuit family $\{C_n\}$, there are infinitely many $n$ such that the truth table of $C_n$ does not encode a hitting set for $V$ on $n$. Theorem~\ref{rtimenatural} says that for all $c$, we can extract an io-$\P$-natural algorithm $A_c$ useful against $n^c$ size circuits, for all $c$. In particular, the proof of Theorem~\ref{rtimenatural} shows that for all $c$ there are infinitely many $n$ and  $m \in [2^{n/3},2^{3n}]$ such that $A_c$ is useful and large on its inputs of length $m$. So if we want a function $f : \{0,1\}^{O(n)} \rightarrow \{0,1\}$ that does not have $n^k$ size circuits, then by setting $c=k$, providing the number $m$ as $O(n)$ bits of advice, and randomly selecting $Y$ of $m$ bits, we can generate an $f$ that has guaranteed high circuit complexity, with zero error.

For every $k$, we can simulate any language in $\BPTIME[O(n^k)]$ (two-sided randomized $n^k$ time), as follows. Given any $k$ and $\eps > 0$, set $c=gk/\eps$ (where $g$ is the constant in Theorem~\ref{hardness-randomness}). On input $x$ of length $n$, our ZP simulation will have hard-coded advice of length $O(n^{\eps})$, specifying an input length $m = 2^{\Theta(n^{\eps})}$. Then it chooses a random string $Y$ of length $m$, and computes $A_c(Y)$. If $A_c(Y)$ rejects, then the simulation outputs \emph{don't know}. (For the proper advice $m$ and the proper input lengths, this case will happen with low probability.) Otherwise, for infinitely many $n$, $Y$ is an $m = 2^{\Theta(n^{\eps})}$ bit string with circuit complexity at least $(n^{\eps})^c \geq n^{gk}$.

Applying Theorem~\ref{hardness-randomness}, $Y$ can be used to construct a PRG $G_Y : \{0,1\}^{g \log |Y|} \rightarrow \{0,1\}^{n^{3k}}$ which fools circuits of size $n^{3k}$, where $d$ is a universal constant (independent of $\eps$ and $k$). Each call to $G_Y$ takes $\poly(|Y|) \leq 2^{O(n^{\eps})}$ time. Trying all $|Y|^g \leq 2^{O(n^{\eps})}$ seeds to $G_Y$, we can approximate the acceptance probability of a $n^{3k}$-size circuit simulating any $\BPTIME[O(n^k)]$ language on $n$-bit inputs, thereby determining acceptance/rejection of any $n$-bit input.

Now we have either (1) $\RTIME[2^{O(n)}] \subset \TIME[2^{O(n)}]/n^c$ for some $c$, or (2) $\BPP \subset \io\ZPTIME[2^{n^{\eps}}]/n^{\eps}$ for all $\eps > 0$. To complete the proof, we recall that Babai-Fortnow-Nisan-Wigderson~\cite{Babai-Fortnow-Nisan-Wigderson93} proved that if $\BPP \not\subset \io\SUBEXP$ then $\EXP \subset \P/\poly$. Therefore, if case (2) does not hold, the first case can be improved: using a complete language for $\E$, we infer from $\EXP \subset \P/\poly$ that $\TIME[2^{O(n)}] \subset \SIZE[n^c]$ for some $c$, so $\RTIME[2^{O(n)}] \subset \SIZE[n^c]$ for some constant $c$.
\end{proofof}

\begin{reminder}{Corollary~\ref{rp}} For some constant $c$, $\RP \subseteq \io\ZPSUBEXP/n^c$.
\end{reminder}

\begin{proof} By Theorem~\ref{derand}, there are two cases: (1) $\RTIME[2^{O(n)}] \subset \SIZE[n^c]$ for some $c$, or (2) $\BPP \subset \io\ZPTIME[2^{n^{\eps}}]/n^{\eps}$ for all $\eps$. In case (1), $\RP \subseteq \RTIME[2^{O(n)}] \subseteq \TIME[n^c]/n^c$. In case (2), $\RP \subseteq \BPP \subseteq \io\ZPTIME[2^{n^{\eps}}]/n^{\eps}$.
\end{proof}

The simulation can be ported over to Arthur-Merlin games. Recall that a language $L$ is in $\AM$ if and only if there is a $k$ and deterministic algorithm $V(x,y,z)$ running in time $|x|^k$ with the properties:
\begin{itemize}
\item If $x \in L$ then $\Pr_{y \in \{0,1\}^{|x|^k}}[\exists z \in \{0,1\}^{|x|^k}~V(x,y,z)\text{~accepts}] = 1$.
\item If $x \notin L$ then $\Pr_{y \in \{0,1\}^{|x|^k}}[\forall z \in \{0,1\}^{|x|^k}~V(x,y,z)\text{~rejects}] > 2/3$.
\end{itemize}
An $\AM$ computation corresponds to an interaction between a randomized verifier (Arthur) that sends random string $y$, and a prover (Merlin) that nondeterminstically guesses a string $z$. 

\begin{reminder}{Corollary~\ref{am}}
 For some $c \geq 1$, $\AM \subseteq \io\Sigma_2 \SUBEXP/n^c$.
\end{reminder}

The problem of finding nontrivial relationships between $\AM$ and $\Sigma_2 \P$ has been open for some time~\cite{GST03,AvM12}. 

\begin{proof} (Sketch) The proof is roughly analogous to relativizing Theorem~\ref{derand} with an $\NP$ oracle; for completeness, we include some of the details. Instead of hitting sets for $\RP$ computations, we consider hitting sets for $\AM$ computations: a $\poly(n)$-size set $S$ of $n^k$-bit strings that can replace the role of $y$ (Arthur) in the $\AM$ computation. (Such hitting sets always exist, by a probabilistic argument.) That is, on all strings $x$ of length $n$, computing the probability of $(\exists z)[V(x,y,z)]$ over all $y \in S$ allows us to approximate the probability over \emph{all} $n^k$-bit strings.  Instead of considering hitting sets that are succinctly encoded by typical circuits, we consider $\AM$ hitting sets that are succinctly encoded by circuits with oracle gates that compute SAT. There are two possible cases: 

1. \emph{There is a $c$ such that for all languages $L \in \AM$ and verifiers $V_c$ for $L$, there is an $n^c$-size SAT-oracle circuit family encoding hitting sets for $V_c$, on almost all input lengths $n$.} In this case, we can put $\AM$ in the class $\P^{\NP}/\tilde{O}(n^c)$: we can use $\tilde{O}(n^c)$ advice to store a circuit encoding a hitting set for each input length $n$, evaluate this circuit on $n^{O(1)}$ inputs in $\P^{\NP}$, producing the hitting set, then use the hitting set and the $\NP$ oracle to simulate the $\AM$ computation.

2. \emph{For all $c$, there is some verifier $V$ of some $\AM$ language such that, for infinitely many input lengths $n$, every hitting set for $V$ over all inputs of length $n$ has SAT-oracle circuit complexity greater than $n^c$.} First we show how to use this case to check that a given string $Y$ has high SAT-oracle circuit complexity for infinitely many input lengths; the argument is similar to prior ones. Given a string $Y$, let $k \geq 1$ be a parameter, let $\eps > 0$ be sufficiently small, and consider the verifier $V_{10k/\eps}$ on all inputs of length $n = m^{\eps}$ (where $n$ is one of the infinitely many input lengths which are ``good''). We can verify that the string $Y$ encodes a hitting set for $V_{10k/\eps}$ on inputs of length $n$, as follows. First we guess which of the $2^n$ strings of length $n$ are accepted, and which are rejected (comparing our guesses against the $O(n)$ bits of advice, which will encode the total number of accepted inputs of length $n$). For each string that is guessed to be accepted, we use the set $S$ and nondeterminism to simulate Arthur and Merlin's acceptance in $2^n \cdot \poly(n)$ time. Then for each string  that is guessed to be rejected, we use the string $Y$ and universal guessing to confirm that Arthur and Merlin reject in $2^n \cdot \poly(n)$ time. This is a $\Sigma_2$ computation running in time $2^{O(n)} \leq 2^{O(m^{\eps})}$, which (when given the appropriate advice of length $O(m^{\eps})$) correctly determines that at least some string $Y$ has SAT-oracle circuit complexity at least $(m^{\eps})^{10k/\eps} \geq n^{10k}$, on infinitely many input lengths.

Now suppose we want to simulate an $\AM$ computation on inputs of length $m$ running in time $m^k$. Then we can simulate the $\AM$ computation in $\io\Sigma_2 \TIME[2^{n^{\eps}}]/O(n^{\eps})$, as follows: we guess a string $Y$ with high SAT-oracle circuit complexity, and apply known relativizing results in derandomization (in particular Theorems~3.2 and 3.3 from \cite{Klivans-vanMelkebeek02}) that use the string $Y$ to simulate $\AM$ computations in $\NSUBEXP$.  Then we apply the aforementioned $\Sigma_2$ procedure to verify that the $Y$ guessed has high SAT-oracle circuit complexity. We accept if and only if the simulation of $\AM$ accepts and the verification of $Y$ accepts.
\end{proof}

It looks plausible that Corollary~\ref{am} could be combined with other results (for example, the work on lower bounds against fixed-polynomial advice, of Buhrman-Fortnow-Santhanam~\cite{BFS09}) to separate $\Sigma_2 \EXP$ from $\AM$.

Another application of Theorem~\ref{derand} is an unexpected equivalence between the infamous separation problem $\NEXP \neq \BPP$ and zero-error simulations of $\BPP$. We need one more definition: ${\sf Heuristic~}{\cal C}$ is the class of languages $L$ such that there is a $L' \in {\cal C}$ whereby, for almost every $n$, the symmetric difference $(L \cap \{0,1\}^n) \Delta (L' \cap \{0,1\}^n)$ has cardinality less than $2^n/n$.\footnote{{\bf N.B.} This is a weaker definition than usually stated, but it will suffice for our purposes.} (That is, there is a language in ${\cal C}$ that ``agrees'' with $L$ on at least a $1-1/n$ fraction of inputs.) The infinitely often version $\io{\sf Heuristic~}{\cal C}$ is defined analogously.

\begin{reminder}{Theorem~\ref{NEXPBPPequiv}}
$\NEXP \neq \BPP$ if and only if for all $\eps > 0$, $\BPP \subseteq \io{\sf Heuristic}\ZPTIME[2^{n^{\eps}}]/n^{\eps}$.
\end{reminder}

This extends an amazing result of Impagliazzo and Wigderson~\cite{Impagliazzo-Wigderson01} that $\EXP \neq \BPP$ if and only if for all $\eps > 0$, $\BPP \subseteq \io{\sf Heuristic}\TIME[2^{n^{\eps}}]$. It is interesting that $\NEXP$ versus $\BPP$, a problem concerning the power of nondeterminism, is equivalent to a statement about derandomization of $\BPP$ \emph{without} nondeterminism. Theorem~\ref{NEXPBPPequiv} should also be contrasted with the $\NEXP$ vs $\P/\poly$ equivalence of IKW~\cite{IKW}: $\NEXP \not\subset \P/\poly$ if and only if $\MA \subseteq \io\NTIME[2^{n^{\eps}}]/n^{\eps}$, for all $\eps > 0$.

\begin{proofof}{Theorem~\ref{NEXPBPPequiv}}
First, assume $\BPP$ is not in $\io{\sf Heuristic}\ZPTIME[2^{n^{\eps}}]/n^{\eps}$ for some $\eps$. Then $\BPP \not\subseteq \io\ZPTIME[2^{n^{\eps}}]/n^{\eps}$, so by Theorem~\ref{derand} we have that $\RTIME[2^{O(n)}]$ has size-$n^c$ seeds, which implies $\REXP = \EXP$. The hypothesis also implies that $\BPP$ is not in $\io{\sf Heuristic}\TIME[2^{n^{\eps}}]$, so by Impagliazzo and Wigderson~\cite{Impagliazzo-Wigderson01} we have $\EXP = \BPP$. Therefore $\REXP = \BPP$. But this implies $\NP \subseteq \BPP$, so by Ko's theorem~\cite{Ko82} we have $\NP = \RP$. Finally, by padding, $\NEXP = \REXP = \BPP$.

For the other direction, suppose $\NEXP = \BPP$ and $\BPP \subseteq \io{\sf Heuristic}\ZPTIME[2^{n^{\eps}}]/n^{\eps}$ for all $\eps > 0$. We wish to prove a contradiction. The two assumptions together say that $\NEXP \subseteq \io{\sf Heuristic}\NTIME[2^{n^{\eps}}]/n^{\eps}$ for all $\eps > 0$. $\NEXP=\BPP$ implies $\NEXP = \EXP$, and since $\NE$ has a linear-time complete language, we have $\NTIME[2^{O(n)}] \subseteq \TIME[2^{O(n^c)}]$ for some constant $c$. (More precisely, the {\sc SuccinctHalting} problem from Theorem~\ref{equiv} can be solved in $2^{O(n^c)}$ time for some $c$, and every language in $\NTIME[2^{O(n)}]$ can be reduced in linear time to {\sc SuccinctHalting}.) As a consequence, \begin{equation}\label{EXPcontain}\EXP = \NEXP \subseteq \bigcap_{\eps > 0}\io{\sf Heuristic}\NTIME[2^{n^{\eps}}]/n^{\eps} \subseteq \bigcap_{\eps > 0}\io{\sf Heuristic}\TIME[2^{O(n^c)}]/n^{\eps}.\end{equation} The last inclusion in \eqref{EXPcontain} can be proved as follows: let $L \in \bigcap_{\eps > 0}\io{\sf Heuristic}\NTIME[2^{n^{\eps}}]/n^{\eps}$ be arbitrary, and let $L' \in \bigcap_{\eps > 0} \NTIME[2^{n^{\eps}}]/n^{\eps}$ be such that $(L \cap \{0,1\}^n) \Delta (L' \cap \{0,1\}^n) \leq 2^n/n$ on infinitely many $n$. This means that, for any $\eps$, $L'$ can be solved using a collection of nondeterministic machines $\{M_n\}$ running in $2^{n^{\eps}}$ time such that $M_n$ solves all instances on $n$ bits and the description of $M_n$ can be encoded in $O(n^{\eps})$ bits. To get a collection of equivalent deterministic machines, let $M_n$ be the advice for inputs of length $n$; on any input $x$ of length $n$, call the $2^{O(n^c)}$ time algorithm for {\sc SuccinctHalting} on the input $\langle M_n,x,b(2^{n^{\eps}})\rangle$, where $b(m)$ is the binary encoding of $m$. Using standard encodings, this instance has $n+O(n^{\eps})$ length, hence it is solved deterministically in $2^{O(n^c)}$ time.

Finally, we prove that the above inclusion \eqref{EXPcontain} is false, by direct diagonalization. That is, we can find an $L \in \EXP$ such that $L \not\in \io{\sf Heuristic}\TIME[2^{O(n^c)}]/n^{1/2}$. Let $\{M_i\}$ be a list of all $2^{n^c}$ time machines. We will give  a $2^{n^{c+1}}$-time $M$ diagonalizing (even heuristically) against all $\{M_i\}$ with $n^{1/2}$ advice. For every $n$, $M$ divides up its $n$-bit inputs into blocks of length $B = 1+n^{1/2}+\log n$, with $2^n/B$ blocks in total. On input $x$ of length $n$, $M$ identifies the block containing $x$, letting $x_1,\ldots,x_{B}$ be the strings in the that block. Let $\{a_j\}$ be the set of all possible advice strings of length $n^{1/2}$. The following loop is performed:

Let $S_0 = \{(j,k)~|~j=1,\ldots,n,~k=1,\ldots,2^{n^{1/2}}\}$.
For $i=1,\ldots,B$, decide that $M$ accepts $x_i$ iff the majority of $M_j(x_i,a_k)$ reject over all $(j,k)\in S_{i-1}$. Set $S_i$ to be the subset of $S_{i-1}$ containing those $(M_j,a_k)$ which agree with $M$ on $x_i$. If $x_i=x$ then output the decision.

Observe that $M$ runs in $B \cdot n \cdot 2^{O(n^c)} \leq O(2^{n^{c+1}})$ time. For every block and every $i$, we have $|S_i| \leq |S_{i-1}|/2$. Since $|S_0|=2^{n^{1/2}}\cdot n$, this implies that $|S_{B}|=0$. So for every block, every pair $(M_j,a_k)$ disagrees with $M$ on at least one input. Therefore every pair $(M_j,a_k)$ disagrees with $M$ on at least $2^n/B > 2^n/n$ inputs, one from each block, and this happens for almost all input lengths $n$. Summing up, for almost every $n$ we have that $M$ disagrees with every $M_i$ and its $n^{1/2}$ bits of advice, on greater than a $1/n$ fraction of $n$-bit inputs. That is, $L(M) \in \EXP$ but $L(M) \not\in \io{\sf Heuristic}\TIME[2^{O(n^c)}]/n^{1/2}$.
\end{proofof}

\begin{remark} An anonymous reviewer observed that the above proof, very slightly modified, also shows that $\NEXP \neq \BPP$ if and only if for all $\eps > 0$, $\BPP \subseteq \io{\sf Heuristic}\NTIME[2^{n^{\eps}}]/n^{\eps}$. That is, separating $\NEXP$ from $\BPP$ is \emph{equivalent} to obtaining a nontrivial simulation of $\BPP$ with nondeterminism.
\end{remark}

\section{Unconditional Derandomization of Natural Properties}
\label{RPnatural}

In this last section, we show how one can use similar ideas to generically ``derandomize'' natural properties, in the sense that $\RP$-natural properties entail $\P$-natural ones. The formal claim is:

\begin{reminder}{Theorem~\ref{RPnaturalprop}} If there exists a $\RP$-natural property $P$ useful against a class ${\cal C}$, then there exists a  $\P$-natural property $P'$ useful against ${\cal C}$.
\end{reminder}

That is, suppose there is a randomized algorithm that can distinguish hard functions from easy functions with one-sided error---the algorithm may err on some hard functions, but never on any easy functions. Then we can obtain a deterministic algorithm with essentially the same functionality. The idea behind $P'$ is directly inspired by other arguments in the paper (such as the proof of Theorem~\ref{naturalequiv}): we split the input string $T$ into small substrings, and feed the substrings as inputs to $P$ while the whole input string $T$ is used as randomness to $P$. 

\begin{proof} Suppose $A$ is a randomized polytime algorithm taking $n$ bits of input and $n^{k-2}$ bits of randomness (for some $k \geq 3$), deciding a large and useful property against $n^c$-size circuits for every $c$. For concreteness, let us say that $A$ accepts some $1/n^b$-fraction of $n$-bit inputs with probability at least $2/3$, and rejects all $n$-bit truth tables of $(\log n)^c$-size circuits, where $b \geq k$ (making $b$ larger is only a weaker guarantee). Standard amplification techniques show that, by increasing the randomness from $n^{k-2}$ to $n^k$, we can boost the success probability of $A$ to greater than $1-1/4^n$. 

Our deterministic algorithm $A'$ will, on $n$-bit input $T$, partition $T$ into substrings $T_1,\ldots,T_{n^{1-1/k}}$ of length at most $n^{1/k}$ each, and \emph{accept} if and only if $A(T_i,T)$ accepts for some $i$.

First, we show that $A'$ satisfies largeness. Consider the set $R$ of $n$-bit strings $T$ such that for all $n^{1/k}$-bit strings $x$, $A(x,T)$ accepts if and only if $A(x,T')$ accepts for some $n$-bit $T'$. As there are only $2^{n^{1/k}}$ strings on $n^{1/k}$ bits, and the probability that a random $n$-bit $T$ works for a given $n^{1/k}$-bit string is at least $1-1/4^{n^{1/k}}$, we have (by a union bound) that $|R| \geq 2^n\cdot(1-2^{n^{1/k}}/4^{n^{1/k}}) \geq 2^n\cdot(1-1/2^{n^{1/k}})$.

Now consider the set $S$ of all $n$-bit strings $T = T_1\cdots T_{n^{1-1/k}}$ (where for all $i$, $|T_i|=n^{1/k}$) such that $A(T_i,T')$ accepts for some $i$ and some $n$-bit $T'$. Since there are at least $t=2^{n^{1/k}}/n^{b/k}$ such strings $T_i$ of length $n^{1/k}$ (by largeness of $A$), the cardinality of $S$ is at least \[ n^{1-1/k}\cdot t\cdot\left(2^{n^{1/k}}-t\right)^{n^{1-1/k}-1} = n^{1-1/k}\cdot \frac{2^{n^{1/k}}}{n^{b/k}}\cdot\left(2^{n-n^{1/k}}\right)\cdot\left(1-1/n^{b/k}\right)^{n^{1-1/k}-1},\] as this expression just counts the number of strings $T$ with exactly one $T_i$ from the $t$ strings accepted by $A$. Since $b \geq k$, $(1-1/n^{b/k})^{n^{1-1/k}-1} \geq 1/e$, and the above expression simplifies to $\Omega(2^n/n^{1/k-1+b/k})$. Therefore, there is a constant $e = b/k+1/k-1$ such that $|S| \geq \Omega(2^n/n^e)$.

Observe that, if $T \in S \cap R$, then $A(T_i,T)$ accepts for some $i$ (where $T_i$ is defined as above). 
Applying the inequality $|S \cap R| \geq |S|+|R|-2^n$, there are at least $2^n(1/n^e - 1/2^{n^{1/k}})$ strings such that $A(T_i,T)$ accepts for some $i$. This is at least $2^n/n^{e+1}$ for sufficiently large $n$, so $A'$ satisfies largeness.

Second, we show that $A'$ is useful. Suppose for a contradiction that $A'(T)$ accepts for some $T$ with $(\log |T|)^c$ size circuits, where $c$ is an arbitrarily large (but fixed) constant. Then $A(T_i,T)$ must accept for some $i$. Because $A$ is useful against $n^d$-size circuits for all $d$, it must be that $T_i$ cannot have $(\log |T_i|)^{c+1}$ size circuits. However, recall that if a string $T$ has $(\log |T|)^c$ size circuits, then by Lemma~\ref{truthtable}, every $|T|^{1/k}$-length substring $T_i$ of $T$ has circuit complexity at most $(\log|T|)^c + (\log|T|)^{1+o(1)} \leq 2\cdot (k \cdot \log |T_i|)^{c}$. As $k$ is a fixed constant, this quantity is less than $(\log |T_i|)^{c+1}$ when $|T_i|$ is sufficiently large, a contradiction.
\end{proof}

\section{Conclusion}

Ketan Mulmuley has recently suggested that ``$\P \neq \NP$ because $\P$ is big, not because $\P$ is small''~\cite{Ketan11}. That is to say, the power of efficient computation is the true reason we can prove lower bounds. The equivalence in Theorems~\ref{equiv} and \ref{equiv2} between $\NEXP$ lower bounds and constructive useful properties can be viewed as one rigorous formalization of this intuition. We conclude with some open questions of interest.

\smallskip

$\bullet$ \emph{Do $\NEXP$ problems have witnesses that are {\bf average-case hard} for $\ACC$?} More precisely, are there $\NEXP$ predicates with the property that, for almost all valid witnesses of length $2^{O(n)}$, their corresponding Boolean functions on $O(n)$ variables are such that that no $\ACC$ circuit of polynomial size agrees with these functions on $1/2+1/\poly(n)$ of the inputs? Such predicates could be used to yield unconditional derandomized simulations of $\ACC$ circuits (using nondeterminism). The primary technical impediment seems to be that we do not think $\ACC$ can compute the Majority function, which appears to be necessary for hardness amplification (see~\cite{Shaltiel-Viola10}). But this should make it \emph{easier} to prove lower bounds against $\ACC$, not harder!

\smallskip

$\bullet$ \emph{Equivalences for non-uniform natural properties?} In this paper, we have mainly studied natural properties decidable by algorithms with $\log n$ bits of advice or less; however, the more general notion of $\P/\poly$-natural proofs has also been considered. Are there reasonable equivalences that can be derived between the existence of such properties, and lower bounds? 

\smallskip

$\bullet$ \emph{What algorithms follow from stronger lower bound assumptions?} There is an interesting tension between the assumptions ``$\NEXP \not\subset \P/\poly$'' and ``integer factorization is not in subexponential time.'' The first asserts nontrivial efficient algorithms for recognizing some hard Boolean functions (as seen in Theorems~\ref{equiv} and \ref{equiv2}); the second denies efficient algorithms for recognizing a non-negligible fraction of hard Boolean functions~\cite{Kabanets-Cai00,AllenderBKMR06}. An equivalence involving $\NP \not\subset \P/\poly$ could yield more powerful algorithms for recognizing hardness. In recent work addressing this problem, Brynmor Chapman and the author~\cite{DBLP:conf/innovations/ChapmanW15} prove that $\NP \not\subset \P/\poly$ is equivalent to the existence of natural properties which are true of SAT but are useful against all polynomial-size ``SAT-solving'' circuits.

\section{Acknowledgments} I thank Amir Abboud, Russell Impagliazzo and Igor Carboni Oliveira, Steven Rudich, Rahul Santhanam, and the anonymous reviewers for useful comments and discussions. I also thank Emanuele Viola for a pointer to his paper with Eric Miles.

\bibliographystyle{alpha}
\bibliography{papers}

\end{document}